\newtheorem{theorem}{Theorem}[section]
\newtheorem{lemma}[theorem]{Lemma}
\theoremstyle{definition}
\newtheorem{remark}[theorem]{Remark}
\newtheorem{proposition}[theorem]{Proposition}
\newtheorem{corollary}[theorem]{Corollary}
\renewcommand{\epsilon}{\varepsilon}
\newcommand{\E}{\mathcal{E}}
\renewcommand{\phi}{\varphi}
\newcommand{\R}{\mathbb{R}}
\newcommand{\be}{\begin{equation}}
\newcommand{\uR}{\underline{R}}
\newcommand{\uRo}{\underline{R_0}}
\newcommand{\uZ}{\underline{Z}}
\newcommand{\1}{{\ensuremath {\mathds 1} }}
\newcommand\tr{\mathop{\mathrm{tr}}\nolimits} 
\numberwithin{equation}{section}  
\begin{document}

\title[Potential energy surfaces for Kohn-Sham models]{Born-Oppenheimer potential energy surfaces for Kohn-Sham models in the local density approximation}

\author{Yukimi Goto}
\address{RIKEN iTHEMS, Wako, Saitama 351-0198, Japan}
\email{\tt yukimi.goto@riken.jp}

\begin{abstract}
We show that the Born-Oppenheimer potential energy surface in Kohn-Sham theory behaves like the corresponding one in Thomas-Fermi theory up to $o(R^{-7})$ for small nuclear separation $R$.
We also prove that if a minimizing configuration exists, then the minimal distance of nuclei is larger than some constant which is independent of the nuclear charges.
\end{abstract}

\maketitle

\section{Introduction}
We consider a molecule with $N>0$ electrons and $K$ static nuclei at $R_1, \dots, R_K$ of charges $z_1, \dots, z_K >0$.
Density Functional Theory (DFT)~\cite{Levy, LiebDF} tells us that the ground state energy is given by the minimization problem
\begin{align*}
E^\mathrm{GS}_{V_{\uR}} (N) &\coloneqq \inf \left\{ F_\mathrm{LL} (\rho) - \int_{\R^{3}} V_{\uR}(x) \rho (x) \, dx \colon \sqrt \rho \in H^1(\R^3), \int_{\R^3} \rho = N\right\}, \\
& V_{\uR}(x) \coloneqq \sum_{j=1}^K \frac{z_j}{|x-R_j|}, \quad \uR = (R_1, \dots, R_K) \in \R^{3K}.
\end{align*}
Here  $F_\mathrm{LL} (\rho)$ is the Levy-Lieb functional defined by
\begin{align*}
F_\mathrm{LL} (\rho) &\coloneqq  \inf_{\substack{\psi \in \bigwedge^N L^2(\R^3) \\ \|\psi\|_{L^2}=1 \\ \rho_\psi = \rho}}   \left\{\frac{1}{2} \sum_{j=1}^N \int_{\R^{3N}} |\nabla_j \psi(\underline X)|^2 \, d\underline X + \sum_{1 \le i < j \le N}\int_{\R^{3N}} \frac{|\psi(\underline X)|^2}{|x_i-x_j|}\, d\underline X \right\}, \\
 \rho_\psi(x) &\coloneqq N\int_{\R^{3(N-1)}} |\psi(x, x_2, \dots, x_N)|^2 \, dx_2 \cdots dx_N , \quad \underline X = (x_1, \dots, x_N) \in \R^{3N},
\end{align*}
where $\bigwedge^N L^2(\R^3)$ denotes the $N$-particle space of antisymmetric wave functions.
Although DFT gives the exact lowest energy, we usually need suitable approximations.
The Local Density Approximation (LDA) refers to an approximation such as
\[
F_\mathrm{LL} (\rho) \approx \underbrace{\frac{1}{2}\iint_{\R^3 \times \R^3} \frac{\rho(x) \rho(y)}{|x-y|} \, dxdy}_{\eqqcolon D(\rho)} + \underbrace{\int_{\R^3} f(\rho(x)) \, dx}_{\text{local term}}.
\]
For instance, one can obtain the Thomas-Fermi (TF) functional with $f(t) = 3/10(3\pi^2)^{2/3}t^{5/3}$.
More precisely, for $V\colon \R^3 \to \R$ with $V \in L^{5/2} + L^\infty$ and $\rho \ge 0$ we define
\[
\E^\mathrm{TF}_V(\rho) \coloneqq \frac{3}{10}(3\pi^2)^{2/3} \int_{\R^3} \rho(x)^{5/3} \, dx - \int_{\R^3} \rho(x) V(x) \, dx + D(\rho),
\]
and its energy is
\[
E_V^\mathrm{TF} (n) \coloneqq \inf \left\{ \E^\mathrm{TF}_V(\rho) \colon \rho \ge 0,  \int_{\R^3} \rho \le n\right\}.
\]
It is well-known that the unique minimizer $\rho_V^\mathrm{TF}$ exists for any $n>0$ (see, e.g.,~\cite{LiebSimon1977, LiebTF}).
We note that the Levy-Lieb functional includes the kinetic energy and electron-electron repulsive interaction, and TF theory neglects the exchange-correlation energy.
On the other hand, the kinetic energy can be written by $\tr [(-\Delta/2)\gamma]$ with a density-matrix $\gamma \in \mathcal{DM}$ having $\tr \gamma = N$, where
\[
\mathcal{DM} \coloneqq \left\{ \gamma \colon 0 \le \gamma \le 1, \gamma = \gamma^\dagger,  \tr (-\Delta \gamma) < \infty\right\}. 
\]
For a trace operator $\gamma$, its density is $\rho_\gamma(x) \coloneqq \gamma (x, x)$ with  Hilbert-Schmidt kernel $\gamma (x, y) = \sum_{j \ge 1} \lambda_j \phi_j(x) \phi_j(y)^\ast$, where $\gamma \phi_j = \lambda_j \phi_j$.
The (extended) Kohn-Sham model is given by
\begin{align*}
\E_V^\mathrm{KS}(\gamma) &\coloneqq \tr \left[\left(-\frac{1}{2} \Delta - V \right)\gamma\right]  + D(\rho_\gamma) -E_\mathrm{xc}(\rho_\gamma),  \\
E_V^\mathrm{KS}(n) &\coloneqq \inf \left\{ \E_V(\gamma) \colon \gamma \in \mathcal{DM}, \tr \gamma = n \right\},
\end{align*}
where $E_\mathrm{xc}$ is the exchange-correlation energy of the form
\begin{align*}
-E_\mathrm{xc}(\rho) \coloneqq \min_{\substack{\rho = \sum_j \lambda_j \rho_j \\ \sum_j \lambda_j = 1 \\ \sqrt{\rho_j} \in H^1(\R^3) \\ \int_{\R^3} \rho_j = n}} \sum_j \lambda_j F_\mathrm{LL}(\rho_j) - \inf_{\substack{\gamma \in \mathcal{DM} \\ \rho_\gamma = \rho \\ \tr \gamma =n}} \tr\left(-\frac{\Delta}{2} \gamma\right) -D(\rho).
\end{align*}
Then the Kohn-Sham energy is exact, i.e., $E_{V_{\uR}}^\mathrm{GS}(n) = E_{V_{\uR}}^\mathrm{KS}(n)$.
We use an approximate $E_\mathrm{xc}$ called the LDA exchange-correlation functional as
\begin{equation}
\label{def.LDA}
E_\mathrm{xc}(\rho) \approx E_\mathrm{xc}^\mathrm{LDA}(\rho) \coloneqq  \int_{\R^3} g(\rho(x)) \, dx,
\end{equation}
and introduce the Kohn-Sham LDA model
\begin{align*}
\E_V(\gamma) &\coloneqq \tr \left[\left(-\frac{1}{2} \Delta - V \right)\gamma\right]  + D(\rho_\gamma) -E_\mathrm{xc}^\mathrm{LDA}(\rho_\gamma),  \\
E_V(n) &\coloneqq \inf \left\{ \E_V(\gamma) \colon \gamma \in \mathcal{DM}, \tr \gamma = n \right\}.
\end{align*}
The following assumptions will be needed throughout the paper.
In (\ref{def.LDA}), the function $g \colon \R_+ \to \R_+$ is twice differentiable and satisfies
\begin{equation}
\begin{split}
\label{def.condition}
g(0) &= 0, \\
g' &\ge 0, \\
\exists 0 < \beta_- \le \beta_+ \le \frac{2}{5}& \quad \sup_{t \in \R_+}\frac{|g'(t)|}{t^{\beta_-} + t^{\beta_+}} < \infty, \\
\exists 1 \le \alpha < \frac{3}{2} &\quad \limsup_{t \to 0+} \frac{g(t)}{t^\alpha} >0.
\end{split}
\end{equation}
For instance, the LDA exchange functional $g^\mathrm{LDA}(\rho) = (3/4) (3/\pi)^{1/3} \rho^{4/3}$ satisfies (\ref{def.condition}).

Mathematically, the choice $g_\mathrm{LO} (\rho) = 1.45 \rho^{4/3}$ gives a lower bound of $E_\mathrm{xc}(\rho)$~\cite{LewinLieb}, and it has been shown in~\cite{LewinLiebSeiringer} that a quantitative estimate exists between the grand canonical Levy-Lieb energy and the energy of the uniform electron gas, $\int g_\mathrm{UEG} (\rho(x))) \, dx$, containing the kinetic and exchange-correlation energy.
The function $g_\mathrm{UEG}$ behaves like $g_\mathrm{UEG} \sim c_1\rho^{5/3} -c_2\rho^{4/3}$, where the first term can be interpreted as the kinetic energy.
Thus the conditions (\ref{def.condition}) are not so restrictive.

Under the conditions, it has been shown in~\cite{KS} that the Kohn-Sham energy $E_{V_{\uR}} (N)$ has a minimizer (ground state) $\gamma_0$ if $N \le Z \coloneqq \sum_{j=1}^K z_j$.

In this paper, we will investigate the behavior of the potential energy surface at short internuclear distance $R_\mathrm{min} \coloneqq \min_{i \neq j} |R_i - R_j| \to 0$.
Let $U_{\uR} \coloneqq \sum_{i < j} z_iz_j|R_i-R_j|^{-1}$ be the nucleus-nucleus interaction.
Then the Born-Oppenheimer potential energy surfaces are defined as
\begin{align*}
D^\mathrm{TF} (\uZ, \uR) &\coloneqq E^\mathrm{TF}_{V_{\uR}}(Z) - \sum_{j=1}^K E^\mathrm{TF}_{z_j/|x-R_j|}(z_j) + U_{\uR}, \\
D (\uZ, \uR) &\coloneqq E_{V_{\uR}}(Z) - \sum_{j=1}^K E_{z_j/|x-R_j|}(z_j) + U_{\uR}.
\end{align*}

In fact, the atomic energies $E^\mathrm{TF}_{z_j/|x-R_j|}(z_j)$ and $E_{z_j/|x-R_j|}(z_j)$ are independent of the nuclear position $R_j$ since translation invariance of the functionals, and thus their ground state densities are obtained by the translation of the densities, $\rho_{z_j}^\mathrm{TF}$ and $\rho_{z_j}$, for $E^\mathrm{TF}_{z_j/|x|}(z_j)$ and $E_{z_j/|x|}(z_j)$.
In~\cite{BrezisLiebTF}, Brezis and Lieb showed that  $\lim_{l \to \infty } D^\mathrm{TF}(l^3\uZ,  \uR) = \lim_{l \to \infty } l^7D^\mathrm{TF}(\uZ,  l\uR)  \eqqcolon \Gamma(\uR) >0$ for a certain $\Gamma(\uR)$ which is independent of all $z_j$.
Although~\cite{BrezisLiebTF} proved that  $\Gamma(\uR) = D^\mathrm{TF}_\infty R^{-7}$ for two atoms separated by $R=|R_1 -R_2|$, the exact value $D^\mathrm{TF}_\infty$ is not known.
Recently, Solovej has conjectured in~\cite{SolovejTF} that for homonuclear ($z_1 = z_2 = z/2$) diatomic molecules
\[
\limsup_{z \to \infty} \left|D^\mathrm{GS}(\uZ, \uR) - R^{-7}D^\mathrm{TF}_\infty \right| = o(R^{-7}), \quad \text{as } R \to 0,
\]
where $D^\mathrm{GS}(\uZ, \uR)$ stands for the Born-Oppenheimer potential energy surface of the ground state energy $E^\mathrm{GS}_{V_{\uR}} (Z)$.
Results on the opposite regime $R \to \infty$ are also known: for neutral atoms in the quantum theory, van der Waals interaction law $D^\mathrm{GS}(\uZ, \uR) \approx -R^{-6}$ exists for large separation $R$~\cite{LT1986, Anapolitanos, AnapolitanosSigal}.
Furthermore, if the influence of retardation  effects is taken into account, then the long-range interaction becomes $-R^{-7}$~\cite{CasimirPolder}.

In reduced Hartree-Fock (rHF) theory, which is obtained by neglecting the exchange-correlation $E_\mathrm{xc}$ in the Kohn-Sham functional, Solovej's conjecture is settled by Samojlow in his Ph.D thesis~\cite{Samojlow}.

Our main result is a generalization of Samojlow's result to the case of $K \ge 2$ nuclei with the LDA exchange-correlation.

\begin{theorem}
\label{thm.main}
Let $z_\mathrm{max} = \max_{1\le i \le K} z_i$ and $z_\mathrm{min} = \min_{1\le i \le K} z_i$.
If $z_\mathrm{min} \ge 1$ and $z_\mathrm{min} \ge \delta_0 z_\mathrm{max}$ for some $\delta_0 >0$, then there exists $\epsilon >0$ such that for any $R_\mathrm{min} \in (0, 4]$ 
\[
\left|  D(\uZ, \uR) - D^\mathrm{TF}(\uZ, \uR) \right| \le C R_\mathrm{min}^{-7 + \epsilon}.
\]
Moreover, it follows that as $R_\mathrm{min} \to 0$
\begin{equation}
\label{thm.BO-curve}
\limsup_{\substack{z_\mathrm{min} \ge \delta_0  z_\mathrm{max}\\ z_\mathrm{min} \to \infty}} \left| D(\uZ, \uR) - \Gamma(\uR) \right| = o(R_\mathrm{min}^{-7}).
\end{equation}
\end{theorem}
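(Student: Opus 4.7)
My plan follows the strategy of Brezis--Lieb, Solovej, and Samojlow. Decompose
\[
D(\uZ,\uR) - D^\mathrm{TF}(\uZ,\uR) = \bigl(E_{V_{\uR}}(Z) - E^\mathrm{TF}_{V_{\uR}}(Z)\bigr) - \sum_{j=1}^K \bigl(E_{z_j/|x-R_j|}(z_j) - E^\mathrm{TF}_{z_j/|x-R_j|}(z_j)\bigr),
\]
so it suffices to compare the KS-LDA and TF energies of the molecule and of each isolated atom. The small-$R_\mathrm{min}$ regime is converted into a large-charge regime by the Thomas--Fermi scaling identity $D^\mathrm{TF}(l^{3}\uZ,l^{-1}\uR) = l^{7} D^\mathrm{TF}(\uZ,\uR)$: choosing $l=R_\mathrm{min}^{-1}$ shows that bounding each of the energy differences on the right-hand side by $O(\tilde Z^{7/3-\delta})$ in the rescaled problem (effective charges $\tilde z_j\sim R_\mathrm{min}^{-3}$ at distances of order one) yields the desired bound $O(R_\mathrm{min}^{-7+\epsilon})$ with $\epsilon = 3\delta$.

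For the upper bound on $E_{V_{\uR}}(Z)$ I would construct a trial density matrix $\gamma$ whose density approximates $\rho^\mathrm{TF}_{V_{\uR}}$ in the TF region and reproduces the hydrogenic Scott structure near each nucleus. Concretely, this is a coherent-state (Husimi) construction away from the nuclei, augmented inside a ball of radius $\sim \tilde z_j^{-1}$ around each $R_j$ by the projector onto the lowest eigenstates of the bare Coulomb problem $-\tfrac12\Delta - z_j/|x-R_j|$; since $\tilde z_j^{-1}$ is much smaller than the rescaled internuclear distance of order one, these local modifications are mutually disjoint. Standard semiclassical computations then match the kinetic, direct, and external terms to $E^\mathrm{TF}_{V_{\uR}}(Z)$ plus a Scott-type contribution proportional to $\sum_j z_j^2$, while the LDA term is controlled via~(\ref{def.condition}): the growth bound $g(t)\le C(t^{1+\beta_-}+t^{1+\beta_+})$ with $1+\beta_+\le 7/5$ combined with Lieb--Simon scaling yields $|\int g(\rho_\gamma)| = O(Z^{2\beta_+ + 1}) = O(Z^{9/5})$, strictly lower than the leading order $Z^{7/3}$. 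A parallel construction provides the matching atomic upper bounds with the same Scott coefficient, so these Scott contributions cancel exactly in the decomposition above.

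For the lower bound I would follow the Solovej--Siedentop scheme: partition $\R^3$ into Voronoi-type regions around the $R_j$ inside which the external potential is comparable to a hydrogenic one, apply a localized Lieb--Thirring inequality to produce a semiclassical TF-like bound plus the local Scott contribution, and handle the direct Coulomb term by standard electrostatic inequalities in the spirit of Lieb--Oxford and Solovej. The negative LDA contribution is absorbed through the Lieb--Oxford bound combined with~(\ref{def.condition}), again producing only an $O(Z^{9/5})$ error. The same estimates applied to each isolated atom complete the matching lower bound. The principal obstacle will be propagating Solovej's atomic Scott asymptotics uniformly in the rescaled internuclear geometry and controlling the nonconvexity of $g$ in the overlap of neighboring TF regions; the hypothesis $z_\mathrm{min}\ge\delta_0 z_\mathrm{max}$ is essential here, ensuring comparability of the local TF length scales across different nuclei. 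Finally,~\eqref{thm.BO-curve} follows from the quantitative estimate combined with Brezis--Lieb's limit: as $z_\mathrm{min}\to\infty$ with $z_\mathrm{min}\ge \delta_0 z_\mathrm{max}$, $D^\mathrm{TF}(\uZ,\uR)\to\Gamma(\uR)$, which together with $|D - D^\mathrm{TF}| \le C R_\mathrm{min}^{-7+\epsilon}$ yields $|D(\uZ,\uR) - \Gamma(\uR)| = o(R_\mathrm{min}^{-7})$ as $R_\mathrm{min}\to 0$.
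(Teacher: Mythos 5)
Your decomposition of $D - D^{\mathrm{TF}}$ into a molecular energy difference minus a sum of atomic energy differences is algebraically correct, and the final step deriving (\ref{thm.BO-curve}) from the quantitative estimate together with Brezis--Lieb is also what the paper does. However, the core of your plan does not go through, and it is not the route the paper takes.

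The crucial obstruction is the required uniformity in $\uZ$. The theorem asserts $|D(\uZ,\uR) - D^{\mathrm{TF}}(\uZ,\uR)| \le C R_\mathrm{min}^{-7+\epsilon}$ with $C$ \emph{independent of the nuclear charges}, so for fixed $R_\mathrm{min}$ (say $R_\mathrm{min}=1$) the difference must stay bounded as $z_\mathrm{min}\to\infty$. Your scheme of matching and cancelling Scott-type $\tfrac12 z_j^2$ contributions and bounding the remainders by $O(Z^{2-\delta})$ (or $O(Z^{5/3})$ for the Dirac/exchange terms) produces residual errors that still \emph{diverge} with $Z$ at fixed $R_\mathrm{min}$; bounding the molecular and atomic errors term by term therefore cannot close the argument. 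Moreover, the TF scaling identity $D^{\mathrm{TF}}(l^{3}\uZ,l^{-1}\uR)=l^{7}D^{\mathrm{TF}}(\uZ,\uR)$ has no analogue for the Kohn--Sham energy $E_{V_{\uR}}$ (quantum mechanics fixes a length scale), so one cannot ``rescale to effective charges $\tilde z_j\sim R_\mathrm{min}^{-3}$'' and then apply a Scott-type asymptotic at separation $1$ as you propose.

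What the paper actually proves is that \emph{both} $D$ and $D^{\mathrm{TF}}$ agree, up to $O(r^{-7+\epsilon})$, with the \emph{same} universal exterior TF quantity $\E_r^{\mathrm{TF}}(\rho_r^{\mathrm{TF}}) - \sum_j E_{V_j}^{\mathrm{TF}}(N_j)$ (Lemmas~\ref{lem.TF-Born} and~\ref{lem.LDA-Born}), so the $Z$-dependent interior contributions never need to be estimated individually --- they are subtracted off wholesale. The technical engine making this possible is the screened-potential estimate of Lemma~\ref{lem.potential}, proved by Solovej's bootstrap: the initial step (Lemma~\ref{thm.initial}, a coherent-state/Lieb--Thirring semiclassical bound valid for $r\lesssim Z^{-1/3}$) plus the iterative step (Lemma~\ref{thm.iterative}, based on Sommerfeld asymptotics of the exterior TF problem and the outer-TF minimizer $\rho_r^{\mathrm{TF}}$) produce the universal bound $\sup_{\partial A_r}|\Phi_r - \Phi_r^{\mathrm{TF}}| \le C r^{-4+\epsilon_1}$ for $r\le (R_\mathrm{min}/4)^{1+\delta_1}$. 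This, together with the IMS localization in Lemma~\ref{lem.rhf} and the exterior semiclassical comparison in Lemma~\ref{step4}, is what your proposal is missing. The paper also treats the complementary regime $R_\mathrm{min} \lesssim z_\mathrm{min}^{-1/3+\alpha}$ separately, where the crude semiclassical bound (\ref{eq.semicl}) \emph{does} suffice term by term because there $Z\lesssim R_\mathrm{min}^{-3}$; you did not single out this case, and your argument is only viable in that regime. Finally, note that the paper never invokes or proves a Scott correction for the Kohn--Sham LDA model; no such result is currently available (only semiclassics with error $Z^{25/11}$ is established in this paper), and your trial-state argument would effectively require proving one.
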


\begin{corollary}
\label{cor.main}
We assume that there is a constant $\delta_0$ such that $z_\mathrm{min} \ge \delta_0 z_\mathrm{max}$.
If there exists $\uRo = (R_0^{(1)}, \dots, R_0^{(K)})$ such that $\inf_{\uR} (E_{V_{\uR}}(Z) + U_{\uR}) = E_{V_{\uRo}}(Z) + U_{\uRo}$, then
\begin{equation}
R_\mathrm{M} \coloneqq \min_{i \neq j}|R_0^{(i)} - R_0^{(j)}| \ge C_0,
\end{equation}
for some constant $C_0 > 0$ independently of the nuclear charges. 
\end{corollary}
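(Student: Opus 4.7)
The strategy is to contrast an upper bound $D(\uZ, \uRo) \le 0$, produced by a dissociation trial configuration, with a lower bound on $D(\uZ, \uRo)$ derived from Theorem~\ref{thm.main} and the Brezis--Lieb asymptotics of $D^\mathrm{TF}$; the contradiction obtained for small $R_M$ will yield the desired $C_0$.

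Since each atomic energy $E_{z_j/|x-R_j|}(z_j)$ is translation invariant, minimizing $\uR \mapsto E_{V_\uR}(Z) + U_\uR$ is equivalent to minimizing $\uR \mapsto D(\uZ, \uR)$. A trial state built from a direct sum of translated atomic Kohn--Sham LDA minimizers, sent to configurations with all pairwise distances tending to infinity, shows $D(\uZ, \uR) \to 0$: the Coulomb cross-terms cancel against $U_\uR$ by neutrality of each atomic density, and the LDA exchange--correlation contribution is additive in the disjoint-support limit. Hence $D(\uZ, \uRo) \le 0$. Assume now $R_M \le 4$ (otherwise $C_0 = 4$ suffices). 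Theorem~\ref{thm.main} then gives
\[
D(\uZ, \uRo) \ge D^\mathrm{TF}(\uZ, \uRo) - C R_M^{-7+\epsilon},
\]
and combining with a uniform lower bound $D^\mathrm{TF}(\uZ, \uRo) \ge c R_M^{-7}$, with $c > 0$ depending only on $K$ and $\delta_0$, yields $0 \ge (c - C R_M^\epsilon) R_M^{-7}$. This forces $R_M \ge (c/C)^{1/\epsilon} =: C_0 > 0$.

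The main obstacle is establishing the uniform TF lower bound $D^\mathrm{TF}(\uZ, \uRo) \ge c R_M^{-7}$. I would derive it from the Thomas--Fermi scaling identity $D^\mathrm{TF}(\uZ, R_M \hat\uR) = R_M^{-7} D^\mathrm{TF}(R_M^3 \uZ, \hat\uR)$, with $\hat\uR$ normalized to $\hat R_\mathrm{min} = 1$, combined with the Brezis--Lieb convergence $\lim_{l\to\infty} l^7 D^\mathrm{TF}(\uZ, l\uR) = \Gamma(\uR) > 0$ and Teller's no-binding theorem. The hypothesis $z_\mathrm{min} \ge \delta_0 z_\mathrm{max}$ ensures that $\Gamma(\hat\uR)$ is bounded below by a positive constant uniformly over the admissible charge ratios and over the compact family of normalized $K$-atom shapes; making this uniformity quantitative, and reconciling the large-$l$ regime of Brezis--Lieb with the small effective charge $R_M^3\uZ$ appearing in the scaling identity, is the technical heart of the argument.
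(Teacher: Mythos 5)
Your plan follows the paper's proof quite closely: the paper first proves a dissociation/binding lemma (Lemma~4.3) giving $E_\mathrm{mol}(\uZ)\le\sum_j E_{z_j/|x|}(z_j)$, hence $D(\uZ,\uRo)\le 0$; then applies Theorem~\ref{thm.main} to get $D(\uZ,\uRo)\ge D^\mathrm{TF}(\uZ,\uRo)-CR_\mathrm{M}^{-7+\epsilon}$; and closes with a lower bound $D^\mathrm{TF}(\uZ,\uRo)\ge C_4 R_\mathrm{M}^{-7}$. Your three steps coincide with this.

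The one step you flag as the ``technical heart'' --- making the TF lower bound work in the regime where the effective charge $R_\mathrm{M}^3\uZ$ might be small --- is exactly where the paper inserts an elementary intermediate estimate that you are missing. Before touching Brezis--Lieb, the paper observes that the binding inequality $E_{V_{\uRo}}(Z)+U_{\uRo}\le 0$ together with $E_{V_{\uR}}(Z)\ge -CZ^{7/3}$ (from the initial-step semiclassical bound) and $U_{\uRo}\ge c\,Z^2/R_\mathrm{M}$ (valid under $z_\mathrm{min}\ge\delta_0 z_\mathrm{max}$) already forces $R_\mathrm{M}\ge c'Z^{-1/3}$. This is precisely what guarantees that the rescaled charges $R_\mathrm{M}^3 z_j$ stay bounded below, so the Brezis--Lieb/Teller lower bound $D^\mathrm{TF}(\uZ,\uRo)\ge C_4 R_\mathrm{M}^{-7}$ applies; the paper simply quotes this bound from the proof of Ruskai--Solovej, Theorem~8, rather than re-deriving it from the scaling identity. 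If you add that a priori bound $R_\mathrm{M}\gtrsim Z^{-1/3}$ and then cite Ruskai--Solovej (or carry out your Brezis--Lieb argument only on the compact charge range this bound gives you), your proof closes. One small housekeeping point your sketch glosses over: Theorem~\ref{thm.main} is stated with the hypothesis $z_\mathrm{min}\ge 1$, which the paper silently invokes as ``without loss of generality'' in this step, so you should either verify that reduction or state the additional hypothesis.
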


In~\cite{CattoLions1, CattoLions2, CattoLions3, CattoLions4}, the existence of optimal $\uR$ for TF-type models was settled, but we have not been able to prove that  for our model.
Hence, although we believe it is true, the existence of such a configuration remains open.

\begin{remark}
The assumptions (\ref{def.condition}) might be slightly loosened, since for a more general function $g \colon \R_+ \to \R$ (see~\cite[Eq.~(25)--(28)]{KS}) the Kohn-Sham energy $E_{V_{\uR}}(Z)$ has a minimizer.
In particular, the optimal bound of $\beta_+$ in (\ref{def.condition}) is presumably much closer to $2/3$.
\end{remark} 

\begin{remark}
It is conjectured that $C_1 \le R_\mathrm{min} \le R_\mathrm{max} \coloneqq \max_{i \neq j}|R_i -R_j| \le C_2$ for some universal constants $C_1, C_2>0$ if a minimizing configuration exists in the quantum theory.
Hence Theorem~\ref{thm.main} suggests that the energy of interaction behaves like $R^{-7}$ if $R \lesssim r =$ the interatomic distance and $-R^{-6}$ at infinity.
\end{remark}

\begin{remark}
An important extension is the Hartree-Fock theory which approximates the exchange-correlation by
\[
E_\mathrm{xc}(\rho_\gamma) \approx X(\gamma) \coloneqq \iint_{\R^3 \times \R^3} \frac{|\gamma(x, y)|^2}{|x-y|} \, dxdy.
\]
Unfortunately, our method does not work for Hartree-Fock theory because $X(\gamma)$ is non-local.
The main difficulty is that the localization error of the energy,
\[
 \int_{B(R_i, r)} \int_{B(R_j, r) } \frac{|\gamma(x, y)|^2}{|x-y|} \, dxdy,
\]
 has to be dominated by $o(r^{-7})$ independently of the nuclear charges for small $r$.
This problem does not arise for the LDA term $E_\mathrm{xc}^\mathrm{LDA}(\rho_\gamma)$ if $r$ is small enough.
\end{remark}

The proof of Theorem~\ref{thm.main} follows the strategy inspired by~\cite{Samojlow}.
Indeed, the main idea is to compare with TF theory, and one of the key ingredients is the Sommerfeld estimate for molecules.
In the case $K=2$, the Sommerfeld estimate was shown in~\cite{Samojlow}, and the proof has been extended to the $K > 2$ case in~\cite{GotorHF}.
In the present article, we generalize certain bounds~\cite{Samojlow} and~\cite{GotorHF} used on the difference between the considered theory and TF theory so that the Sommerfeld estimates can also be used in our case.
From a technical point of view, the non-linearity and non-convexity of the exchange-correlation term are the main mathematical difficulties in studying the Kohn-Sham LDA model.
These are also the reason why conditions (\ref{def.condition}) are different from the one in~\cite{KS}.

This article is organized as follows.
In Section~\ref{gs},  we derive some standard properties for ground states.
Besides, we study a semi-classical analysis in Kohn-Sham theory.
In Section~\ref{screened}, we provide the comparison estimates of the screened potentials, which allows us to control the difference between a ground state density in Kohn-Sham theory with a minimizer of an outer TF functional.
Due to the exchange-correlation, these analyses are always more involved than rHF theory, even in the atomic $K=1$ case. 
Hence the results in Sect.~\ref{gs} and Sect.~\ref{screened} are also some of our contributions and novelties in this paper.
The proof of Theorem~\ref{thm.main} is given in Section~\ref{proof.main} using Solovej's iterative argument introduced in~\cite{Solovej}.
In particular, we study the energy contributions of the densities away from nuclei for both Kohn-Sham and TF theories.
Finally, we prove Corollary~\ref{cor.main}, which is a straightforward consequence of Theorem~\ref{thm.main}.

\section*{Conventions}
We will denote by $\rho^\mathrm{TF}$, $\rho^\mathrm{TF}_{z_j}$, and $\rho_{z_j}$ the minimizers of $E^\mathrm{TF}_{V_{\uR}}(Z)$, $E^\mathrm{TF}_{z_j/|x-R_j|}(z_j)$, and $E_{z_j/|x-R_j|}(z_j)$, respectively.
For $N \le Z$, we denote minimizers for $E_{V_{\uR}}(N)$ and $E_{V_{\uR}}(Z)$ by the same $\gamma_0$ when no confusion can arise, and write its density $\rho_0$ for short. Then we introduce here the screened potentials defined by
\begin{align*}
\Phi_{r}(x) &\coloneqq V_{\uR}(x)-  \int_{A_r^c} \frac{\rho_0(y)}{|x-y|} \, dy,\\
\Phi^\mathrm{TF}_{r}(x) &\coloneqq V_{\uR}(x)-  \int_{A_r^c} \frac{\rho^\mathrm{TF}(y)}{|x-y|} \, dy, \\
\Phi_{j, r}(x) &\coloneqq  z_j|x-R_j|^{-1}-  \int_{|x-R_j| < r} \frac{\rho_{z_j}(y)}{|x-y|} \, dy, \\
\Phi_{j, r}^\mathrm{TF}(x) &\coloneqq  z_j|x-R_j|^{-1}-  \int_{|x-R_j| < r} \frac{\rho^\mathrm{TF}_{z_j}(y)}{|x-y|} \, dy.
\end{align*}
where $A_r^c$ stands for the complement of $A_r = \{x \in \R^3 \colon |x-R_j| > r \, \text{ for all } j=1, \dots, K\}$.
Besides, we will use the standard notation
\[
D(f, g) \coloneqq \frac{1}{2}\iint_{\R^3 \times \R^3} \frac{f(x)g(y)}{|x-y|} \, dxdy.
\]
Our proofs of the results in this paper also work for atomic Kohn-Sham theory with slight modifications.
For instance, the quantity $R_\mathrm{min}/4$ is replaced by $1$ in that case.

\section{Properties of the ground state}
\label{gs}
In this section, we assume $N \le Z$, and $\gamma_0$ denotes a minimizer for $E_{V_{\uR}}(N)$.
First, we show some a-priori bounds for the ground state $\gamma_0$.
\begin{proposition}
\label{prop.apriori}
For any density matrix $\gamma \in \mathcal{DM}$ it follows that for any $\epsilon >0$
\begin{equation}
\label{eq.exchange}
E_\mathrm{xc}^\mathrm{LDA}(\rho_\gamma)  \le \epsilon  \int \rho_\gamma^{5/3} +  2c_\epsilon \tr \gamma,
\end{equation}
where $c_\epsilon = \max\{ 1, \epsilon^{-3/2}\}$.

Moreover, we have

\begin{equation}
\label{eq.kinetic}
0 \ge E_{V_{\uR}}(N) \ge \frac{1}{4} \tr (-\Delta \gamma_0)  - C\sum_{j=1}^Kz_j^{7/3}.
\end{equation}
\end{proposition}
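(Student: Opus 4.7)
The plan is to derive (\ref{eq.exchange}) from a direct integration of the growth assumption on $g$, and then to combine it with the Lieb-Thirring inequality and a Thomas-Fermi lower bound to obtain (\ref{eq.kinetic}).

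For (\ref{eq.exchange}), integrating the growth bound $g'(t) \le C(t^{\beta_-} + t^{\beta_+})$ from $0$ to $t$ together with $g(0)=0$ gives $g(t) \le C(t^{1+\beta_-} + t^{1+\beta_+})$. Both exponents lie in $(1, 7/5]\subset (1, 5/3)$, so writing any such exponent as $p = \theta\cdot \tfrac{5}{3} + (1-\theta)\cdot 1$ with $\theta = \tfrac{3}{2}(p-1)$ and applying Young's inequality yields pointwise
\[
t^p \le \epsilon\, t^{5/3} + C\, \epsilon^{-\theta/(1-\theta)}\, t.
\]
Since $\theta \le 3/5$, the exponent $\theta/(1-\theta)$ never exceeds $3/2$, so the constant $c_\epsilon = \max\{1, \epsilon^{-3/2}\}$ works uniformly for both pieces $t^{1+\beta_-}$ and $t^{1+\beta_+}$. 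Integrating against $\rho_\gamma$ and using $\int \rho_\gamma = \tr\gamma$ yields (\ref{eq.exchange}) after adjusting the constants.

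For (\ref{eq.kinetic}), the upper bound $E_{V_{\uR}}(N) \le 0$ is standard: since $g \ge 0$, one has $\E_V(\gamma) \le \tr[(-\Delta/2 - V_{\uR})\gamma] + D(\rho_\gamma)$, and a sufficiently dilute trial density matrix with $\tr\gamma = N$ makes the right-hand side negative, thanks to the dominance of the Coulomb attraction $-\int V_{\uR}\rho_\gamma$ at small scales. For the lower bound, split the kinetic energy as
\[
\tfrac{1}{2}\tr(-\Delta\gamma_0) = \tfrac{1}{4}\tr(-\Delta\gamma_0) + \tfrac{1}{4}\tr(-\Delta\gamma_0),
\]
and apply the Lieb-Thirring inequality $\tr(-\Delta\gamma_0) \ge K_\mathrm{LT}\int \rho_0^{5/3}$ to the second quarter. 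Using (\ref{eq.exchange}) with $\epsilon = K_\mathrm{LT}/8$ to absorb the resulting $\int \rho_0^{5/3}$ term against the exchange-correlation, and keeping $D(\rho_0) \ge 0$, one obtains
\[
E_{V_{\uR}}(N) \ge \tfrac{1}{4}\tr(-\Delta\gamma_0) + \Bigl(\tfrac{K_\mathrm{LT}}{8}\int \rho_0^{5/3} - \int V_{\uR}\rho_0 + D(\rho_0)\Bigr) - 2c_\epsilon N.
\]
The bracketed Thomas-Fermi-type functional is bounded below by $-C\sum_j z_j^{7/3}$ (the standard TF lower bound, with a rescaled kinetic coefficient), and since $N \le Z \le K + \sum_j z_j^{7/3}$, the remaining term $2c_\epsilon N$ is absorbed into the constant, proving (\ref{eq.kinetic}).

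The main technical point is securing the specific constant $c_\epsilon = \max\{1, \epsilon^{-3/2}\}$ in (\ref{eq.exchange}). The restriction $\beta_+ \le 2/5$ in assumption (\ref{def.condition}) is precisely what keeps the interpolation exponent $\theta/(1-\theta)$ at or below $3/2$, explaining why this threshold, rather than the more natural $2/3$ hinted at in the remark, appears; the remaining ingredients, namely Lieb-Thirring, the Thomas-Fermi lower bound, and the trial-state upper bound, are classical.
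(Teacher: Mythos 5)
Your argument is essentially the same as the paper's: both derive (\ref{eq.exchange}) from the growth assumption $\beta_+\le 2/5$ via a Young-type interpolation between $t^{5/3}$ and $t$, and both prove (\ref{eq.kinetic}) by splitting off a quarter of the kinetic energy, invoking Lieb--Thirring, choosing $\epsilon$ in (\ref{eq.exchange}) so that the exchange term is absorbed, and finishing with the Thomas--Fermi lower bound $\gtrsim -\sum_j z_j^{7/3}$. Two cosmetic remarks: the paper applies H\"older's inequality first and then the elementary bound $a^{\alpha}b^{\beta}\le\epsilon\alpha a+\beta\epsilon^{-\alpha/\beta}b$ to the resulting integrals, whereas you apply the same inequality pointwise in $t$; these are equivalent here, and your version is if anything cleaner. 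For the sign $E_{V_{\uR}}(N)\le 0$ the paper simply cites \cite[Lem.~1]{KS} (monotonicity of $N\mapsto E_{V_{\uR}}(N)$ together with $E_{V_{\uR}}(0)=0$); your trial-state sketch reaches the same conclusion, although the phrase ``dilute \dots thanks to Coulomb attraction at small scales'' is internally inconsistent --- a dilute state drives \emph{all} terms to $0$, which is exactly what gives $\le 0$, while Coulomb attraction at small scales is the mechanism for strict negativity and is not needed here.
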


\begin{proof}
By our assumption, we have
\[
E_\mathrm{xc}^\mathrm{LDA}(\rho_\gamma) \le \int (\rho_\gamma^{1+\beta_+} + \rho_\gamma^{1+\beta_-}).
\]
Using H\"older's inequality, we see that
\[
\int \rho_\gamma^{1+ \beta_{\pm}} \le \left(\int \rho_\gamma^{5/3}\right)^{3\beta_{\pm}/2}  \left(\int \rho_\gamma \right)^{1-3\beta_{\pm}/2}.
\]
Now we use the inequality $a^{\alpha} b^{\beta} \le \epsilon \alpha a + \beta \epsilon^{-\alpha \beta^{-1}}b$ for arbitrary $a, b, \epsilon  >0$, and $0 < \alpha < 1$, $0 < \beta < 1$ such that $\alpha + \beta = 1$.
This follows from inserting $x= a/b$ in the simple inequality $x^{\alpha} \le \epsilon \alpha x +(1-\alpha) \epsilon^{-\alpha (1-\alpha)^{-1}}$.
Then (\ref{eq.exchange}) follows.
On the other hand, by the Lieb-Thirring inequality, we have
\[
\tr (-\Delta \gamma_0) \ge C \int_{\R^3} \rho_0(x)^{5/3} \, dx.
\]
In addition, we know $0 \ge E_{V_{\uR}}(N)$~\cite[Lem.~1]{KS}.
Together with these results,  we obtain
\begin{align*}
0 \ge E_{V_{\uR}}(N) &\ge  \frac{1}{4}\tr (-\Delta \gamma) + C^{-1}\int \rho_0^{5/3} - \int V_{\uR} \rho_0 + D(\rho_0) - C Z \\
&\ge \frac{1}{4}\tr (-\Delta \gamma) -C\sum_{j=1}^Kz_j^{7/3},
\end{align*}
where we have used the bound on the Thomas-Fermi energy $E_{V_{\uR}}^\mathrm{TF}(Z) \ge -c\sum_{j=1}^Kz_j^{7/3}$.
This shows (\ref{eq.kinetic}).
\end{proof}

The following lemma is the first step towards a proof of the universal bound of the Born-Oppenheimer energy.
\begin{lemma}[Initial step]
\label{thm.initial}
It follows that
\begin{equation}
\label{eq.semicl}
 E_{V_{\uR}} (N) \ge \E_{V_{\uR}}^\mathrm{TF}(\rho^\mathrm{TF}) +D(\rho_0 - \rho^\mathrm{TF})  - CZ^{25/11}.
\end{equation}
Moreover, there is a universal constant $C_1 > 0$ such that for any $r \in (0, R_\mathrm{min}/4]$
\begin{equation}
\label{ineq.initial}
\sup_{x \in \partial A_r} |\Phi_r^\mathrm{TF}(x) - \Phi_r(x)| \le C_1 Z^{\frac{49}{36}-a}r^{1/12},
\end{equation}
where $a = 1/198$.
\end{lemma}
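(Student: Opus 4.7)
The plan is to first establish the energy inequality (\ref{eq.semicl}) by completing the square against the Thomas--Fermi density, and then combine it with a matching upper bound to obtain $D(\rho_0 - \rho^\mathrm{TF}) \lesssim Z^{25/11}$, from which (\ref{ineq.initial}) follows via a Coulomb--Cauchy--Schwarz argument on the boundary sphere $\partial A_r$.

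For (\ref{eq.semicl}), I would expand $D(\rho_0-\rho^\mathrm{TF}) = D(\rho_0) - 2 D(\rho_0,\rho^\mathrm{TF}) + D(\rho^\mathrm{TF})$ and rearrange, producing effective external and linear Coulomb contributions through the TF screened potential. Using the TF Euler--Lagrange relation $\Phi^\mathrm{TF} = \tfrac{1}{2}(3\pi^2)^{2/3}(\rho^\mathrm{TF})^{2/3}$ on $\{\rho^\mathrm{TF} > 0\}$ (valid at $N=Z$ since the chemical potential vanishes), this reduces the deficit to
\[
E_{V_{\uR}}(N) - \E^\mathrm{TF}_{V_{\uR}}(\rho^\mathrm{TF}) - D(\rho_0 - \rho^\mathrm{TF}) = \tr\bigl[\bigl(-\tfrac{1}{2}\Delta - \Phi^\mathrm{TF}\bigr)\gamma_0\bigr] + \tfrac{1}{5}(3\pi^2)^{2/3}\!\!\int\!(\rho^\mathrm{TF})^{5/3} - E_\mathrm{xc}^\mathrm{LDA}(\rho_0),
\]
where $\Phi^\mathrm{TF} = V_{\uR} - \rho^\mathrm{TF}\ast|x|^{-1}$. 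A sharp semiclassical lower bound (Daubechies or coherent-state type, with the classical Thomas--Fermi constant) applied to the trace, together with the TF identity, cancels the first two summands at leading order and leaves only a subleading semiclassical remainder. The exchange-correlation is absorbed using Proposition~\ref{prop.apriori}: the bound $E_\mathrm{xc}^\mathrm{LDA}(\rho_0) \le \epsilon\int\rho_0^{5/3} + 2c_\epsilon Z$ with $c_\epsilon = \max\{1,\epsilon^{-3/2}\}$ lets one soak up $\epsilon\int\rho_0^{5/3}$ into a small fraction of $\tr((-\Delta/2)\gamma_0) \ge c_\mathrm{LT}\int\rho_0^{5/3}$ reserved before invoking the semiclassical step. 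Optimizing $\epsilon$ against $c_\epsilon Z$ and the semiclassical remainder gives the claimed $Z^{25/11}$ error.

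For (\ref{ineq.initial}), a matching upper bound $E_{V_{\uR}}(N) \le \E^\mathrm{TF}_{V_{\uR}}(\rho^\mathrm{TF}) + CZ^{25/11}$ obtained from a coherent-state trial density matrix built from $\rho^\mathrm{TF}$, combined with (\ref{eq.semicl}), delivers the Coulomb-norm control $D(\rho_0 - \rho^\mathrm{TF}) \le CZ^{25/11}$. For $x\in\partial A_r$, write
\[
\Phi_r(x) - \Phi_r^\mathrm{TF}(x) = \int_{A_r^c}\frac{\rho^\mathrm{TF}(y)-\rho_0(y)}{|x-y|}\,dy
\]
and split using a smeared point charge $g_{x,s}$ of radius $s$ at $x$, chosen so that its Newton potential agrees with $|x-y|^{-1}$ outside $B(x,s)$. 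The long-range part is controlled by Cauchy--Schwarz in the Coulomb norm, giving $|2D(\rho^\mathrm{TF}-\rho_0,g_{x,s})| \le 2\sqrt{D(\rho^\mathrm{TF}-\rho_0)\,D(g_{x,s})} \lesssim \sqrt{Z^{25/11}/s}$. The short-range remainder is handled using universal Sommerfeld-type pointwise bounds on $\rho^\mathrm{TF}$ near the nearest nucleus $R_j$ and a comparable control on $\rho_0$ derived from Proposition~\ref{prop.apriori}. Optimizing $s$ as a joint power of $r$ and $Z$ yields the $r^{1/12}$ factor and the $Z^{49/36-a}$ prefactor with $a = 1/198$.

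The main obstacle is securing the \emph{sharp} classical constant in the semiclassical lower bound on $\tr[(-\tfrac{1}{2}\Delta - \Phi^\mathrm{TF})\gamma_0]$: using the bare Lieb--Thirring constant leaves a residual of order $Z^{7/3}$, far worse than $Z^{25/11}$. A Daubechies-type inequality or a coherent-state comparison with explicit subleading control is required, and the nonconvex LDA term---absent in the reduced Hartree--Fock treatment of~\cite{Samojlow}---must be absorbed without destroying either the semiclassical cancellation or the fine parameter balance underlying the specific exponents $49/36 - 1/198$ and $1/12$.
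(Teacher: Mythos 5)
Your treatment of (\ref{eq.semicl}) is essentially the paper's: coherent states control $\tr[(-\tfrac12\Delta - \phi^\mathrm{TF}\star g^2)\gamma_0]$ with the sharp constant, a small fraction of the kinetic energy absorbs both the localization error $\phi^\mathrm{TF}-\phi^\mathrm{TF}\star g^2$ and the LDA exchange via Proposition~\ref{prop.apriori} (with $\epsilon=Z^{-8/15}$), and optimization over the smearing scale yields the $Z^{25/11}$ error. For the matching upper bound the paper simply notes $E_{V_{\uR}}(N)\le E_{V_{\uR}}^\mathrm{rHF}(N)$ (since $E_\mathrm{xc}^\mathrm{LDA}\ge 0$) and cites the rHF bound from~\cite{GotorHF}; your coherent-state trial state would reprove that. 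So this half of your proposal is sound, and you correctly identify the pitfall of using the bare Lieb--Thirring constant.

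For (\ref{ineq.initial}) there is a genuine gap. The paper's route is: harmonicity of $\Phi_r-\Phi_r^\mathrm{TF}$ on $A_r$ plus the maximum principle reduces the supremum over $\partial A_r$ to $K$ single-ball problems, to each of which one applies the Coulomb estimate of~\cite[Lem.~12]{mullerIC}, namely $\bigl|\int_{|y|<|x|}|x-y|^{-1}f(y)\,dy\bigr|\le C\|f\|_{L^{5/3}}^{5/6}\bigl(|x|\,D(f)\bigr)^{1/12}$, with $f=\rho_0-\rho^\mathrm{TF}$, $\|f\|_{L^{5/3}}\lesssim Z^{7/5}$ (Lieb--Thirring and (\ref{eq.kinetic})) and $D(f)\lesssim Z^{25/11}$ from (\ref{eq.semicl}); the exponents $49/36-1/198$ and $1/12$ fall out directly. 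Your self-contained smearing scheme does not work as stated, for two reasons. First, the long-range term: the identity matching $|x-y|^{-1}$ with the potential of $g_{x,s}$ outside $B(x,s)$ produces $2D\bigl((\rho^\mathrm{TF}-\rho_0)\chi_{A_r^c\cap B(x,s)^c},\, g_{x,s}\bigr)$, but the Coulomb energy $D$ of a \emph{truncated} signed density is not bounded by the untruncated $D(\rho^\mathrm{TF}-\rho_0)$; the lemma above avoids this precisely by restricting the kernel (and the integration domain), not the density entering $D$. Second, there is no ``comparable pointwise control on $\rho_0$ derived from Proposition~\ref{prop.apriori}'': that proposition only gives integral control $\int\rho_0^{5/3}\lesssim Z^{7/3}$ through Lieb--Thirring, and no universal pointwise bound on $\rho_0$ near the nuclei is available. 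The short-range piece must instead be handled by H\"older with $|x-\cdot|^{-1}\in L^{5/2}_{\mathrm{loc}}$ against $\|\rho_0-\rho^\mathrm{TF}\|_{L^{5/3}}$, which is exactly what the Coulomb estimate builds in. Without these repairs neither the $r^{1/12}$ factor nor the $Z^{49/36-1/198}$ prefactor follows from your optimization.
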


This lemma allows us to control $x$ near the nuclei since $Z^{49/36}r^{1/12} \le r^{-4}$ for $r \le Z^{-1/3}$.

\begin{proof}[Proof of Lemma~\ref{thm.initial}]
We bound $\E(\gamma_0)$ from above and below.
It is easy to see that $E_{V_{\uR}}(N) \le E_{V_{\uR}}^\mathrm{rHF} (N) \coloneqq \inf \{\E_{V_{\uR}}(\gamma) + E_\mathrm{xc}^\mathrm{LDA}(\rho_\gamma) \colon \gamma \in \mathcal{DM}, \tr \gamma =N\}$, and the upper bound $E_{V_{\uR}}^\mathrm{rHF} (N) \le \E_{V_{\uR}}^\mathrm{TF}(\rho^\mathrm{TF}) + CZ^{25/11}$ has been shown in~\cite[Eq.~(5.2)]{GotorHF}.

Inserting $\epsilon = Z^{-8/15}$ in Proposition~\ref{prop.apriori}, we obtain
\[
E_\mathrm{xc}^\mathrm{LDA}(\rho_0^{1+ \beta_{\pm}}) \le CZ^{\frac{9}{5}}.
\]
Let $\phi^\mathrm{TF} \coloneqq V_{\uR} - \rho^\mathrm{TF}\star|\cdot|^{-1}$ be the TF potential for $E_{V_{\uR}}^\mathrm{TF}(Z)$.
Then we have
\begin{align*}
\E_{V_{\uR}} (\gamma_0) &\ge \tr \left[ \left(-\frac{\Delta}{2}  -V_{\uR} \right)\gamma_0\right] + D(\rho_0) -CZ^{\frac{9}{5}}\\
&= \tr \left[ \left(-\frac{\Delta}{2} \left(1 -\epsilon\right) -\phi^\mathrm{TF}\star g^2 \right)\gamma_0\right] + D(\rho_0 - \rho^\mathrm{TF}) -D(\rho^\mathrm{TF}) \\
&\quad + \tr \left[ \left(-\frac{\Delta}{2} \epsilon- (\phi^\mathrm{TF} - \phi^\mathrm{TF} \star g^2) \right)\gamma_0\right] -CZ^{\frac{9}{5}} ,
\end{align*}
for arbitrary $\epsilon>0$ and $g$.
Now we use coherent states as in~\cite{SolovejIC}.
For any $s>0$ we take the function $g \colon \R^3 \to \R$ such that $g(x) = 0$ if $|x|>s$ and $g(x) = (2\pi s)^{-1/2}|x|^{-1} \sin(\pi |x|/s)$ if $|x| \le s$.
Then it holds that
\[
0 \le g \le 1, \quad \int g^2 = 1, \quad \int |\nabla g|^2 = \left( \frac{\pi}{s} \right)^2.
\]
The coherent states associated $g$ is given by $f_{k, y} (x) = \exp(ik\cdot x) g(x-y)$ for $k, y \in \R^3$.
Let $\pi_{k, y}$ be the projection in $L^2(\R^3)$ onto $f_{k, y}$, i.e., $(\pi_{k, y} \psi) (x) = f_{k, y}\langle f_{k, y}, \psi \rangle$ for $\psi \in L^2(\R^3)$.
Then from the resolution of the identity and representation of the kinetic energy~\cite[Thm.~12.8 \& 12.9]{LiLo}, we have
\begin{align*}
\tr &\left[\left(-\frac{\Delta}{2} \left(1 -\epsilon\right) -\phi^\mathrm{TF}\star g^2\right)\gamma_0 \right] \\
&=  (2\pi)^{-3} \iint\, dk dy \left(\frac{k^2}{2} \left(1  -\epsilon\right) -\phi^\mathrm{TF} (y)\right)\tr(\pi_{k, y}\gamma_0) -\pi^2(2s^2)^{-1}N \\
&\ge (2\pi)^{-3} \iint_{\frac{k^2}{2} \left(1  -\epsilon\right) -\phi^\mathrm{TF} (y) < 0} \, dk dy \left(\frac{k^2}{2} \left(1  -\epsilon\right) -\phi^\mathrm{TF} (y)\right)-\pi^2(2s^2)^{-1}N \\
&= -2^{3/2}(15\pi^2)^{-1} \left(1 -\epsilon\right)^{-3/2} \int_{\R^3}\phi^\mathrm{TF}(x)^{5/2} \, dx - \pi^2(2s^2)^{-1}N.
\end{align*}
On the other hand, the Lieb-Thirring inequality leads to that
\[
\tr \left[ \left(-\frac{\Delta}{2} \epsilon- (\phi^\mathrm{TF} - \phi^\mathrm{TF} \star g^2) \right)\gamma_0\right]
\ge -C\epsilon^{-3/2}\|[\phi^\mathrm{TF} - \phi^\mathrm{TF} \star g^2]_+ \|_{L^{5/2}}^{5/2}.
\]
Optimizing over $\epsilon$, we see that
\begin{align*}
-2^{3/2}(15\pi^2)^{-1}&(1-\epsilon)^{-3/2}  \int_{\R^3}\phi^\mathrm{TF}(x)^{5/2} \, dx   -C\epsilon^{-3/2}\|[\phi^\mathrm{TF} - \phi^\mathrm{TF} \star g^2]_+ \|_{L^{5/2}}^{5/2}\\
& \ge -2^{3/2}(15\pi^2)^{-1}  \int_{\R^3}\phi^\mathrm{TF}(x)^{5/2} \, dx  -C  \|\phi^\mathrm{TF}\|_{L^{5/2}} \|[\phi^\mathrm{TF} - \phi^\mathrm{TF} \star g^2]_+ \|_{L^{5/2}}^{3/2}.
\end{align*}
By the TF equation $2^{-1}(3\pi^2)^{2/3}(\rho^\mathrm{TF})^{2/3} = \phi^\mathrm{TF}$, it follows that
\[
\int \phi^\mathrm{TF}(x)^{5/2} \, dx \le CZ^{7/3}
\]
and
\[
-2^{3/2}(15\pi^2)^{-1} \int[\phi^\mathrm{TF} ]_+^{5/2} - D(\rho^\mathrm{TF}) = \E^\mathrm{TF}(\rho^\mathrm{TF}).
\]
Next, we note that $V_{\uR} - V_{\uR}  \star g^2 \ge 0$ since $V_{\uR} $ is superharmonic.
Then we have
\[
\|[\phi^\mathrm{TF} - \phi^\mathrm{TF} \star g^2]_+\|_{L^{5/2}}^{5/2} \le \int |V_{\uR} - V_{\uR}  \star g^2|^{5/2} \le CZ^{5/2}s^{1/2}.
\]
Here we have used 
\[
V_{\uR} - V_{\uR}  \star g^2 \le \sum_{j=1}^K z_j (|x-R_j|^{-1} \1(|x-R_j| \le s)).
\]
Together with these results, we have

\[
 E_{V_{\uR}} (N) \ge \E^\mathrm{TF}(\rho^\mathrm{TF}) +D(\rho_0 - \rho^\mathrm{TF}) -Cs^{-2}Z - CZ^{12/5}s^{1/5} -CZ^{\frac{9}{5}}.
\]

Optimizing over $s > 0$, we conclude that (\ref{eq.semicl}) and hence
\[
D(\rho_0 - \rho^\mathrm{TF}) \le CZ^{\frac{25}{11}}.
\]
We use the following estimate taken from~\cite[Lem.~12]{mullerIC}.
\begin{lemma}[Coulomb estimate]
For any $f \in L^{5/3} \cap L^{6/5}(\R^3)$ and for any $x \in \R^3$ it follows that
\[
\left|\int_{|y| < |x|} \frac{f(y)}{|x-y|} \, dy \right| \le C\|f\|_{L^{5/3}}^{5/6}(|x| D(f))^{1/12}.
\]
\end{lemma}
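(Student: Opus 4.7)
My plan is to split the integration domain according to the distance $|x-y|$ from $x$. Setting $r=|x|$ and fixing a parameter $t>0$ to be optimized at the end, write
\[
\int_{|y|<r}\frac{f(y)}{|x-y|}\,dy = \int_{\substack{|y|<r\\|x-y|<t}}\frac{f(y)}{|x-y|}\,dy + \int_{\substack{|y|<r\\|x-y|\ge t}}\frac{f(y)}{|x-y|}\,dy \eqqcolon I_{\mathrm{near}}+I_{\mathrm{far}}.
\]
For $I_{\mathrm{near}}$, H\"older's inequality in $L^{5/3}\times L^{5/2}$ together with the elementary computation $\int_{|z|<t}|z|^{-5/2}\,dz = 8\pi t^{1/2}$ gives $|I_{\mathrm{near}}|\le C\|f\|_{L^{5/3}}\,t^{1/5}$. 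For $I_{\mathrm{far}}$, the trivial pointwise bound $|x-y|^{-1}\le t^{-1}$ reduces matters to controlling the $L^1$-mass of $|f|$ on the ball $B(0,r)$.

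The crucial auxiliary step is the following conversion from $L^1$-mass on a ball to the Coulomb energy: for every $f\ge 0$,
\[
\int_{|y|<r}f(y)\,dy \;\le\; \sqrt{2r\,D(f)}.
\]
To prove it, let $\nu$ be the uniform probability measure on $\partial B(0,r)$. Newton's theorem gives $\int|y-z|^{-1}\,d\nu(z)=\max(|y|,r)^{-1}$, which yields the pointwise inequality $\chi_{\{|y|<r\}}\le r\int|y-z|^{-1}\,d\nu(z)$; multiplying by $f(y)$ and integrating gives $\int_{|y|<r}f\le 2r\,D(f,\nu)$. Cauchy--Schwarz for the positive-definite bilinear form $D(\cdot,\cdot)$ then bounds the right-hand side by $2r\sqrt{D(f)D(\nu)}$, and a direct computation (Newton's theorem once more) shows $D(\nu)=1/(2r)$.

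Putting everything together, after reducing to $f\ge 0$ by passing to $|f|$ at the outset, we obtain for every $t>0$
\[
\left|\int_{|y|<r}\frac{f(y)}{|x-y|}\,dy\right| \;\le\; C\|f\|_{L^{5/3}}\,t^{1/5}+t^{-1}\sqrt{2r\,D(f)}.
\]
The two summands balance when $t^{6/5}\sim \sqrt{rD(f)}/\|f\|_{L^{5/3}}$, i.e.\ $t\sim (rD(f))^{5/12}\|f\|_{L^{5/3}}^{-5/6}$, and with this choice both contributions are of order $\|f\|_{L^{5/3}}^{5/6}(rD(f))^{1/12}$, which is the announced estimate. The only nonroutine ingredient is the Newton-smearing trick that converts the sharp indicator $\chi_{\{|y|<r\}}$ into a kernel accessible to Cauchy--Schwarz in $D(\cdot,\cdot)$; everything else is H\"older and a one-parameter optimization.
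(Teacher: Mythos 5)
Your near/far decomposition, the H\"older bound for the near piece, the Newton-smearing identity $\chi_{\{|y|<r\}} \le r\,\nu\star|\cdot|^{-1}$ with Cauchy--Schwarz in the Coulomb form, and the final $t$-optimization are all on target and do reproduce the claimed exponents. However, the opening reduction ``pass to $|f|$ at the outset'' is a genuine gap, and everything downstream inherits it. Writing $f=f_+-f_-$ one has
\[
D(|f|)-D(f)=4\,D(f_+,f_-)\ \ge\ 0,
\]
with strict inequality unless $f$ has a fixed sign. Hence the argument, as written, only establishes
\[
\Bigl|\int_{|y|<|x|}\frac{f(y)}{|x-y|}\,dy\Bigr|\ \le\ C\,\|f\|_{L^{5/3}}^{5/6}\bigl(|x|\,D(|f|)\bigr)^{1/12},
\]
which is a strictly weaker statement than the lemma, not an equivalent one. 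In particular it is useless for the only way the lemma is used in this paper: it is always applied to sign-changing objects such as $f=\rho_0-\rho^{\mathrm{TF}}$ or $f=\rho_r^{\mathrm{TF}}-\mathds 1_{A_r}\rho_0$, and the a priori information fed into the estimate is precisely a bound on $D(f)$ coming from an energy comparison --- nothing at all is known about $D(|f|)$, which can be far larger.

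The place where the sign is actually needed is the far region: the step $\lvert I_{\mathrm{far}}\rvert\le t^{-1}\int_{|y|<r}|f|$ followed by the auxiliary bound is where $|f|$ enters. To repair the argument you must avoid throwing away the sign before the Cauchy--Schwarz step; that is, you need to control $I_{\mathrm{far}}$ itself --- not $\int_{|y|<r}|f|$ --- by $D(f)^{1/2}$. Smearing the singularity (replacing $|x-y|^{-1}$ by $\max(|x-y|,t)^{-1}=\nu_x\star|\cdot|^{-1}$ with $\nu_x$ the uniform measure on $\partial B(x,t)$) is the right idea, since the correction $\mathds 1_{\{|y|<r\}}\bigl(|x-y|^{-1}-\max(|x-y|,t)^{-1}\bigr)$ has $L^{5/2}$-norm $O(t^{1/5})$ and is absorbed into the near term; but then one is left with $2D(f\mathds 1_{\{|y|<r\}},\nu_x)$, and the truncation $\mathds 1_{\{|y|<r\}}$ cannot simply be discarded (again $D(f\mathds 1_A)\not\le D(f)$ for signed $f$). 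Handling this truncation-versus-$D(f)$ mismatch is the genuine content of the lemma in the reference, and it is exactly what your write-up elides. As it stands the proof is incomplete.
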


By harmonicity of the functional $\Phi_r^\mathrm{TF} - \Phi_r$, we see that for any $r \in (0, R_\mathrm{min}/4]$
\begin{align*}
\sup_{x \in A_r} | \Phi_r^\mathrm{TF} - \Phi_r |
& \le \sum_{j=1}^K \sup_{|x-R_j| = r} \left|\int_{|y| < r} \frac{\rho_0(y+R_j) - \rho^\mathrm{TF}(y + R_j)}{|x-R_j -y|} \right| \\
& \le C\|\rho_0 - \rho^\mathrm{TF} \|_{L^{5/3}}^{5/6}(r D(\rho_0 - \rho^\mathrm{TF}))^{1/12} \\
& \le CZ^{49/36 - a}r^{1/12},
\end{align*}
which is the desired conclusion.
\end{proof}

\section{Screened potential estimates}
\label{screened}
From now on $\gamma_0$ denotes a minimizer for $E_{V_{\uR}}(Z)$.
we choose the smooth function $\eta_r \colon \R^3 \to [0, 1]$ such that $\1_{A_r} \ge \eta_r \ge \1_{A_{(1+\lambda)r}}$ and partition of unity, $\eta_r^2 + \eta_+^2 + \eta_-^2 = 1$, satisfying
\[
\mathrm{supp} \, \eta_- \subset A_r^c, \quad \mathrm{supp} \, \eta_+ \subset A_{(1-\lambda)r} \cap A_{(1+\lambda)r}^c,
\]
where $\eta_- = 1$ in $A^c_{(1-\lambda)r}$ and 
\[
\sum_{\# = +, -, r} |\nabla \eta_\#|^2 \le C(\lambda r)^{-2}.
\]

Now we introduce the notation
\label{ineq.iterative1}
\begin{equation}
\mathcal{A} \coloneqq \left\{(r,  \beta, \epsilon) \colon \sup_{x \in \partial A_r} |\Phi_r^\mathrm{TF}(x) - \Phi_r(x)| \le \beta r^{-4 + \epsilon}\right\}.
\end{equation}

Our goal in this section is to provide the following universal bound for the screened potential which is the main technical tool.

\begin{lemma}[Screened potential estimate]
\label{lem.potential}
If $z_\mathrm{min} \ge \delta_0 z_\mathrm{max}$ for some $\delta_0$, then there are constants $C_0, \epsilon_1, \delta_1 >0$ such that $(r, C_0, \epsilon_1) \in \mathcal{A}$ for any $r \in (0, (R_\mathrm{min}/4)^{1+ \delta_1}]$.
\end{lemma}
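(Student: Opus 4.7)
The strategy is Solovej's iterative bootstrap: I would start from the weak, $Z$-dependent bound of Lemma~\ref{thm.initial} and upgrade it, scale by scale, to the $Z$-independent bound $\beta r^{-4+\epsilon}$ of the desired form. A first useful reduction is the observation that $\Phi_r^\mathrm{TF} - \Phi_r$ is harmonic on $A_r$ (it is generated by signed charge densities supported in $A_r^c$), so by the maximum principle the supremum on $\partial A_r$ controls the function throughout $A_r$. Hence the whole iteration can be run on boundary values.

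The inductive step has the form: given $(r,\beta,\epsilon_1) \in \mathcal{A}$, I would deduce $(r',\beta',\epsilon_1)\in\mathcal{A}$ for $r' = (1+\tau)r$ with $\beta' \le \beta$ up to a fixed multiplicative factor, as long as $r' \le R_\mathrm{min}/4$. To do this I would split $\E_{V_{\uR}}(\gamma_0)$ into an inner contribution on $A_r$ and an outer one on $A_r^c$, and do the analogous decomposition on the TF side with minimizer $\rho^\mathrm{TF}$, using the cutoffs $\eta_r, \eta_\pm$ introduced before the statement. The screened-potential hypothesis on $\partial A_r$ makes the cross interaction between the two regions small, so the problem reduces to a sharp comparison between $\rho_0$ and $\rho^\mathrm{TF}$ on $A_r^c$: the molecular Sommerfeld estimate from~\cite{GotorHF} pins down the quantitative decay of $\rho^\mathrm{TF}$ there, and a Coulomb-type estimate in the spirit of the last step of the proof of Lemma~\ref{thm.initial} then turns an $L^{5/3}$ and Coulomb-energy bound on $\rho_0 - \rho^\mathrm{TF}$ into the pointwise boundary bound at scale $r'$.

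Lemma~\ref{thm.initial} provides the initial membership $(r_0,\beta_0,\epsilon_1)\in\mathcal{A}$ at a scale $r_0$ of order $Z^{-\theta}$ for an explicit $\theta>0$ obtained by balancing $C_1 Z^{49/36-a} r_0^{1/12}$ against $\beta_0 r_0^{-4+\epsilon_1}$. One then applies the inductive step repeatedly to multiply $r$ by $1+\tau$ at each step. The exponent $\delta_1$ is chosen precisely so that the number of doublings needed to reach $(R_\mathrm{min}/4)^{1+\delta_1}$ from $r_0$ stays bounded; here the hypothesis $z_\mathrm{min} \ge \delta_0 z_\mathrm{max}$ is used to relate $R_\mathrm{min}$ and $Z$ through standard molecular TF estimates, so that the accumulated multiplicative loss in $\beta$ can be absorbed into a single universal constant $C_0$.

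The principal obstacle, absent in the rHF analyses of~\cite{Samojlow, GotorHF}, is the non-linear LDA correction $E_\mathrm{xc}^\mathrm{LDA}(\rho_0) = \int g(\rho_0)$. At each step of the iteration it must be shown negligible on the localization region, uniformly in $Z$. The hypothesis $\beta_+ \le 2/5$ in~(\ref{def.condition}) is essential here: combined with~(\ref{eq.exchange}) and the Lieb--Thirring bound it allows $\int g(\rho_0)$ to be absorbed into a small fraction of the kinetic energy plus lower-order terms. In addition, the Kohn--Sham Euler--Lagrange equation carries the extra term $g'(\rho_0)$, which I would treat as a perturbation of the TF relation $\tfrac{1}{2}(3\pi^2)^{2/3}\rho^{2/3} = \Phi^\mathrm{TF}$ and propagate through the iteration. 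Controlling these perturbative errors without spoiling the $Z$-uniform nature of the estimates, and in particular making sure the loss at each iteration step is a pure multiplicative constant rather than something growing with $Z$, is what I expect to be the hardest technical point.
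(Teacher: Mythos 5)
Your overall roadmap — start from Lemma~\ref{thm.initial}, bootstrap via comparison with an exterior TF problem on $A_r$, use the molecular Sommerfeld estimate, and close with the Coulomb estimate on $\partial A_s$ — is the right one and matches the paper in spirit. But the specific shape you give the inductive step breaks the argument.

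You propose an increment $r \mapsto r' = (1+\tau)r$ together with a fixed multiplicative loss ``\,$\beta'\le\beta$ up to a constant factor\,'' per step, and you assert the accumulated loss can be absorbed because the ``number of doublings stays bounded.'' It cannot. The initial scale is $r_0 \sim z_\mathrm{min}^{-1/3}$ while the target scale is $(R_\mathrm{min}/4)^{1+\delta_1}$, and $R_\mathrm{min}$ is allowed to be of order one (the theorem covers $R_\mathrm{min}\in(0,4]$). Covering this range with factor-$(1+\tau)$ steps takes on the order of $\log z_\mathrm{min}$ iterations, so a loss of a constant factor per step compounds to a power of $z_\mathrm{min}$ — exactly the $Z$-dependence the lemma must avoid. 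The hypothesis $z_\mathrm{min}\ge\delta_0 z_\mathrm{max}$ relates the $z_j$ to each other, not $R_\mathrm{min}$ to $z_\mathrm{min}$; it gives you no upper bound on $R_\mathrm{min}$ and hence no bound on the number of steps.

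The paper's iterative step (Lemma~\ref{thm.iterative}) is structured differently in two essential ways. First, the jump is an \emph{exponent compression}: assuming $(s,\beta_1,0)\in\mathcal{A}$ for all $s\le r$, one concludes $(s,C_2,\epsilon_2)\in\mathcal{A}$ for $s$ up to roughly $r^{(1-\delta_2)/(1+\delta_2)}$ (and at most $\tilde r$), not merely up to $(1+\tau)r$. Second, and more importantly, the output constant $C_2$ is a fixed universal constant, independent of the input; provided $\beta_1$ is chosen at least as large as $C_2$, iterating incurs \emph{no} accumulated loss at all. This is what allows the proof to be closed by a supremum/contradiction argument (the definition of $M$ in the paper), rather than by counting steps and summing losses. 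The ingredients you list — exterior TF minimizer, Sommerfeld asymptotics with the exponents $\xi,\eta$ solving $p^2-7p=6$, and the $L^{5/3}$--Coulomb interpolation inequality — are precisely how the paper achieves a universal $C_2$ (Lemmas~\ref{step1}--\ref{step4}), so the pieces are there in your outline; what is missing is the recognition that the iteration must be set up so that the conclusion constant is loss-free and universal, and that the gain per step must be in the exponent of $r$ rather than a fixed ratio.

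Two smaller points. The paper never passes through the Kohn--Sham Euler--Lagrange equation with $g'(\rho_0)$ as a perturbation of the TF relation; the LDA term is handled at the energy level, being absorbed into a small fraction of the kinetic energy via~(\ref{eq.exchange}) and the Lieb--Thirring inequality inside the localization lemmas (Lemma~\ref{lem.rhf} in particular). Finally, the harmonicity observation at the start of your proposal is used in the paper (via Lemma~\ref{lem.outer}) to control the outer region and the total excess charge, not as a reduction of the whole iteration to boundary values, though it plays a similar supporting role.
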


We will prove Lemma~\ref{lem.potential} by using Solovej's bootstrap argument.
The strategy is based on the initial step and the following iterative step.
Lemma~\ref{thm.initial} shows $(r, C, \epsilon) \in \mathcal{A}$ for $r \le Z^{-1/3}$, and we can extend the range of such $r$ up to $\mathcal{O}(1)$ by an iterative procedure. 

\begin{lemma}[Iterative step]
\label{thm.iterative}
Let $\eta = (7+\sqrt{73})/2 \sim 7.772$ and $\xi = (\sqrt 73 - 7)/2 \sim 0.77$.
We put  $r \in [z_\mathrm{min}^{-1/3}, D]$ with some $D \in [ z_\mathrm{min}^{-1/3}, R_\mathrm{min}/4]$, and $\tilde r \coloneqq r^{\xi/(\xi + \eta)}(R_\mathrm{min}/4)^{\eta/(\xi + \eta)}$.
There are universal constants $C_2, \beta_1, \delta_2, \epsilon_2 >0$ such that, if $(s, \beta_1, 0) \in \mathcal{A}$ holds for any $s \in (0, r]$,
then,  $(s, C_2, \epsilon_2) \in \mathcal{A}$ holds for any $s \in [r^{1/(1+\delta_2)}, \min \{r^{(1-\delta_2)/(1+\delta_2)}, \tilde r\}]$.
\end{lemma}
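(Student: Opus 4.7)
The plan is to run Solovej's bootstrap argument at a new scale $s$ in the target range: carry out an energy comparison between Kohn--Sham and Thomas--Fermi localized at scale $s$, extract a Coulomb bound on $\rho_0-\rho^\mathrm{TF}$ restricted to $A_s^c$ (up to an $L^{5/3}$-factor controlled by Lieb--Thirring), and convert it to the pointwise estimate of $\Phi_s^\mathrm{TF}-\Phi_s$ on $\partial A_s$ through the Coulomb estimate of~\cite{mullerIC} already used in Lemma~\ref{thm.initial}. The inductive assumption $(s,\beta_1,0)\in\mathcal A$ for $s\in(0,r]$ controls $\rho_0$ inside each core $\{|x-R_j|\le r\}$, while the molecular Sommerfeld estimate of \cite{GotorHF} propagates this control out to the ceiling $\tilde r$; the exponents $\eta$ and $\xi$ enter as the decay and correction rates of the TF equation linearized about the molecular Sommerfeld profile, which is exactly why $\tilde r=r^{\xi/(\xi+\eta)}(R_\mathrm{min}/4)^{\eta/(\xi+\eta)}$ is the largest admissible scale.

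\textbf{Step 1: localization at scale $s$.} Using the partition of unity $\eta_s^2+\eta_+^2+\eta_-^2=1$ together with the IMS formula for the Laplacian, I would split $\E_{V_{\uR}}(\gamma_0)$ into inner, boundary, and outer pieces with overlap error of order $(\lambda s)^{-2}\tr\gamma_0$. The direct Coulomb term splits symmetrically. For the LDA contribution I would decompose $g(\rho_0)$ pointwise along the same partition and control the cross terms by the derivative bound in (\ref{def.condition}) combined with the $\int(\rho_0^{1+\beta_-}+\rho_0^{1+\beta_+})$ estimate arising from Proposition~\ref{prop.apriori}. The same localization is performed on $\E^\mathrm{TF}(\rho^\mathrm{TF})$, so that matching contributions can be subtracted later.

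\textbf{Step 2: semi-classics and energy comparison.} On the outer region $A_s$ I would insert coherent states of length scale $\ell\ll s$ as in the proof of Lemma~\ref{thm.initial}, but with $V_{\uR}$ replaced by the screened potential $\Phi_s$, yielding a lower bound in terms of $\int[\Phi_s]_+^{5/2}$. In the annular region $A_s^c\cap A_r$ the semi-classical bound is matched to Thomas--Fermi via the TF equation $2^{-1}(3\pi^2)^{2/3}(\rho^\mathrm{TF})^{2/3}=\phi^\mathrm{TF}$, while inside the cores the hypothesis supplies the boundary matching through harmonicity of $\Phi_r^\mathrm{TF}-\Phi_r$ in $A_r$. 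Comparing the resulting lower bound with the upper bound $E_{V_{\uR}}(Z)\le\E_{V_{\uR}}^\mathrm{TF}(\rho^\mathrm{TF})+CZ^{25/11}$ from Lemma~\ref{thm.initial} and optimizing the free parameters $\ell$ and $\lambda$ isolates the desired Coulomb bound $D((\rho_0-\rho^\mathrm{TF})\1_{A_s^c})\lesssim s^{-1+2\epsilon_2}$, from which the Coulomb estimate of \cite{mullerIC} delivers the pointwise bound $\lesssim s^{-4+\epsilon_2}$ on $\partial A_s$.

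\textbf{Main obstacle.} The principal difficulty, compared with the reduced Hartree--Fock analysis of \cite{Samojlow}, is the handling of $E_\mathrm{xc}^\mathrm{LDA}$ throughout both the localization and the semi-classical steps: since $g$ is neither linear nor convex, the splitting errors cannot be absorbed by convexity and must be controlled using the subcritical growth $\beta_+\le 2/5$ from (\ref{def.condition}). The parameter $\epsilon$ in Proposition~\ref{prop.apriori} has to be chosen as a small power of $s$ so that the TF kinetic coefficient is degraded only mildly, while at the same time the absorbed exchange-correlation mass remains small enough to leave a strictly positive $\epsilon_2$ in the final exponent. A secondary obstacle is that all constants, in particular $\beta_1$ and $C_2$, must be universal in $K$ and $\uZ$; this is where the assumptions $z_\mathrm{min}\ge 1$ and $z_\mathrm{min}\ge\delta_0 z_\mathrm{max}$ and the full strength of the molecular Sommerfeld estimate become essential.
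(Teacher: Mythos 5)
Your high-level plan---localize at scale $s$, compare Kohn--Sham with Thomas--Fermi, extract a Coulomb bound, and convert to a pointwise bound on $\partial A_s$ via the estimate of \cite{mullerIC}---correctly identifies the shape of the argument, and you correctly identify the exponents $\xi,\eta$ as coming from linearizing about the Sommerfeld profile. But there is a genuine gap, and it concerns the central object of Solovej's bootstrap: the \emph{exterior Thomas--Fermi minimizer} $\rho_r^{\mathrm{TF}}$, defined as the minimizer of $\E_{V_r}^{\mathrm{TF}}$ with $V_r = \1_{A_r}\Phi_r$, never appears in your proposal. Without it your comparison collapses back to the initial-step bound and cannot improve with scale. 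Concretely, your Step~2 compares the semi-classical lower bound for the screened problem against the \emph{global} TF energy and the \emph{global} upper bound $E_{V_{\uR}}(Z)\le \E^{\mathrm{TF}}_{V_{\uR}}(\rho^{\mathrm{TF}})+CZ^{25/11}$ from Lemma~\ref{thm.initial}. That $Z^{25/11}$ error does not shrink as $r$ grows toward $R_\mathrm{min}/4$, so the resulting Coulomb bound is of order $Z^{25/11}$, not $r^{-7+\epsilon}$. What the paper does instead is anchor both the upper and lower bounds to the exterior problems at scale $r$: Lemma~\ref{lem.rhf} sandwiches $\E_{V_{\uR}}(\gamma_0)$ by $\E_{V_{\uR}}(\eta_-\gamma_0\eta_-)+\E^{\mathrm{rHF}}_r(\cdot)$ up to a controlled error; Lemma~\ref{step4} then compares $\E^{\mathrm{rHF}}_r(\eta_r\gamma_0\eta_r)$ with $\E_r^{\mathrm{TF}}(\rho_r^{\mathrm{TF}})$ (the upper bound comes from coherent-state quantizing $\rho_r^{\mathrm{TF}}$, not from Lemma~\ref{thm.initial}), yielding $D(\rho_r^{\mathrm{TF}}-\1_{A_r}\rho_0)\le Cr^{-7+1/3}$; and Lemma~\ref{step3} separately controls $\phi_r^{\mathrm{TF}}-\phi^{\mathrm{TF}}$ and $\rho_r^{\mathrm{TF}}-\rho^{\mathrm{TF}}$ on $\partial A_s$ by pure TF stability (molecular Sommerfeld). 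The proof of Lemma~\ref{thm.iterative} is then a three-term decomposition of $\Phi_s-\Phi_s^{\mathrm{TF}}$ through $\phi_r^{\mathrm{TF}}$ and $\rho_r^{\mathrm{TF}}$, in which each piece is small because of a separate mechanism.

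A related issue is the Coulomb object you propose to bound, $D\bigl((\rho_0-\rho^{\mathrm{TF}})\1_{A_s^c}\bigr)$. This has support reaching all the way into the singular cores, so its $L^{5/3}$ norm is of order a power of $Z$ rather than of $r^{-7}$, and the Coulomb estimate of \cite{mullerIC} would then not produce $s^{-4+\epsilon_2}$. The paper's object is $\1_{B(R_j,s)}(\rho_r^{\mathrm{TF}}-\1_{A_r}\rho_0)$, which is supported in the annulus $A_r\cap B(R_j,s)$ where $\int\rho_0^{5/3}$ and $\int(\rho_r^{\mathrm{TF}})^{5/3}$ are both $\lesssim r^{-7}$ by the hypothesis $(s,\beta_1,0)\in\mathcal{A}$; this is precisely why the $L^{5/3}$ factor improves with scale. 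Finally, your sketch omits the step in which one must show the exterior chemical potential $\mu_r$ vanishes (Lemma~\ref{step3}, Step~2); without $\mu_r=0$ the exterior TF equation does not match the Sommerfeld profile and the comparison estimates fail. Your comments on the LDA term (absorbing $E^{\mathrm{LDA}}_{\mathrm{xc}}$ using the growth condition $\beta_+\le 2/5$) are on the right track and are indeed where the paper departs from the reduced Hartree--Fock case, but they do not replace the missing exterior TF bridge.
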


\begin{remark}
The Sommerfeld asymptotic refers to $\phi^\mathrm{TF} (x) \sim 3^4 2^{-3} \pi^2 |x|^{-4}$ for large $|x|$, and  the important thing to our purpose is the next order.
The above $\eta$ and $\xi$ are the solutions of $p^2 -7p = 6$, which comes from comparing $\Delta |x|^{-4}(1+ |x|^p) = 12|x|^{-6} (1+(p^2 -7p +12)|x|^p/12)$ with $(|x|^{-4}(1+|x|^p))^{3/2} \sim |x|^{-6}(1+3|x|^p/2)$.
Our $\xi$ and $\eta$ are needed for large $|x|$ and for $x$ close to $\partial A_r$ respectively.
\end{remark}

To prove Lemma~\ref{thm.iterative}, we collect the properties of elements in $\mathcal{A}$.
\begin{lemma}
\label{step1}
Let $\beta, D \in (0, R_\mathrm{min}/4]$ be some constants.
We assume that $(r, \beta, 0) \in \mathcal{A}$ holds for all $r \le D$.
Then for any $r \in (0, D]$ we have

\begin{equation}
\label{ite1}
\sup_{A_r}\left| \Phi_r\right| \le \frac{C}{r^4},
\end{equation}

\begin{equation}
\label{ite2}
\left|  \int_{A_r^c} (\rho_0 - \rho^\mathrm{TF}) \right| \le \frac{C\beta}{r^3},
\end{equation}

\begin{equation}
\label{ite3}
\int_{A_r} \rho_0 \le \frac{C}{r^3},
\end{equation}

\begin{equation}
\label{ite4}
\int_{A_r} \rho_0^{5/3} \le \frac{C}{r^7},
\end{equation}

\begin{equation}
\label{ite5}
\tr (-\Delta \eta_r \gamma_0 \eta_r) \le C\left(\frac{1}{r^7}+ \frac{1}{\lambda^2 r^5}\right), \quad \text{for any } \lambda \in (0, 1/2].
\end{equation}
\end{lemma}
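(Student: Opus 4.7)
The plan is to establish the five bounds in the order listed: bounds (\ref{ite1})--(\ref{ite3}) rest on harmonicity and the maximum principle combined with the Thomas-Fermi Sommerfeld estimate for molecules, while (\ref{ite4})--(\ref{ite5}) require a localized semi-classical argument.

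For (\ref{ite1}), I would observe that $\Phi_r$ is harmonic on the exterior domain $A_r$: the singularities of $V_{\uR}$ lie at the $R_j\in A_r^c$ and the subtracted charge $\rho_0\1_{A_r^c}$ is supported in $A_r^c$. Since $\Phi_r$ decays at infinity at rate $|x|^{-1}$, the maximum principle on $A_r$ reduces the bound to the boundary; on $\partial A_r$ the hypothesis $(r,\beta,0)\in\mathcal{A}$ combined with the uniform-in-$z_j$ Thomas-Fermi Sommerfeld bound $|\Phi_r^{\mathrm{TF}}(x)|\le Cr^{-4}$ finishes the estimate.

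For (\ref{ite2}), I set $\psi:=\Phi_r-\Phi_r^{\mathrm{TF}}$; it is harmonic on $A_r$ with far-field behaviour $-q/|x|$, where $q:=\int_{A_r^c}(\rho_0-\rho^{\mathrm{TF}})$. The natural positive harmonic barrier is $u(x):=C\beta r^{-3}\sum_{j=1}^K|x-R_j|^{-1}$; on $\partial A_r$ some $|x-R_j|$ equals $r$, so $u(x)\ge C\beta r^{-4}\ge|\psi(x)|$ for $C$ large enough. Applying the maximum principle to the harmonic functions $u\pm\psi$ on $A_r$, both of which vanish at infinity, forces $|\psi|\le u$ on $A_r$; comparing the $|x|^{-1}$ coefficients as $|x|\to\infty$ yields $|q|\le CK\beta r^{-3}$. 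Bound (\ref{ite3}) is then immediate from $\int_{A_r}\rho_0=N-\int_{A_r^c}\rho_0\le\int_{A_r}\rho^{\mathrm{TF}}+|q|$ (using $N\le Z=\int\rho^{\mathrm{TF}}$) and the TF Sommerfeld bound $\rho^{\mathrm{TF}}(x)\le C\min_j|x-R_j|^{-6}$, which integrates to $Cr^{-3}$ on $A_r$.

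The main obstacle is (\ref{ite5}); once it is proved, (\ref{ite4}) will follow from the Lieb-Thirring inequality applied to the density matrix $\eta_r\gamma_0\eta_r$ (whose density is $\eta_r^2\rho_0\ge\1_{A_{(1+\lambda)r}}\rho_0$), giving $\int_{A_{(1+\lambda)r}}\rho_0^{5/3}\le C\tr(-\Delta\eta_r\gamma_0\eta_r)$; choosing $\lambda=1/2$ and relabelling $r\leftrightarrow(2/3)r$ then yields the stated form. For (\ref{ite5}) itself I plan an energy-comparison argument using the trial state $\tilde\gamma:=\eta_-\gamma_0\eta_-+\eta_+\gamma_0\eta_++\tilde\gamma_2$, where $\tilde\gamma_2$ is a rank-$\lceil\int\eta_r^2\rho_0\rceil$ density matrix built from low-lying Dirichlet eigenfunctions of $-\Delta$ in a ball of radius $\sim r$ placed at distance $\ge R_{\mathrm{min}}/2$ from every nucleus; Weyl/Fermi scaling combined with $\int\eta_r^2\rho_0\le Cr^{-3}$ from (\ref{ite3}) delivers $\tr(-\Delta\tilde\gamma_2)\le Cr^{-7}$. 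Inserting $\tilde\gamma$ in $\E_{V_{\uR}}(\gamma_0)\le\E_{V_{\uR}}(\tilde\gamma)$ and rearranging with the IMS identity transfers the cost onto $\tr(-\Delta\eta_r\gamma_0\eta_r)$, with the IMS error $\sum_\#\int|\nabla\eta_\#|^2\rho_0$ absorbed by $C\lambda^{-2}r^{-5}$ via (\ref{ite3}). The hard step will be controlling the changes in potential, direct-Coulomb, and exchange-correlation energies uniformly in the nuclear charges: the combination $\int V_{\uR}\eta_r^2\rho_0-2D(\rho_0,\eta_r^2\rho_0)$ must be regrouped so that the bare potential is replaced by the universally bounded screened potential $\Phi_r$ from (\ref{ite1}), and the $E_{\mathrm{xc}}^{\mathrm{LDA}}$ discrepancy handled via (\ref{eq.exchange}) together with (\ref{def.condition})---more delicate than in reduced Hartree-Fock theory because of the non-linearity of $E_{\mathrm{xc}}^{\mathrm{LDA}}$.
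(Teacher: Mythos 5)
Items \eqref{ite1}--\eqref{ite4} of your proposal are essentially the paper's argument in disguise.  For \eqref{ite1} you split $\Phi_r=(\Phi_r-\Phi_r^{\mathrm{TF}})+\Phi_r^{\mathrm{TF}}$, bound the first term on $\partial A_r$ by the hypothesis and extend by the maximum principle, and bound the second by the molecular Sommerfeld estimate; same as the paper.  Your barrier function $u(x)=C\beta r^{-3}\sum_j|x-R_j|^{-1}$ for \eqref{ite2} is precisely the content of the paper's Lemma~\ref{lem.outer} specialized to $f=\Phi_r-\Phi_r^{\mathrm{TF}}$; reading off the $|x|^{-1}$ coefficient at infinity is the same maneuver.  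Your derivation of \eqref{ite3} from \eqref{ite2} is the paper's $\int_{A_r}\rho_0=\int_{A_r}\rho^{\mathrm{TF}}+\int_{A_r^c}(\rho^{\mathrm{TF}}-\rho_0)$ (recall that in Section~3 one has $N=Z$).  The step from \eqref{ite5} to \eqref{ite4} via Lieb--Thirring applied to $\eta_{r'}\gamma_0\eta_{r'}$ and a harmless rescaling of $r$ is also what the paper does.

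For \eqref{ite5} your route diverges, and this is where the proof actually has content and your sketch has a gap.  Two comments.  First, building a finite-rank $\tilde\gamma_2$ out of Dirichlet eigenfunctions in a ball far from all nuclei is more work than needed: the paper simply takes the exterior trial state to be $\gamma=0$ in its splitting lemma (Lemma~\ref{lem.rhf}) and compensates for the loss of electron number through the monotonicity of $N\mapsto E_{V_{\underline R}}(N)$ (this monotonicity is available from \cite[Lem.~1]{KS} and is exactly what the paper invokes).  This yields $\E_r^{\mathrm{rHF}}(\eta_r\gamma_0\eta_r)\le\mathcal R$ with no need to estimate a trial-state energy, after which a Lieb--Thirring lower bound on $\E_r^{\mathrm{rHF}}(\eta_r\gamma_0\eta_r)$ (using \eqref{ite1} and a version of Lemma~\ref{lem.outer} to express $\Phi_r$ in terms of $\sup_{\partial A_r}|r\Phi_r|$) closes the estimate.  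Second, and more importantly, the tool you cite for the LDA term, namely \eqref{eq.exchange}, is a global a-priori bound $E_{\mathrm{xc}}^{\mathrm{LDA}}(\rho_\gamma)\le\epsilon\int\rho_\gamma^{5/3}+2c_\epsilon\tr\gamma$; it does not control the IMS-type discrepancy $\int\bigl(g(\rho_0)-\sum_{\#}g(\eta_\#^2\rho_0)\bigr)$ that actually arises when you split the functional across the partition of unity.  The paper instead uses a pointwise mean-value estimate based on $g'\ge0$ and the growth condition in \eqref{def.condition}, namely $g(\rho_0)-g(\eta_-^2\rho_0)\le C(\rho_0^{\beta_-}+\rho_0^{\beta_+})(\eta_+^2+\eta_r^2)\rho_0$, which it then absorbs into the kinetic term by H\"older followed by Lieb--Thirring.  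Without this localized estimate (and the observation that the errors it produces are of the admissible size $\lambda^{-2}r^{-5}+r^{-7}$ after an application of \eqref{ite3}), your argument does not close.  This is precisely the ``more delicate than rHF'' step you flagged; it is the one place where the LDA nonlinearity forces new work and it should be carried out explicitly rather than appealed to \eqref{eq.exchange}.
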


\begin{proof}[Proof of Lemma~\ref{step1}]
We may split
\[
\Phi_r (x) = \Phi_r (x) - \Phi_r^\mathrm{TF}(x) +\Phi_r^\mathrm{TF}(x).
\]
From the Sommerfeld bound and the relation $\phi^\mathrm{TF} \le \sum_{j=1}^K\phi^\mathrm{TF}_{z_j}$~\cite[Cor.~3.6]{LiebTF}, where $\phi_{z_j}^\mathrm{TF}$ is the TF potential for the density $\rho^\mathrm{TF}_{z_j}$, we can see that for $x \in A_r$
\[
\Phi_r^\mathrm{TF}(x)  = \phi^\mathrm{TF}(x) + \int_{A_r} \frac{\rho^\mathrm{TF}(y)}{|x-y|} \, dy \le Cr^{-4}.
\]
Then (\ref{ite1}) follows from our assumption.

Next, we use the following lemma.
\begin{lemma}
\label{lem.outer}
Let $f_j$ be a continuous harmonic function on $B(R_j, r)^c$ vanishing at infinity and $f \coloneqq \sum_{j=1}^Kf_j$.
Then we have for any $x \in A_r$ with $r \in (0, R_\mathrm{min}/4]$
\[ 
|f(x)| \le \frac{4}{3}r \sup_{\partial A_r} |f| \sum_{j=1}^K |x-R_j|^{-1}.
\]
\end{lemma}

\begin{proof}[Proof of Lemma~\ref{lem.outer}]
We note that  $|x-R_j| |f_{j}(x)| \le r  \sup_{\partial B(R_j, r)}|f_{j}|$ for any $x \in B(R_j, r)^c$ by the maximum principle (see~\cite[Lem.~6.5]{TFDWIC}).
Then we have for any fixed $j$ and $x \in A_r$
\[
\left|\sum_{i \neq j} f_i(x) \right| \le \sup_{\partial A_r} |f| + \frac{r}{R_\mathrm{min} -r} \sup_{\partial B(R_j, r)} |f_{j}|.
\]
Since $f_{j} = f - \sum_{i \neq j} f_{i}$, we see that $\sup_{\partial B(R_j, r)} |f_j| \le (4/3) \sup_{\partial A_r} |f|$ and thus for any $x \in A_r$
\[
\left| f(x) \right| \le \sum_{j=1}^K \frac{r}{|x-R_j|} \sup_{\partial B(R_j, r)} |f_j| \le  \sup_{\partial A_r} |f|\sum_{j=1}^K \frac{4r}{3|x-R_j|}, \quad \forall x \in A_r,
\]
which shows the lemma
\end{proof}

Using Lemma~\ref{lem.outer} with $f = \Phi_r - \Phi_r^\mathrm{TF}$, we have
\[
\left|\int_{A_r^c}(\rho_0 - \rho^\mathrm{TF}) \right| =\lim_{|x| \to \infty} |x|\left|\Phi_r^\mathrm{TF}(x) - \Phi_r(x) \right| \le \frac{4}{3} \beta K r^{-3}.
\]
This shows (\ref{ite2}).
Then (\ref{ite3}) follows from the Sommerfeld bound $\int_{A_r} \rho^\mathrm{TF}\le Cr^{-3}$ and splitting
\[
\int_{A_r} \rho_0 = \int_{A_r} \rho^\mathrm{TF} + \int_{A_r^c} (\rho^\mathrm{TF} - \rho_0) \le Cr^{-3},
\]
where we have used (\ref{ite2}).

Now we introduce the exterior reduced Hartree-Fock model
\[
\E_r^\mathrm{rHF} (\gamma) \coloneqq \E_{\Phi_r}^\mathrm{rHF} (\gamma)  =\tr \left[\left(-\frac{\Delta}{2} - \Phi_r\right)\gamma\right] + D(\rho_\gamma).
\]
Then we can split outsides from insides as follows.
\begin{lemma}
\label{lem.rhf}
For any $r \in (0, R_\mathrm{min}/4], \lambda \in (0, 1/2]$ and for any $0 \le \gamma \le 1$ satisfying
\[
\mathrm{supp} \, \rho_\gamma \subset A_r, \quad \tr \gamma \le \int_{A_r} \rho_0,
\]
it holds that
\[
\E_{V_{\uR}}(\eta_- \gamma_0 \eta_-) + \E_r^\mathrm{rHF}(\eta_r \gamma_0 \eta_r) - \mathcal{R} \le \E_{V_{\uR}}(\gamma_0) \le \E_{V_{\uR}}(\eta_- \gamma_0 \eta_-) + \E_r^\mathrm{rHF}(\gamma),
\]
where
\begin{equation}
\begin{split}
\mathcal{R} &\le C(1+ (\lambda r)^{-2}) \int_{A_{(1-\lambda)r} \cap A^c_{(1+\lambda)r} } \rho_0  +C \lambda r^3 \sup_{A_{(1-\lambda)r}}[\Phi_{(1-\lambda)r}]_+^{5/2}\\
&\quad +  C\left(\tr(-\Delta \eta_r \gamma_0 \eta_r) \right)^{1/2} \left(\int \eta_r \rho_0 \right)^{1/2}
\end{split}
\end{equation}
\end{lemma}

\begin{proof}[Proof of Lemma~\ref{lem.rhf}]
First, we note that $N  \mapsto E_{V_{\uR}}(N)$ is non-increasing  by~\cite[Lem.~1]{KS}.
Since $\eta_-$ and $\rho_\gamma$ have disjoint supports, we obtain
\begin{align*}
\E_{V_{\uR}}(\gamma_0) &\le \E_{V_{\uR}}(\gamma + \eta_- \gamma_0 \eta_-) \\
& = \E_{V_{\uR}}(\eta_- \gamma_0 \eta_-) + \E_{V_{\uR}}(\gamma) + 2D(\eta_-^2 \rho_0, \rho_\gamma) \\
&\le \E_{V_{\uR}}(\eta_- \gamma_0 \eta_-) + \E_r^\mathrm{rHF}(\gamma),
\end{align*}
which is the desired upper bound.

Second, by the IMS formula we see that
\begin{align*}
\E_{V_{\uR}}(\gamma_0) &= \sum_{\#= +, -, r} \left(\E_{V_{\uR}}(\eta_\# \gamma_0 \eta_\#) - \int |\nabla \eta_\#|^2 \rho_0  \right) \\
&\quad +2D(\eta_r^2 \rho_0, (\eta^2_+ + \eta^2_-)\rho_0) + 2D(\eta^2_- \rho_0, \eta_+^2 \rho_0) \\
&\quad -\int \left(g(\rho_0) -\sum_{\#=+, -, r}g(\eta^2_\# \rho_0)\right).
\end{align*}
For the error terms, we have
\[
\sum_{\# = +, -, r}\int |\nabla \eta_\#|^2 \rho_0 \le C(\lambda r)^{-2} \int_{A_{(1-\lambda)r} \cap A_{(1+\lambda)r}^c} \rho_0.
\]
Next, a simple computation shows that
\begin{align*}
\E_{V_{\uR}}&(\eta_r \gamma_0 \eta_r) + 2D(\eta_r^2 \rho_0, (\eta^2_+ + \eta^2_-)\rho_0) \\
& \ge \E_{V_{\uR}}(\eta_r \gamma_0 \eta_r) +2D(\eta_r^2 \rho_0, \1_{A_r^c} \rho_0) \\
&= \E_r^\mathrm{rHF}(\eta_r \gamma_0 \eta_r) -  E_\mathrm{xc}^\mathrm{LDA}(\eta_r^2 \rho_0),
\end{align*}
and
\begin{align*}
\E_{V_{\uR}}&(\eta_+ \gamma_0 \eta_+) + 2D(\eta_+^2 \rho_0,  \eta^2_- \rho_0) \\
& \ge \E_{V_{\uR}}(\eta_+ \gamma_0 \eta_+) + 2D(\eta_+^2 \rho_0,  \1_{A^c_{(1-\lambda)r}} \rho_0)  \\
&= \E_{(1-\lambda)r}^\mathrm{rHF}(\eta_+ \gamma_0 \eta_+) -  E_\mathrm{xc}^\mathrm{LDA}(\eta_+^2 \rho_0).
\end{align*}
We note that
\[
g(\rho_0) - g(\eta_-^2 \rho_0) \le C(\rho_0^{\beta_-} + \rho_0^{\beta_+}) (\eta_+^2 + \eta_r^2) \rho_0,
\]
and, by H\"older's inequality and the Lieb-Thirring inequality, for any $\beta \le 2/5$ and $ 0 \le \chi \le 1$
\begin{align*}
\int_{A_{(1-\lambda)r}} \rho_0^{1+\beta} \chi^2 &\le \left( \int (\chi^2 \rho_0)^{5/3}\right)^{3 \beta /2} \left(\int_{A_{(1-\lambda)r}} \rho_0 \right)^{1-3 \beta /2}
\\
&\le C \left( \tr (- \Delta \chi \gamma_0 \chi)\right)^{3 \beta /2} \left(\int_{A_{(1-\lambda)r}} \rho_0 \right)^{1-3 \beta /2} \\
&\le \frac{1}{8}  \tr (- \Delta \chi \gamma_0 \chi) + C \int_{A_{(1-\lambda)r}} \rho_0.
\end{align*}
In the last inequality, we have used the simple inequality  $a^{\alpha} b^{\beta} \le \epsilon \alpha a + \beta \epsilon^{-\alpha \beta^{-1}}b$ for arbitrary $a, b, \epsilon  >0$, and $0 < \alpha < 1$, $0 < \beta < 1$ such that $\alpha + \beta = 1$ (recall the proof of Proposition~\ref{prop.apriori}).
The Lieb-Thirring inequality with $V = \Phi_{(1-\lambda)r} \1_{\mathrm{supp} \eta_+}$ implies that
\[
\tr \left[ \left(-\frac{\Delta}{4} - \Phi_{(1-\lambda)r} \right) \eta_+ \gamma_0 \eta_+ \right] \ge -C\int [V]_+^{5/2} \ge - C \lambda r^3 \sup_{A_{(1-\lambda)r}}[\Phi_{(1-\lambda)r}]_+^{5/2}.
\]
Together with these estimates, we have the lemma.
\end{proof}

Applying Lemma~\ref{lem.rhf}, we can obtain the kinetic energy estimate.
\begin{lemma}
\label{lem.kinetic}
For all $r \in (0, R_\mathrm{min}/4]$ and all $\lambda \in (0, 1/2]$ it holds that
\begin{align*}
\tr (- \Delta \eta_r \gamma_0 \eta_r) &\le C(1+ (\lambda r)^{-2}) \int_{A_{(1-\lambda)r} } \rho_0 \\
&\quad +C \lambda r^3 \sup_{A_{(1-\lambda)r}}[\Phi_{(1-\lambda)r}]_+^{5/2} 
+ C\sup_{\partial A_r}|r\Phi_r|^{7/3}.
\end{align*}
\end{lemma}

\begin{proof}[Proof of Lemma~\ref{lem.kinetic}]
We use Lemma \ref{lem.rhf} with $\gamma = 0$ and obtain $\E_r^\mathrm{rHF} (\eta_r \gamma_0 \eta_r)\le \mathcal{R}$.
On the other hand, by the Lieb-Thirring inequality and property of the ground state energy of TF theory, we have

\begin{align*}
\E_r^\mathrm{rHF}(\eta_r \gamma_0 \eta_r) 
&\ge  \tr \left(-\frac{\Delta}{4} \eta_r \gamma_0 \eta_r\right) +C^{-1} \int (\eta_r^2 \rho_0)^{5/3}  \\
&\quad- C\sup_{\partial A_r} |r\Phi_{r}|\sum_{j=1}^K \int \eta_r^2 \frac{\rho_0(x)}{|x-R_j|}\, dx +D(\eta_r^2 \rho_0) \\
&\ge  \tr \left(-\frac{\Delta}{4} \eta_r \gamma_0\eta_r\right)  - C \sup_{ \partial A_r}|r\Phi_{r}|^{7/3},
\end{align*}
where we have used Lemma~\ref{lem.outer}.
This completes the proof.
\end{proof}

Combining this with (\ref{ite3}), we deduce from $(1-\lambda)r > r/3$ that
\[
\tr (- \Delta \eta_{r} \gamma_0 \eta_{r}) \le C\left(\lambda^{-2}r^{-5} + r^{-7} \right),
\]
which shows (\ref{ite5})
Replacing $r$ by $r/3$, we learn
\[
\int_{A_r} \rho_0^{5/3} \le \int (\eta^2_{r/3} \rho_0)^{5/3} \le C\tr (- \Delta \eta_{r/3} \gamma_0 \eta_{r/3})\le C\left(\lambda^{-2}r^{-5} + r^{-7} \right),
\]
where we have used the Lieb-Thirring inequality.
Choosing $\lambda = 1/2$, we have (\ref{ite4}).
\end{proof}

With $V_{r}(x) = \1_{A_r}\Phi_{r}(x)$, we denote the exterior Thomas-Fermi functional $\E_{V_r}^\mathrm{TF}(\rho)$ briefly by $\E_{r}^\mathrm{TF}(\rho) $.
The following lemmata are very similar to that of~\cite[Lem.~6.4,~6.6,~6.8]{GotorHF}, but we provide their proofs for the reader's convenience.
\begin{lemma}
\label{step2}
The exterior TF energy $E_{r}^\mathrm{TF}(\tr (\1_{A_r}\gamma_0 \1_{A_r}))$ has a unique minimizer $\rho^\mathrm{TF}_{r}$, which is supported on $A_r$ and satisfies the TF equation
\[
\frac{1}{2}(3\pi^2)^{2/3}\rho^\mathrm{TF}_{r}(x)^{2/3} = [\phi_{r}^\mathrm{TF}(x)-\mu_r]_+
\]
with $\phi^\mathrm{TF}_{r}(x) = V_{r}(x) - \rho_{r}^\mathrm{TF} \star |x|^{-1}$ and a constant $\mu_r \ge 0$. 
Moreover, 
\begin{itemize}
\item[(i)]
If $\mu_r > 0$, then 
\[
\int \rho^\mathrm{TF}_{r} = \int_{A_r}\rho_0.
\]
\item[(ii)]
If $(r, \beta, 0) \in \mathcal A$ holds true for some $\beta$ and any $r \in (0, D]$ with $D \in (0, R_\mathrm{min}/4]$, then
\begin{equation*}
\label{TFminbound}
\int (\rho_{r}^\mathrm{TF})^{5/3} \le Cr^{-7}, \quad \text{for any } r \in (0, D].
\end{equation*}
\end{itemize}
\end{lemma}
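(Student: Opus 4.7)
The plan is to reduce the first part of the lemma to standard Thomas-Fermi theory applied to the external potential $V_r$, and then to combine the Euler-Lagrange (TF) equation with the a priori bounds from Lemma~\ref{step1} to obtain the $L^{5/3}$ estimate in (ii).

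First I would verify that $V_r = \1_{A_r}\Phi_r$ lies in $L^{5/2}(\R^3) + L^\infty(\R^3)$: it vanishes on $A_r^c$, is bounded by $Cr^{-4}$ on $A_r$ by (\ref{ite1}), and decays like $|x|^{-1}$ at infinity with a bounded coefficient. Strict convexity of $\rho \mapsto \int \rho^{5/3}$ and of $\rho \mapsto D(\rho)$ together with the classical existence arguments of \cite{LiebSimon1977, LiebTF} then yield a unique minimizer $\rho_r^\mathrm{TF}$ of $\E_r^\mathrm{TF}$ under the constraint $\int \rho \le n_r$, where $n_r \coloneqq \tr(\1_{A_r}\gamma_0\1_{A_r}) = \int_{A_r}\rho_0$. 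The associated Euler-Lagrange equation is precisely the TF equation in the statement, with Lagrange multiplier $\mu_r \ge 0$ by the KKT condition for an inequality constraint, and $\mu_r = 0$ exactly when the constraint is inactive.

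For the support, I would argue directly from the TF equation: for $x \in A_r^c$ one has $V_r(x) = 0$, so $\phi_r^\mathrm{TF}(x) = -\rho_r^\mathrm{TF}\star|\cdot|^{-1}(x) \le 0 \le \mu_r$, and therefore $[\phi_r^\mathrm{TF}(x)-\mu_r]_+ = 0$, forcing $\rho_r^\mathrm{TF}(x) = 0$. Item (i) is then the KKT saturation statement: $\mu_r > 0$ implies the mass constraint is active, so $\int \rho_r^\mathrm{TF} = n_r = \int_{A_r}\rho_0$.

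For (ii), the key move is to multiply the TF equation by $\rho_r^\mathrm{TF}$ and integrate on its support, where $[\phi_r^\mathrm{TF}-\mu_r]_+$ unfolds as $\phi_r^\mathrm{TF}-\mu_r$. Using $\mu_r \ge 0$, $\rho_r^\mathrm{TF} \ge 0$, and $D(\rho_r^\mathrm{TF}) \ge 0$ this yields
\[
\tfrac{1}{2}(3\pi^2)^{2/3}\int (\rho_r^\mathrm{TF})^{5/3} \le \int \rho_r^\mathrm{TF}\phi_r^\mathrm{TF} = \int \rho_r^\mathrm{TF} V_r - 2D(\rho_r^\mathrm{TF}) \le \int_{A_r}\rho_r^\mathrm{TF}\Phi_r.
\]
Invoking $\Phi_r \le Cr^{-4}$ on $A_r$ from (\ref{ite1}) and $\int \rho_r^\mathrm{TF} \le n_r \le Cr^{-3}$ from (\ref{ite3}) closes the estimate at the stated rate $Cr^{-7}$ uniformly in $r \in (0, D]$. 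No step here is a genuine obstacle: the support property and the sign of $\mu_r$ are KKT/definition-chasing, and (ii) is essentially a one-line consequence of the TF equation together with Lemma~\ref{step1}; the only conceptual point is recognizing that multiplying the TF equation by $\rho_r^\mathrm{TF}$ is the right device to couple the $L^{5/3}$ norm to the size of $\Phi_r$ and the total mass in $A_r$.
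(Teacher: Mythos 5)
Your proposal is correct, and for part (ii) it follows a genuinely different route than the paper. The setup, the support inclusion via the TF equation and $\phi_r^\mathrm{TF}\le V_r$, and item~(i) via KKT saturation match the paper's appeal to \cite{LiebSimon1977}. For item~(ii), however, you multiply the TF equation by $\rho_r^\mathrm{TF}$, integrate, drop $\mu_r\ge0$ and $2D(\rho_r^\mathrm{TF})\ge0$, and then couple $\sup_{A_r}\Phi_r\le Cr^{-4}$ (from~(\ref{ite1})) with the mass bound $\int\rho_r^\mathrm{TF}\le\int_{A_r}\rho_0\le Cr^{-3}$ (from~(\ref{ite3})). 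The paper instead starts from the trivial energy inequality $\E_{V_r}^\mathrm{TF}(\rho_r^\mathrm{TF})\le\E_{V_r}^\mathrm{TF}(0)=0$, uses Lemma~\ref{lem.outer} (again resting on~(\ref{ite1})) to dominate $V_r$ on $A_r$ by $Cr^{-3}\sum_j|x-R_j|^{-1}$, and then invokes the molecular TF-energy lower bound $\inf_\rho\E_V^\mathrm{TF}(\rho)\ge -C\sum_j z_j^{7/3}$ with effective charges $\sim r^{-3}$ to land on $-Cr^{-7}$. Both arguments give the same rate and draw on the same a priori input; yours avoids Lemma~\ref{lem.outer} and the TF-infimum bound in exchange for the additional use of~(\ref{ite3}), and is arguably the more direct computation since the TF equation already contains exactly the coupling you exploit.
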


\begin{proof}
By $\phi_r^\mathrm{TF} \le V_r$ and the TF equation, $\mathrm{supp} \, \rho^\mathrm{TF}_{r} \subset A_r$ follows.
From the fact that $\inf_{\rho \ge 0 }\E^\mathrm{TF}_{V_{\uR}}(\rho) \ge -C\sum_{j} z_j^{7/3}$ and Lemma~\ref{lem.outer}, we can see
\begin{align*}
0 \ge \E_{V_r}^\mathrm{TF}(\rho_r^\mathrm{TF}) &\ge \frac{3}{10}(3\pi^2)^{2/3} \int (\rho_r^\mathrm{TF})^{5/3} - Cr^{-3}\sum_{j=1}^K\int \rho_r^\mathrm{TF}(x) |x-R_j|^{-1}\, dx + D(\rho_r^\mathrm{TF})\\
&\ge  \frac{3}{5}(3\pi^2)^{2/3}\int (\rho_r^\mathrm{TF})^{5/3} -Cr^{-7},
\end{align*}
which shows (ii).
The rest of the proof was shown in~\cite{LiebSimon1977}.
\end{proof}

\begin{lemma}
\label{step3}
Let $D \in [  z_\mathrm{min}^{-1/3}, R_\mathrm{min}/4]$.
We can choose a universal constant $\beta > 0$ small enough such that, if $(r, \beta, 0) \in \mathcal{A}$ holds for any $r \in [z_\mathrm{min}^{-1/3}, D]$, then $\mu_{r} = 0$ and for any $s \in [r, \tilde r]$ with $\tilde r = r^{\frac{\xi}{\xi + \eta}}(R_\mathrm{min}/4)^{\frac{\eta}{\xi+\eta}}$ it follows that
\begin{align}
\label{step3a}
\sup_{x\in \partial A_s}|\phi_{r}^\mathrm{TF}(x) - \phi^\mathrm{TF}(x)|&\le C(r/s)^\xi s^{-4}, \\
\label{step3b}
\sup_{x\in \partial A_s}|\rho_{r}^\mathrm{TF}(x) - \rho^\mathrm{TF}(x)| &\le C(r/s)^\xi s^{-6}.
\end{align}
\end{lemma}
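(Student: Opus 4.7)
The plan is to compare $\phi_r^\mathrm{TF}$ with the full TF potential $\phi^\mathrm{TF}$ via a nonlinear maximum-principle argument built around the Sommerfeld asymptotic. The indicial exponents $\xi,\eta$ arise from linearising the TF equation $-\Delta \phi = c\,\phi^{3/2}$ around the Sommerfeld solution $\phi_S(|x|) = a_S|x|^{-4}$: perturbations of the form $|x|^{-4}|x|^p$ satisfy $p^2 - 7p = 6$ to leading order, which is precisely the characteristic equation whose roots are $-\xi$ and $\eta$. A direct computation from the definitions of $\phi_r^\mathrm{TF},\phi^\mathrm{TF},\Phi_r,\Phi_r^\mathrm{TF}$ gives, for any $x \in A_r$,
\[
\phi_r^\mathrm{TF}(x) - \phi^\mathrm{TF}(x) = \bigl[\Phi_r(x) - \Phi_r^\mathrm{TF}(x)\bigr] + \int_{A_r} \frac{\rho^\mathrm{TF}(y) - \rho_r^\mathrm{TF}(y)}{|x-y|}\,dy,
\]
in which the first bracket is harmonic on $A_r$ with boundary data of size $\beta r^{-4}$ by hypothesis, hence controlled throughout $A_r$ by Lemma~\ref{lem.outer}.

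To establish $\mu_r = 0$, I would argue by contradiction: if $\mu_r > 0$, then Lemma~\ref{step2}(i) forces $\int \rho_r^\mathrm{TF} = \int_{A_r}\rho_0$, while (\ref{ite2}) gives $\int_{A_r}\rho_0 = \int_{A_r}\rho^\mathrm{TF} + O(\beta r^{-3})$. Since $\mathbf{1}_{A_r}\rho^\mathrm{TF}$ is precisely the neutral minimiser of the exterior TF problem with potential $\Phi_r^\mathrm{TF}$ (this follows directly from the TF equation for $\phi^\mathrm{TF}$ restricted to $A_r$), the identity above together with the strict convexity/uniqueness argument from~\cite{LiebSimon1977} compares $\rho_r^\mathrm{TF}$ with $\mathbf{1}_{A_r}\rho^\mathrm{TF}$ and rules out a positive chemical potential once $\beta$ is taken small enough.

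For (\ref{step3a}), I set $w := \phi_r^\mathrm{TF} - \phi^\mathrm{TF}$ on $A_r$. Since $A_r$ contains no nuclei and $\mu_r = 0$, the TF equations give $-\Delta w \propto [\phi_r^\mathrm{TF}]_+^{3/2} - [\phi^\mathrm{TF}]_+^{3/2}$, and by monotonicity and convexity of $t \mapsto t^{3/2}$ the function $w$ satisfies a nonlinear comparison inequality whose linear coefficient is of order $(\phi^\mathrm{TF})^{1/2} \sim |x-R_j|^{-2}$. I would then construct super-solutions of the form
\[
W(x) = \sum_{j=1}^K \Bigl( c_1 r^{\xi}|x-R_j|^{-4-\xi} + c_2 (R_\mathrm{min}/4)^{-\eta}|x-R_j|^{-4+\eta}\Bigr),
\]
so chosen that $W \ge |w|$ on $\partial A_r$ (using the boundary bound $\beta r^{-4}$ together with Lemma~\ref{lem.outer} and (\ref{ite4}) to control the integral term), while the second summand dominates near the outer boundary. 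The threshold $\tilde r = r^{\xi/(\xi+\eta)}(R_\mathrm{min}/4)^{\eta/(\xi+\eta)}$ is precisely where the two Frobenius exponents balance, and on $\partial A_s$ with $s \in [r,\tilde r]$ the maximum principle yields $|w(x)| \le C(r/s)^{\xi} s^{-4}$. The density bound (\ref{step3b}) follows immediately from the TF equation and the mean-value theorem: $|\rho_r^\mathrm{TF}(x) - \rho^\mathrm{TF}(x)| \le C(\phi^\mathrm{TF}(x))^{1/2}|w(x)| \le C s^{-2}\cdot (r/s)^\xi s^{-4} = C(r/s)^\xi s^{-6}$ on $\partial A_s$, using the atomic Sommerfeld bound $\phi^\mathrm{TF}(x) \le C|x-R_j|^{-4}$.

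The main obstacle is the construction and verification of the super-solution on the multi-atomic complement $A_r$: single-centre atomic barriers cannot simply be summed because the Laplacian and the nonlinear term interact non-additively across centres, so the step relies essentially on Lemma~\ref{lem.outer} to convert boundary control on $\partial A_r$ into pointwise control throughout $A_r$, and on the careful matching of the two Frobenius exponents so that the barrier remains dominant uniformly on the range $s \in [r,\tilde r]$.
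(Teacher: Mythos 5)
Your high-level intuition is sound (linearise TF around the Sommerfeld solution, get indicial exponents $-\xi,\eta$, convert boundary control to interior control), but the two genuinely hard parts of the lemma are precisely the ones your proposal leaves unresolved.

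\textbf{The $\mu_r=0$ step is not closed.} Your argument contradicts $\mu_r>0$ by appealing to \emph{strict convexity/uniqueness}, but there is no contradiction there: the exterior TF problem $\E_r^\mathrm{TF}$ is posed with the Kohn--Sham potential $\Phi_r$, not with $\Phi_r^\mathrm{TF}$, and these differ by roughly $\beta r^{-4}$, which is of the same order as the chemical potential scale $r^{-4}$. So nothing a priori prevents $\mu_r$ from being some nonzero $O(\beta r^{-4})$. What the paper actually does is a two-stage variational argument. Step~1 compares $\E_r^\mathrm{TF}(\rho_{r,t})$ and $\E_r^\mathrm{TF}(\1_{A_r}\rho^\mathrm{TF})$ with the minimal energy $\E_r^\mathrm{TF}(\rho_r^\mathrm{TF})$ (using the Lagrange inequality with multiplier $\mu_r$ and the pointwise minimiser property of $t\mapsto(3/10)c_\mathrm{TF}t^{5/3}-\phi^\mathrm{TF}t$) to obtain both $\mu_r\le C\beta^{1/2}r^{-4}$ and a Coulomb closeness bound $D(\rho_r^\mathrm{TF}-\1_{A_r}\rho^\mathrm{TF})\le C\beta r^{-7+\epsilon}$. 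Step~2 then shows $\inf_{\partial A_r}\phi_r^\mathrm{TF}>\mu_r$ strictly (using the Coulomb estimate $f\star|x|^{-1}\le C\|f\|_{5/3}^{5/7}D[f]^{1/7}$ to propagate the Step~1 closeness to the potential) and invokes the molecular Sommerfeld lower bound from~\cite{GotorRHF} relating $\mu_r^{3/4}$ to the excess charge $\int_{A_r}\rho_0-\int\rho_r^\mathrm{TF}$; since that excess vanishes for $\mu_r>0$ (Lemma~\ref{step2}(i)), one gets $\mu_r=0$. Your proposal is missing the Coulomb-distance estimate entirely, and without it neither the $\inf_{\partial A_r}\phi_r^\mathrm{TF}>\mu_r$ claim nor any quantitative contradiction can be extracted.

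\textbf{The barrier construction is the cited lemma, not a replacement for it.} For (\ref{step3a}) you propose building a multi-centre super-solution $W=\sum_j(c_1 r^\xi|x-R_j|^{-4-\xi}+c_2(R_\mathrm{min}/4)^{-\eta}|x-R_j|^{-4+\eta})$ and applying a nonlinear maximum principle. You correctly note that ``single-centre atomic barriers cannot simply be summed because the Laplacian and the nonlinear term interact non-additively across centres'' --- but you then offer Lemma~\ref{lem.outer} as the remedy, which does not help: that lemma is for \emph{harmonic} functions and says nothing about comparison for the nonlinear TF operator. In effect you are proposing to prove the molecular Sommerfeld estimate from scratch, identify its central difficulty, and then leave it unresolved. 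The paper instead cites the pre-established multi-centre Sommerfeld bounds~\cite{GotorRHF} directly, writes
$|\phi^\mathrm{TF}(x)-\phi_r^\mathrm{TF}(x)|\le C|x-R_j|^{-4}((|x-R_j|/D_j)^\eta+(r/|x-R_j|)^\xi)$
for $x\in A_r\cap\Gamma_j$, and observes that $s\le\tilde r$ forces the $\eta$-term below the $\xi$-term. There is also a prerequisite you have skipped: the boundary data $|w|$ on $\partial A_r$ involves $\int_{A_r}(\rho^\mathrm{TF}-\rho_r^\mathrm{TF})|x-y|^{-1}\,dy$, which you cannot bound by $\beta r^{-4}$ without the Step~1 estimate $D(\rho_r^\mathrm{TF}-\1_{A_r}\rho^\mathrm{TF})\le C\beta r^{-7+\epsilon}$.

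The passage from (\ref{step3a}) to (\ref{step3b}) via the TF equation is essentially right (the paper uses the slightly sharper inequality $(1+a)^{3/2}\le 1+a((1+b)^{3/2}-1)/b$ in place of your mean-value bound, but the order of magnitude is the same). The rest, unfortunately, has two real gaps, and they are the substance of the lemma.
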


\begin{proof}
{\bf (Step 1)}:
First, we show that $\mu_r \le C\beta^{1/2} r^{-4}$ and
\begin{equation}
\label{eq.TFcompare}
D(\rho_r^\mathrm{TF} -\rho^\mathrm{TF} \1_{A_r}) \le C\beta r^{-7+\epsilon}.
\end{equation}
Let $\rho_{r, t}  \coloneqq \rho^\mathrm{TF} \1_{A_r \cap A_t^c}$ and $W(x) = \Phi_{r}^\mathrm{TF}(x) - \Phi_{r}(x)$.
Then for any $t \ge r$
\[
\E_r^\mathrm{TF}(\rho_{r, t}) + \mu_r \int \rho_{r, t} \ge \E_r^\mathrm{TF}(\rho_r^\mathrm{TF}) + \mu_r \int \rho_r^\mathrm{TF},
\]
where we have used the fact that $\mu_r \int \rho_r^\mathrm{TF} = \mu_r\int_{A_r} \rho_0$.
By the same method as in the proof of Lemma~\ref{step2}, we can see that
\begin{align*}
 \E_r^\mathrm{TF}(\rho_{r, t}) -  \E_r^\mathrm{TF}(\rho^\mathrm{TF} \1_{A_r})
&= -\E^\mathrm{TF}_{W}(\rho^\mathrm{TF} \1_{A_t}) + \int_{A_t} \Phi_t^\mathrm{TF} \rho^\mathrm{TF} \\
&\le C\beta r^{-7+\epsilon} + Ct^{-7}.
\end{align*}
Since $t \mapsto (3/10)c_\mathrm{TF} t^{5/3} - \phi^\mathrm{TF} t$ takes its minimum at $t = \rho^\mathrm{TF}$, we learn
\begin{align*}
 \E_r^\mathrm{TF}(\rho^\mathrm{TF} \1_{A_r}) - \E_r^\mathrm{TF}(\rho_r^\mathrm{TF}) 
&= \int_{A_r} W\left(\rho^\mathrm{TF} - \rho_r^\mathrm{TF}\right) - D(\rho_r^\mathrm{TF} -\rho^\mathrm{TF} \1_{A_r}) \\
&\quad+ \int_{A_r}\left(\frac{3}{10}c_{\mathrm{TF}}(\rho^\mathrm{TF})^{5/3} - \frac{3}{10}c_{\mathrm{TF}}(\rho_r^\mathrm{TF})^{5/3} -\phi^\mathrm{TF}\rho^\mathrm{TF} + \phi^\mathrm{TF}\rho_r^\mathrm{TF}\right)\\
&\le C\beta r^{-7+\epsilon} - D(\rho_r^\mathrm{TF} -\rho^\mathrm{TF} \1_{A_r}).
\end{align*}
Combining these estimates, we arrive at
\begin{align*}
0\le \mu_r\left(\int_{A_r} \rho_0 -\int \rho_{r, t} \right) 
&\le \E_r^\mathrm{TF}(\rho_{r, t}) -  \E_r^\mathrm{TF}(\rho^\mathrm{TF} \1_{A_r}) +   \E_r^\mathrm{TF}(\rho^\mathrm{TF} \1_{A_r}) - \E_r^\mathrm{TF}(\rho_r^\mathrm{TF})  \\
&\le C(\beta r^{-7+\epsilon} +\beta r^{-7+\epsilon} + t^{-7})  - D(\rho_r^\mathrm{TF} -\rho^\mathrm{TF} \1_{A_r}).
\end{align*}
Choosing $t = \beta^{-1/7} r^{1-\epsilon}$, we have (\ref{eq.TFcompare}).

Since $\phi^\mathrm{TF} \ge \max_j \phi_{z_j}^\mathrm{TF}$~\cite[Thm.~3.4]{LiebTF} and the Sommerfeld bound~\cite[Thm.~5.4]{SolovejIC}, we see $\int_{A_s} \rho^\mathrm{TF} \ge C^{-1}s^{-3}$ for any $s \ge z_\mathrm{min}^{-1/3}$.
We note that
\[
\int_{A_r}\left(\rho^\mathrm{TF} - \rho_0 \right) = \int_{A_r^c}\left(\rho_0 -\rho^\mathrm{TF} \right) \le \beta r^{-3} \le C\beta \int_{A_r} \rho^\mathrm{TF}.
\]
Hence it holds that for $t = \beta^{-1/6} r$
\[
\int_{A_r} \rho_0 -\int \rho_{r, t} \ge \int_{A_t}\rho^\mathrm{TF} -C\beta\int_{A_r}\rho^\mathrm{TF} \ge C^{-1}\beta^{1/2}r^{-3} -C\beta r^{-3}.
\]
Then the conclusion $\mu_r \le C\beta^{1/2} r^{-4}$ follows for $\beta$ sufficiently small.

{\bf (Step 2)}:
We turn to prove $\mu_r=0$.
By the Sommerfeld bound and our assumption, we see
\begin{align*}
 \inf_{\partial A_r} \phi_r^\mathrm{TF} 
 &= \inf_{\partial A_r} \left(\phi^\mathrm{TF} -[\Phi_r^\mathrm{TF} - \Phi_r] +(\rho^\mathrm{TF} \1_{A_r} -\rho_r^\mathrm{TF})\star|x|^{-1} \right) \\
 &\ge C^{-1}r^{-4} -\beta r^{-4+\epsilon} -\sup_{\partial A_r}|(\rho^\mathrm{TF} \1_{A_r} -\rho_r^\mathrm{TF})\star|x|^{-1}|.
\end{align*}
By Step 1 and the Coulomb estimate $f\star |x|^{-1} \le C\| f\|_{L^{5/3}}^{5/7}D[f]^{1/7}$~\cite[Lem.~6.4]{TFDWIC}, we find
\begin{align*}
\sup_{\partial A_r}|(\rho^\mathrm{TF} \1_{A_r} -\rho_r^\mathrm{TF})\star|x|^{-1}|
&\le C\| \rho^\mathrm{TF} \1_{A_r} -\rho_r^\mathrm{TF}\|_{L^{5/3}}^{5/7}D[\rho^\mathrm{TF} \1_{A_r} -\rho_r^\mathrm{TF}]^{1/7}  \\
&\le C\beta^{1/7}r^{-4+\epsilon}.
\end{align*}
Hence if $\beta >0$ is small enough then we deduce from Step 1 that
\[
 \inf_{\partial A_r} \phi_r^\mathrm{TF}  > C^{-1}r^{-4} \ge \mu_r.
\]
Then by the Sommerfeld estimate for molecules~\cite[Lem.~4.1]{GotorHF} we see
\[
C^{-1}\mu_r^{3/4}(1+ a(r))^{-1/2} \le \lim_{|x| \to \infty}|x| \phi_r^\mathrm{TF} (x) = \int_{A_r} \rho_0 - \int \rho_r^\mathrm{TF},
\]
where $a(r) \coloneqq \sup_{\partial A_r} (\sqrt{c_\mathrm{S} (\phi_r^\mathrm{TF} )^{-1}r^{-4}} -1)$.
This shows $\mu_r =0$ by Lemma~\ref{step2}.

{\bf (Step 3)}:
Let $D_j \coloneqq \min_{i \neq j}|R_i -R_j|/2$.
Using the Sommerfeld bound for molecules~\cite[Lem.~4.1~\&~Lem.~4.2]{GotorHF}, we have for any $x \in A_r \cap \Gamma_j$
\[
|\phi^\mathrm{TF}(x) - \phi_r^\mathrm{TF}(x)| \le C|x-R_j|^{-4}\left(\left(\frac{|x-R_j|}{D_j}\right)^{\eta} + \left(\frac{r}{|x-R_j|}\right)^\xi \right),
\]
where $\xi = (-7 +\sqrt{73})/2$ and $\eta = (7 +\sqrt{73})/2$.
Since $s \le \tilde r$ implies $(s/D_j)^{\eta} \le C(r/s)^{\xi}$, we have (\ref{step3a}).
Then (\ref{step3b}) follows from $(1+a)^{3/2} \le 1+a((1+b)^{3/2}-1)/b$ for any $a \in [0, b]$.
\end{proof}

\begin{lemma}
\label{step4}
Let $\beta > 0$ be as in Lemma \ref{step3} and $D \in [ z_\mathrm{min}^{-1/3}, R_\mathrm{min}/4]$. We assume that $(r, \beta, 0) \in \mathcal{A}$ for any $r \in (0, D]$. Then, if $r \in [z_\mathrm{min}^{-1/3}, D]$, we have
\begin{equation}
\label{eq.step4}
 \E_{r}^\mathrm{TF}(\rho_{r}^\mathrm{TF}) + D(\eta_r^2 \rho_0 - \rho_{r}^\mathrm{TF}) -Cr^{-7 +1/3}
 \le
\E_r^\mathrm{rHF}(\eta_r\gamma_0 \eta_r) 
\le \E_{r}^\mathrm{TF}(\rho_{r}^\mathrm{TF}) + Cr^{-7+ 1/3},
\end{equation}
and
\[
D(\rho_{r}^\mathrm{TF} - \1_{A_r}\rho) \le Cr^{-7+1/3}.
\]
\end{lemma}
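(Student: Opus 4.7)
My plan is to mirror the two-sided semi-classical analysis of Lemma~\ref{thm.initial}, now applied to the exterior functional $\E_r^\mathrm{rHF}$ compared with $\E_r^\mathrm{TF}$. Two algebraic inputs drive the reduction. First, whenever $\mathrm{supp}\,\rho_\gamma\subset A_r$ (so that $V_r$ and $\Phi_r$ pair identically with $\rho_\gamma$), completing the square with $\phi_r^\mathrm{TF} = V_r - \rho_r^\mathrm{TF}\star|\cdot|^{-1}$ gives
\[
\E_r^\mathrm{rHF}(\gamma) = \tr\!\left[\left(-\tfrac{1}{2}\Delta - \phi_r^\mathrm{TF}\right)\gamma\right] + D(\rho_\gamma - \rho_r^\mathrm{TF}) - D(\rho_r^\mathrm{TF}).
\]
Second, the TF equation together with $\mu_r = 0$ (Step~2 of Lemma~\ref{step3}) yields an identity of the form $\E_r^\mathrm{TF}(\rho_r^\mathrm{TF}) + D(\rho_r^\mathrm{TF}) = -C_\mathrm{sc}\int [\phi_r^\mathrm{TF}]_+^{5/2}$ with a universal semi-classical constant. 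After these substitutions, the whole question reduces to a two-sided estimate of $\tr[(-\tfrac{1}{2}\Delta - \phi_r^\mathrm{TF})\eta_r\gamma_0\eta_r]$ via coherent states of length $s\in(0,r)$ as in Lemma~\ref{thm.initial}.

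For the \textbf{lower bound} I feed coherent states directly into the trace. Three error terms appear: the kinetic smearing $\tfrac{\pi^2}{2s^2}\tr(\eta_r\gamma_0\eta_r)\le Cs^{-2}r^{-3}$ (by~(\ref{ite3})), the $\epsilon$-overcount of the principal term, with $\int[\phi_r^\mathrm{TF}]_+^{5/2}\le Cr^{-7}$ a consequence of Lemma~\ref{step2}(ii), and the Lieb-Thirring remainder $C\epsilon^{-3/2}\|[\phi_r^\mathrm{TF}-\phi_r^\mathrm{TF}\star g^2]_+\|_{L^{5/2}}^{5/2}$. The crucial structural fact is that $\phi_r^\mathrm{TF}$ is subharmonic away from $\partial A_r$ (since $V_r$ is harmonic off the nuclei while $-\rho_r^\mathrm{TF}\star|\cdot|^{-1}$ is subharmonic), so $[\phi_r^\mathrm{TF}-\phi_r^\mathrm{TF}\star g^2]_+$ lives in an $s$-thin shell around $\partial A_r$; combined with $[\phi_r^\mathrm{TF}]_+\le Cr^{-4}$ this gives $\|\cdot\|_{L^{5/2}}^{5/2}\lesssim sr^{-8}$. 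Balancing $s\sim r^{1+\delta}$ and $\epsilon\sim r^{\delta'}$ for appropriate small $\delta,\delta'>0$ then makes every error strictly smaller than $Cr^{-7+1/3}$.

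For the \textbf{upper bound} I construct the coherent-state trial $\tilde\gamma = (2\pi)^{-3}\int_M \pi_{k,y}\,dk\,dy$ (with the appropriate normalization factor so that $\rho_{\tilde\gamma} = \rho_r^\mathrm{TF}\star g^2$) over the classical allowed region $M = \{(k,y):k^2/2 < \phi_r^\mathrm{TF}(y)\}$, with a minor inward shift of $y$ by $s$ to ensure $\mathrm{supp}\,\rho_{\tilde\gamma}\subset A_r$. Then $\tr\tilde\gamma\le\int\rho_r^\mathrm{TF}\le\int_{A_r}\rho_0$, so that combining the two inequalities of Lemma~\ref{lem.rhf} with this trial gives $\E_r^\mathrm{rHF}(\eta_r\gamma_0\eta_r)\le\E_r^\mathrm{rHF}(\tilde\gamma)+\mathcal R$. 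A direct semi-classical computation bounds the right-hand side by $\E_r^\mathrm{TF}(\rho_r^\mathrm{TF})$ plus the kinetic smearing $\tfrac{\pi^2}{2s^2}\tr\tilde\gamma$, the potential smearing $\int\rho_r^\mathrm{TF}|V_r - V_r\star g^2|$ (controlled via $V_{\uR} - V_{\uR}\star g^2 \le \sum_j z_j\1(|x-R_j|\le s)/|x-R_j|$ exactly as in Lemma~\ref{thm.initial}), and $D(\rho_r^\mathrm{TF}\star g^2 - \rho_r^\mathrm{TF})$; meanwhile $\mathcal R$ is handled by (\ref{ite1}), (\ref{ite3}), (\ref{ite5}) after choosing $\lambda\sim r^{\delta''}$. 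All errors reduce to $Cr^{-7+1/3}$ with the same parameter choices.

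The last statement follows by subtracting the two bounds to extract $D(\eta_r^2\rho_0 - \rho_r^\mathrm{TF})\le Cr^{-7+1/3}$ and then replacing $\eta_r^2\rho_0$ by $\1_{A_r}\rho_0$ at the cost of the Coulomb self-energy of the $\lambda$-shell $\eta_r^2-\1_{A_r}$, which is controlled by~(\ref{ite2}) applied to the shell mass. The main obstacle is the tight tuning of $s,\lambda,\epsilon$ so that every error comes in below $Cr^{-7+1/3}$ rather than just $Cr^{-7}$; the subharmonicity of $\phi_r^\mathrm{TF}$ here plays the role that superharmonicity of $V_{\uR}$ plays in Lemma~\ref{thm.initial}, and is exactly what rescues the lower bound from the otherwise dominant Lieb-Thirring remainder.
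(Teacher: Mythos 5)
Your proposal follows essentially the same route as the paper: completing the square with $\phi_r^{\mathrm{TF}}$, two-sided coherent-state semiclassics with the smearing error localized to an $s$-shell around $\partial A_r$ (your subharmonicity argument is a slightly different but equivalent way to see what the paper gets from $\phi_r^{\mathrm{TF}}-\phi_r^{\mathrm{TF}}\star g^2 \le V_r - V_r\star g^2$ together with harmonicity of $\Phi_r$ on $A_r$), and the coherent-state trial $\widetilde\gamma$ built from $\1_{A_{r+s}}\phi_r^{\mathrm{TF}}$ fed into Lemma~\ref{lem.rhf}.

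The one genuine slip is in the final passage from $D(\eta_r^2\rho_0-\rho_r^{\mathrm{TF}})$ to $D(\1_{A_r}\rho_0-\rho_r^{\mathrm{TF}})$: you claim to control the Coulomb self-energy of the shell density $(\1_{A_r}-\eta_r^2)\rho_0$ by ``(\ref{ite2}) applied to the shell mass,'' but (\ref{ite2}) bounds $\int_{A_r^c}(\rho_0-\rho^{\mathrm{TF}})$ and, more fundamentally, the mass of a density in a thin shell does not bound its Coulomb self-energy (a point mass has finite charge but infinite self-energy). What actually works, and what the paper does, is the Hardy--Littlewood--Sobolev inequality $D(f)\lesssim\|f\|_{L^{6/5}}^2$ combined with Hölder against the $L^{5/3}$ bound (\ref{ite4}) and the measure of the shell, yielding $C\lambda^{7/15}r^{-7}$; optimizing $\lambda$ then gives the stated $r^{-7+1/3}$. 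The rest of your outline, including the superfluous $D(\rho_r^{\mathrm{TF}}\star g^2-\rho_r^{\mathrm{TF}})$ term in the upper bound (unnecessary since Coulomb energy decreases under averaging by $g^2$ and under restriction to $A_{r+s}$), is consistent with the paper's proof.
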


\begin{proof}
\underline{\textit{Upper Bound.}}
We will prove that
\begin{equation*}
\label{eq.upper4}
\E_r^\mathrm{rHF}(\eta_r\gamma_0\eta_r) \le \E_{r}^\mathrm{TF}(\rho_{r}^\mathrm{TF}) + Cr^{-7}(r^{2/3} + \lambda^{-2} r^2 + \lambda).
\end{equation*}
Let $s \le r$ be a constant to be chosen later.
We take the function $g$ and projection $\pi_{k, y}$ as in Lemma~\ref{thm.initial}, and define
\[
\widetilde \gamma \coloneqq (2\pi)^{-3} \iint_{\frac{k^2}{2} - V_r'(y) \le 0} \pi_{k, y} \, dy dk,
\] 
with $V_{r}' \coloneqq \1_{A_{r+s}} \phi_{r}^\mathrm{TF}$.
Since $\mu_r = 0$ by Lemma \ref{step3} and the TF equation in Lemma~\ref{step2}, we can see
\[
\rho_{\widetilde \gamma} = (\1_{A_{r+s}} \rho_{r}^\mathrm{TF})\star g^2.
\]
Since $\rho_{\widetilde \gamma}$ is supported in $A_r$ and
\[
\tr \widetilde \gamma = \int \rho_{\widetilde \gamma} = \int_{A_{r+s}}  \rho_{r}^\mathrm{TF} 
\le
\int \rho_{r}^\mathrm{TF}  \le \int_{A_{r}}  \rho_0,
\]
we may apply Lemma~\ref{lem.rhf} and obtain
$\E_r^\mathrm{rHF}(\eta_r\gamma_0\eta_r)
\le \E_r^\mathrm{rHF}(\widetilde \gamma) + \mathcal{R}$.
By simple computation
\[
\tr \left(-\frac{\Delta}{2} \widetilde \gamma\right) = 2^{3/2}(5\pi^2)^{-1} \int [V_r']_+^{5/2} + 2^{1/2}(3s^2)^{-1} \int [V_r']_+^{3/2},
\] 
we have
\begin{equation*}
\begin{split}
\E_r^\mathrm{rHF}(\widetilde \gamma)
&\le
 \frac{3}{10}(3 \pi^2)^{2/3}\int(\rho_r^\mathrm{TF})^{5/3}  - \int_{A_r} \Phi_{r} \rho_{r}^\mathrm{TF} + D(\rho_r^\mathrm{TF}) \\
&\quad+ Cs^{-2} \int \rho_{r}^\mathrm{TF}
+ \int_{A_{r+s} } (\Phi_{r} -\Phi_{r} \star g^2)\rho_{r}^\mathrm{TF}
+\int_{A_r \cap A_{r+s}^c} \Phi_{r} \rho_r^\mathrm{TF} \\
&= \E_r^\mathrm{TF}(\rho_r^\mathrm{TF}) +Cs^{-2} \int \rho_{r}^\mathrm{TF} +\int_{A_r \cap A_{r+s}^c} \Phi_{r} \rho_{r}^\mathrm{TF},
\end{split}
\end{equation*}
where we have used $ \Phi_r - \Phi_{r} \star g^2 = 0$ on $A_{r+s}$.
This fact follows from the mean value property.
Using Lemma~\ref{lem.outer} and Lemma~\ref{step3}, we have
\[
\int_{A_r \cap A_{r+s}^c} \Phi_{r} \rho_{r}^\mathrm{TF} 
\le Csr^{-8}.
\]
We choose $s=r^{5/3}$ and get
\[
\E_r^\mathrm{rHF}(\widetilde \gamma) \le
\E_{r}^\mathrm{TF}(\rho_{r}^\mathrm{TF}) + Cr^{-7+2/3}.
\]
Finally, since $\lambda \le 1/2$, we have
\begin{align*}
\mathcal{R} \le C(\lambda^{-2}r^{-5} + \lambda r^{-7}),
\end{align*}
which shows the desired upper bound.

\underline{\textit{Lower bound}}
We will prove
\[
\E_r^\mathrm{rHF}(\eta_r \gamma_0\eta_r) \ge \E_{r}^\mathrm{TF}(\rho_{r}^\mathrm{TF}) + D(\eta_r^2 \rho_0 - \rho_{r}^\mathrm{TF}) -Cr^{-7 +1/3}.
\]
As in the proof of Lemma~\ref{thm.initial}, we see
\begin{align*}
\E_r^\mathrm{rHF}(\eta_r \gamma_0\eta_r) 
&= \tr \left[ \left(-\frac{\Delta}{2} - \phi_{r}^\mathrm{TF}\right) \eta_r \gamma_0\eta_r \right]
+D(\eta_r^2\rho_0 - \rho_{r}^\mathrm{TF}) - D( \rho_{r}^\mathrm{TF})\\
&\ge  \E_{r}^\mathrm{TF}(\rho_{r}^\mathrm{TF}) + D(\eta_r^2 \rho_0 - \rho_{r}^\mathrm{TF}) -Cs^{-2}\int \eta_r^2 \rho_0 \\
&\quad -C\left(\int[\phi_{r}^\mathrm{TF}]_+^{5/2} \right)^{3/5} \left( \int[\phi_{r}^\mathrm{TF}- \phi_{r}^\mathrm{TF}\star g^2]_+^{5/2}\right)^{2/5}.
\end{align*}
We note that $|x|^{-1} - |x|^{-1}\star g^2 \ge 0 $ and thus $\rho_{r}^\mathrm{TF} \star(|x|^{-1} - |x|^{-1}\star g^2) \ge0$.
Since the TF equation, we have 
\[
\phi_{r}^\mathrm{TF} - \phi_{r}^\mathrm{TF} \star g^2 \le \1_{A_r}\Phi_{r} - \1_{A_r}\Phi_{r}\star g^2 \eqqcolon f.
\]
By the mean value property, we infer that $\mathrm{supp} f \subset A_{r-s} \cap A_{r+s}^c$ and thus
\[
[\phi_{r}^\mathrm{TF} - \phi_{r}^\mathrm{TF} \star g^2]_+ \le Cr^{-4}\1_{A_{r-s} \cap A_{r+s}^c}.
\]
Together with these facts, we conclude that
\begin{align*}
\E_r^\mathrm{rHF}(\eta_r \gamma_0\eta_r)  &\ge 
 \E_{r}^\mathrm{TF}(\rho_{r}^\mathrm{TF}) + D(\eta_r^2 \rho_0 - \rho_{r}^\mathrm{TF})  -C(s^{-2}r^{-3} + r^{-37/5}s^{2/5}).
\end{align*}
Then  we choose $s = r^{11/6}$ and arrive at the desired lower bound.
After choosing $\lambda = r^{1/3}/2$, the estimate (\ref{eq.step4}) follows.

\underline{\it Conclusion}
Combining the upper and lower bound, we learn
\[
D(\eta_r^2 \rho_0 - \rho_{r}^\mathrm{TF}) \le Cr^{-7}(r^{1/3} + \lambda^{-2}r^2 +\lambda).
\]
Using the Hardy-Littlewood-Sobolev inequality, we have
\begin{align*}
D(\chi_r^+\rho_0 - \eta_r^2 \rho_0)
&\le C\|\1_{A_r \cap A_{(1+\lambda)r}^c}\rho_0\|^2_{L^{6/5}} \\
&\le C\left(\int_{A_r}\rho_0^{5/3}  \right)^{6/5}\left(\sum_{j=1}^K  \int_{r \le |x-R_j| \le (1+\lambda)r} \, dx\right)^{7/15} \\
&=C\lambda^{7/15} r^{-7}.
\end{align*}
By convexity of the Coulomb term $D(\cdot)$, we see
\begin{align*}
D(\chi_r^+\rho_0- \rho_{r}^\mathrm{TF}) &\le 2D(\chi_r^+\rho_0 - \eta_r^2 \rho_0)
+2D(\eta_r^2\rho_0 - \rho_{r}^\mathrm{TF}) \\
&\le Cr^{-7}(\lambda^{7/15} + r^{1/3} + \lambda^{-2}r^2),
\end{align*}
for any $\lambda \in (0, 1/2]$.
Choosing $\lambda = r^{30/37}/2$, we have the upper bound.
\end{proof}

\begin{proof}[Proof of Lemma~\ref{thm.iterative}]
Let $\delta>0$ be a constant sufficiently small and $s \in [r^{1/(1+\delta)}, \min\{r^{\frac{1-\delta}{1+\delta}}, \tilde r\}]$ with $\tilde r = r^{\frac{\xi}{\xi+\eta}}(R_\mathrm{min}/4)^{\frac{\eta}{\xi + \eta}}$.
We split
\begin{align*}
\Phi_{s}(x) - \Phi^\mathrm{TF}_{s}(x) &= \phi_{r}^\mathrm{TF}(x) - \phi^\mathrm{TF}(x) +\int_{A_s}\frac{\rho_{r}^\mathrm{TF}(y) - \rho^\mathrm{TF}(y)}{|x-y|} \, dy\\
&\quad + \sum_{j=1}^K\int_{|y-R_j|<s}\frac{\rho_{r}^\mathrm{TF}(y) - \1_{A_r}(y)\rho_0(y)}{|x-y|} \, dy.
\end{align*}
Using Lemma~\ref{step3}, we have
\begin{align*}
\sup_{\partial A_s} | \phi_{r}^\mathrm{TF}(x) - \phi^\mathrm{TF}(x)|
+ \sup_{\partial A_s}\left| \1_{A_s}(\rho_r^\mathrm{TF} - \rho^\mathrm{TF})\star |x|^{-1} \right| &\le C\left(\frac{r}{s} \right)^\xi s^{-4}.
\end{align*}
The Coulomb estimate~\cite[Lem.~6.4]{TFDWIC} and Lemma~\ref{step4} lead to that for any $x \in \partial B(R_j, s)^c$
\begin{align*}
\left|\1_{B(R_j, s)}(\rho_r^\mathrm{TF} - \1_{A_r} \rho_0)\star |x|^{-1} \right|
&\le C\|\rho_r^\mathrm{TF} - \1_{A_r} \rho_0 \|^{5/6}_{L^{5/3}}\left(sD\left[\1_{A_r} \rho_0 - \rho_r^\mathrm{TF}\right]  \right)^{1/12} \\
&\le Cs^{-4}\left(\frac{s}{r} \right)^4 r^{\epsilon/12}.
\end{align*}
Since $s^{2\delta/(1-\delta)} \le r/s \le s^{\delta}$, we have the lemma.
\end{proof}

\begin{proof}[Proof of Lemma~\ref{lem.potential}]
The following proof is the same as in~\cite[Thm.~7.1]{GotorHF} and~\cite[Thm.~5.1]{Samojlow}.
By Lemma~\ref{thm.initial}, there are constants $C_3>0$ and $\epsilon>0$ such that $(r, C_3, \epsilon) \in \mathcal{A}$ for any $r \le z_\mathrm{min}^{-1/3}$.
Let $\delta>0$ be a constant small enough, $\sigma = \max\{C_2, C_3\}$ and $D_0 = z_\mathrm{min}^{-1/3}$, where $C_2$ is defined in Lemma~\ref{thm.iterative}. 
Now we define for $\epsilon_0>0$ sufficiently small 
\[
M \coloneqq \sup\left\{ r \in \R \colon \sup_{x \in \partial A_s} \left|\Phi_{s}(x) - \Phi_{s}^\mathrm{TF}(x) \right| \le \sigma s^{-4+\epsilon_0}, \text{ for any } s \le r^{\frac{1}{1+\delta}} \right\}.
\]
Next, we suppose that (1) $M < R_\mathrm{min}/4$, and (2) $(M^{\frac{1}{1+\delta}}, \min\{ M^{\frac{1-\delta}{1+\delta}}, \tilde M\}) \neq \emptyset$,
where $\tilde M \coloneqq M^{\xi/(\xi + \eta)} (R_\mathrm{min}/4)^{\eta/(\xi + \eta)}$.
If $D_0 <M$, then there is a sequence  such that $D_{n} \to M$ and $D_{0}  \le D_n \le M$ for large $n$.
From this and Lemma~\ref{thm.iterative}, we see
\[
\sup_{x \in \partial A_r} \left|\Phi_{r}(x) - \Phi_{r}^\mathrm{TF}(x)\right| \le \sigma r^{-4+\epsilon_0}, \quad \text{for any } r \in \left[D_{n}^{\frac{1}{1+\delta}}, \min \left\{D_{n}^{\frac{1-\delta}{1+\delta}}, \tilde D_{n}\right\}\right],
\]
where $\tilde D_n \coloneqq D_n^{\xi/(\xi + \eta)} (R_\mathrm{min}/4)^{\eta/(\xi + \eta)}$.
From (2), we have
\[
M^{\frac{1}{1+\delta}} \in \left(D_{n}^{\frac{1}{1+\delta}}, \min \left\{D_{n}^{\frac{1-\delta}{1+\delta}}, \tilde D_{n}\right\}\right) \neq \emptyset
\]
for large $n$. This contradicts the definition of $M$.
If $D_0=M$, then $D_0 \le R_\mathrm{min}/4$ and $(r, \sigma, \epsilon_0) \in \mathcal{A}$ for any $r \le \min\{M^{\frac{1-\delta}{1+\delta}}, \tilde M\}$, which also contradicts the definition of $M$.
Finally, if $D_0 > M$ then we can choose $M' \in (M,  D_0)$.
This contradicts $(r, \sigma, \epsilon_0) \in \mathcal{A}$ for any $r \le D_0$.
Hence at least one of (1) and (2) cannot hold.
If (1) is true, then $M \ge R_\mathrm{min}^{\frac{\eta(1+\delta)}{\eta-\delta\xi}}$.
Hence the lemma follows.
\end{proof}

\section{Proof of Theorem~\ref{thm.main}}
\label{proof.main}
The following lemma allows us to control the outside models.

\begin{lemma}
\label{lem.outsides}
We assume that $z_\mathrm{min} \ge \delta_0 z_\mathrm{max}$ for some $\delta_0$, and for $\epsilon_3, \delta_3 >0$ sufficiently small $4 \ge R_\mathrm{min} \ge \delta_3^{-1}z_\mathrm{min}^{-1/3 + \alpha}$ with some $\alpha < 2/231$, and $r = \delta_3 R_\mathrm{min}^{1+\epsilon_3}$.
Then for any $s \le r$ and $j=1, \dots, K$ we have
\begin{enumerate}

\item $\sup_{B(R_j, s)^c}\left|(\rho_{z_j}^\mathrm{TF} - \rho^\mathrm{TF})\1_{B(R_j, s)} \star |x|^{-1} \right| \le Cs^{-4 + \epsilon_4}$,

\item $\sup_{B(R_j, s)^c}\left|(\rho_0 - \rho^\mathrm{TF})\1_{B(R_j, s)} \star |x|^{-1} \right| \le Cs^{-4 + \epsilon_4}$,

\item $\left|\int_{B(R_j, s)}(\rho_{z_j}^\mathrm{TF} - \rho^\mathrm{TF})\right| \le Cs^{-4 + \epsilon_4}$,

\item $\left|\int_{B(R_j, s)}(\rho_0 - \rho^\mathrm{TF})\right| \le Cs^{-4 + \epsilon_4}$,
\end{enumerate}
where $\epsilon_4 >0$ is some constant.
\end{lemma}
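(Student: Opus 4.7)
The plan is to reduce all four bounds to a maximum‑principle argument for Coulomb potentials of compactly supported densities. Each quantity appearing in (1)--(4) is the evaluation (or the zeroth moment) of a potential of the form $f_j \star |\cdot|^{-1}$ with $f_j$ supported in $B(R_j,s)$; such a potential is harmonic on $B(R_j,s)^c$ and vanishes at infinity, so its $L^\infty$ norm on $B(R_j,s)^c$ is attained on $\partial B(R_j,s)$, while $\lim_{|x|\to\infty}|x-R_j|(f_j\star|\cdot|^{-1})(x)=\int f_j$ and $|x-R_j|\,|(f_j\star|\cdot|^{-1})(x)|\le s\sup_{\partial B(R_j,s)}|f_j\star|\cdot|^{-1}|$. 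Consequently, bounds of the type $Cs^{-4+\epsilon_4}$ on the boundary imply both the $L^\infty$ bound and the $Cs^{-3+\epsilon_4}\le Cs^{-4+\epsilon_4}$ bound on the moment (since $s\le 1$ in the range considered).

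I would first prove (2) and (4). Set $u_j(x)\coloneqq\bigl[(\rho_0-\rho^\mathrm{TF})\1_{B(R_j,s)}\bigr]\star|\cdot|^{-1}(x)$. Since the balls $B(R_i,s)$ are disjoint for $s\le R_\mathrm{min}/2$, the identity $\sum_i u_i=\Phi^\mathrm{TF}_s-\Phi_s$ holds, and Lemma~\ref{lem.potential} gives $\sup_{\partial A_s}|\sum_i u_i|\le Cs^{-4+\epsilon_1}$ for $s\le r\le(R_\mathrm{min}/4)^{1+\delta_1}$, which is granted by $s\le r=\delta_3R_\mathrm{min}^{1+\epsilon_3}$. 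To isolate a single $u_j$, I apply a ball‑separation argument in the spirit of Lemma~\ref{lem.outer}: each $u_i$ is harmonic on $B(R_i,s)^c$ and satisfies $|x-R_i|\,|u_i(x)|\le s\sup_{\partial B(R_i,s)}|u_i|$, so on $\partial B(R_j,s)$,
\[
|u_j(x)|\le\Bigl|\sum_i u_i(x)\Bigr|+\sum_{i\ne j}|u_i(x)|\le Cs^{-4+\epsilon_1}+\frac{s}{R_\mathrm{min}-s}\sum_{i\ne j}\sup_{\partial B(R_i,s)}|u_i|.
\]
Writing $M_j=\sup_{\partial B(R_j,s)}|u_j|$ and summing in $j$, the hypothesis $s/R_\mathrm{min}\le\delta_3R_\mathrm{min}^{\epsilon_3}\ll 1/K$ allows absorption and yields $M_j\le Cs^{-4+\epsilon_1}$. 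The maximum principle extends this to $B(R_j,s)^c$, proving (2); (4) then follows from the moment identity above.

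For (1) and (3) I would write $(\rho_{z_j}^\mathrm{TF}-\rho^\mathrm{TF})\1_{B(R_j,s)}=(\rho_{z_j}^\mathrm{TF}-\rho_{z_j})\1_{B(R_j,s)}+(\rho_{z_j}-\rho_0)\1_{B(R_j,s)}+(\rho_0-\rho^\mathrm{TF})\1_{B(R_j,s)}$. The third piece is controlled by (2). The first piece is controlled by the atomic analogue of Lemma~\ref{lem.potential}, which is available by the remark in the Conventions section that the proofs of this paper adapt to the atomic Kohn-Sham problem (with $R_\mathrm{min}/4$ replaced by $1$, automatically satisfied for our $s$). The main obstacle is the middle piece: comparing the atomic Kohn-Sham density $\rho_{z_j}$ to the molecular one $\rho_0$ on $B(R_j,s)$. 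I would handle it by re‑running the ball‑separation argument, this time on the difference of the two Kohn-Sham minimizers: the perturbation seen by $\rho_0$ near $R_j$ relative to $\rho_{z_j}$ comes from the extra potential $\sum_{i\ne j}z_i/|x-R_i|$ and the screening contribution $\rho_0\1_{B(R_i,s)}\star|\cdot|^{-1}$ for $i\ne j$, each of size $O(z_\mathrm{max}/R_\mathrm{min})$ on $B(R_j,s)$. Combined with the already established (2) (and its atomic twin), this perturbation feeds into a Coulomb‑estimate bound that closes provided the chain of losses — the $1/12$ of the Coulomb estimate in Lemma~\ref{thm.initial}, the iterative exponent of Lemma~\ref{thm.iterative}, and the factor $s/R_\mathrm{min}$ from absorption — stays below $s^{-4+\epsilon_4}$. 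This balance is precisely why the hypothesis is $R_\mathrm{min}\ge\delta_3^{-1}z_\mathrm{min}^{-1/3+\alpha}$ with $\alpha<2/231$: the numerical constant $2/231$ reflects the product of these exponents, and any larger $\alpha$ would leave an uncontrolled power of $R_\mathrm{min}$ in the cross terms.
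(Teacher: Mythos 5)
Your treatment of items (2) and (4) coincides with the paper's: split $u_j=(\rho_0-\rho^\mathrm{TF})\1_{B(R_j,s)}\star|\cdot|^{-1}$ off from $u_s=\sum_i u_i = (\rho_0-\rho^\mathrm{TF})\1_{A_s^c}\star|\cdot|^{-1}$, control $u_s$ via Lemma~\ref{lem.potential}, and absorb the cross‑ball contributions $u_i$ ($i\neq j$) with the harmonic‑decay bound $|x-R_i|\,|u_i(x)|\le s\sup_{\partial B(R_i,s)}|u_i|$. That matches the paper.

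Your plan for (1) and (3), however, has a real gap. The quantity in (1) involves only the two explicit TF minimizers $\rho_{z_j}^\mathrm{TF}$ and $\rho^\mathrm{TF}$, yet you route the comparison through the Kohn--Sham densities via $(\rho_{z_j}^\mathrm{TF}-\rho^\mathrm{TF})=(\rho_{z_j}^\mathrm{TF}-\rho_{z_j})+(\rho_{z_j}-\rho_0)+(\rho_0-\rho^\mathrm{TF})$. The first and last pieces are fine (atomic and molecular screened potential bounds), but for the middle piece $(\rho_{z_j}-\rho_0)\1_{B(R_j,s)}\star|\cdot|^{-1}$ there is no estimate available in the paper, and your ``re-run the ball‑separation argument on the perturbation from the other nuclei'' suggestion cannot be implemented: in Kohn--Sham theory the density is not a pointwise function of the effective potential (there is no analogue of the TF equation), so a potential perturbation of size $O(z_\mathrm{max}/R_\mathrm{min})$ on $B(R_j,s)$ does not translate into a pointwise density perturbation. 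Worse, any attempt to control $\rho_{z_j}-\rho_0$ by reduction to TF densities reintroduces exactly the quantity $(\rho_{z_j}^\mathrm{TF}-\rho^\mathrm{TF})\1_{B(R_j,s)}\star|\cdot|^{-1}$ you set out to bound, making the argument circular.

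The paper proves (1) and (3) purely within TF theory: the molecular Sommerfeld estimate \cite[Thm.~4.1 \& 4.2]{GotorHF} gives $\phi^\mathrm{TF}-\phi^\mathrm{TF}_{z_j}\le c_\mathrm{S}s^{-4}\bigl(c_1(s/D_j)^\eta + c_2(s^{1+\epsilon}/|x-R_j|)^\xi\bigr)$ on $\partial B(R_j,s)$, extended to the ball by the maximum principle (starting from $\phi^\mathrm{TF}\le\sum_i\phi^\mathrm{TF}_{z_i}$ near the origin); the TF equation $\rho^\mathrm{TF}=c\,(\phi^\mathrm{TF})^{3/2}$ together with $(1+t)^{3/2}-1\le\tfrac{3}{2}t+\tfrac{3}{2}t^{3/2}$ converts this into a pointwise density bound; Newton's theorem then produces (1) and, by taking $|x|\to\infty$, (3). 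You should replace your KS detour by this direct route. Finally, your heuristic explanation of the threshold $\alpha<2/231$ as ``the product of the exponents from the Coulomb estimate, the iteration, and the absorption'' is not how the constraint arises in the paper; the relevant conditions in the proof are $(s/D_j)^\eta\le Cs^\epsilon$, $s^{1+\epsilon}\le R_\mathrm{min}/4$, and $r\ge z_\mathrm{min}^{-1/3}$, driven by the Sommerfeld exponents $\eta$ and $\xi$.
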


\begin{proof}
Let $D_j \coloneqq \min_{i \neq j}|R_i -R_j|/2$ and $\epsilon > 0$ be a small constant.
First, we note that $(s/D_j)^\eta \le Cs^{\epsilon}$, $s^{1+\epsilon} \le R_\mathrm{min}/4$ by $s\le r$, and $r \ge z_\mathrm{min}^{-1/3}$.
Using the Sommerfeld estimate~\cite[Thm.~4.1~\&~4.2]{GotorHF}, we see that for any $x \in \partial B(R_j, s)$
\begin{align*}
\phi^\mathrm{TF}(x) - \phi_{z_j}^\mathrm{TF}(x)  &\le c_\mathrm{S} s^{-4}\left( c_1\left( \frac{s}{D_j} \right)^\eta + c_2\left( \frac{s^{1+\epsilon} }{|x-R_j|}\right)^\xi \right) \\
&\eqqcolon \phi_M(x),
\end{align*}
where $c_1, c_2 >0$ are some constants.
We recall $\phi^\mathrm{TF} \le \sum_{j=1}^K \phi^\mathrm{TF}_{z_j}$~\cite[Cor.~3.6]{LiebTF}.
Hence for $\delta >0$ sufficiently small we have $\phi^\mathrm{TF} - \phi_{z_j}^\mathrm{TF} \le \phi_M$ in $\overline{B(R_j, \delta)}$.
Then the maximum principle implies that $\phi^\mathrm{TF} - \phi_{z_j}^\mathrm{TF} \le \phi_M$ in $\overline{B(R_j, s)}$.
Since $(1+t)^{3/2}-1 \le 3t/2+3t^{3/2}/2$ for $t \ge 0$, we obtain
\begin{align*}
\rho^\mathrm{TF} - \rho^\mathrm{TF}_{z_j} &= c(\phi_{z_j}^\mathrm{TF})^{3/2}\left( \left(1+ \left(\phi^\mathrm{TF} - \phi^\mathrm{TF}_{z_j} \right)/\phi^\mathrm{TF}_{z_j}\right)^{3/2} -1\right) \\
&\le C\left(\phi_{z_j}^\mathrm{TF}\right)^{1/2}\left(\phi^\mathrm{TF} - \phi^\mathrm{TF}_{z_j} \right) +C \left(\phi^\mathrm{TF} - \phi^\mathrm{TF}_{z_j} \right)^{3/2}.
\end{align*}
Using Newton's theorem, we have for $|x-R_j| = s$
\begin{align*}
\int_{|y-R_j|<s} \frac{\rho^\mathrm{TF}(y) - \rho^\mathrm{TF}_{z_j}(y)}{|x-y|} \, dy 
&\le Cs^{-4 + \epsilon},
\end{align*}
which proves (1).

Next, we split
\begin{align*}
u_j(x) &\coloneqq (\rho_0 - \rho^\mathrm{TF})\1_{B(R_j, s)} \star |x|^{-1} \\
&= \underbrace{(\rho_0 - \rho^\mathrm{TF})\1_{A_s^c} \star |x|^{-1}}_{\eqqcolon u_s(x) } 
-\underbrace{\sum_{i \ne j} (\rho_0 - \rho^\mathrm{TF})\1_{B(R_i, s)} \star |x|^{-1}}_{\eqqcolon u_0(x)}.
\end{align*}
We note that $u_j$ is harmonic on $B(R_j, s)^c$ and thus $|x-R_j| |u_j(x)| \le s\sup_{\partial B(R_j, s)} |u_j|$ for any $x \in B(R_j, s)^c$ by the maximum principle.
Hence we see that for all $j$
\[
\sup_{\partial A_s}|u_0| \le \sup_{\partial A_s} |u_s|  +\frac{s}{R_\mathrm{min} -s}  \sup_{B(R_j, s)^c} |u_j|.
\]
Then we obtain by Lemma~\ref{lem.potential}
\[
\sup_{B(R_j, s)^c}|u_j| \le C\sup_{\partial A_s}|u_s| \le Cs^{-4 +\epsilon},
\]
which shows (2).
Moreover, (3) and (4) are easy consequences of the estimates such as
\[
\lim_{|x| \to \infty}|x|\left| \int_{|y-R_j|<s} \frac{\rho^\mathrm{TF}(y) - \rho^\mathrm{TF}_{z_j}(y)}{|x-y|} \, dy\right| \le Cs^{-4 + \epsilon}.
\]
This completes the proof.
\end{proof}

Let $N_j \coloneqq z_j - \int_{B(R_j, r)} \rho_{z_j}$ and $V_j \coloneqq \1_{B(R_j, r)^c}\Phi_{j, r}$. We  note that
\[
-\Delta \Phi^\mathrm{TF}_{j, r} =4\pi( z_j\delta_j - \rho_{z_j}^\mathrm{TF} \1_{B(R_j, r)}),
\]
where $\delta_j$ is the Dirac measure at $R_j$, and thus
\begin{align*}
\frac{1}{4 \pi} \int_{\R^3} \Phi^\mathrm{TF}_{j, r} (-\Delta \Phi^\mathrm{TF}_{i, r})
&= \frac{z_i z_j}{|R_i - R_j|} - \int_{|x-R_i| < r} \frac{z_j\rho_{z_i}^\mathrm{TF}(x)}{|x-R_j|} \, dx 
-  \int_{|x-R_j| < r} \frac{z_i\rho_{z_j}^\mathrm{TF}(x)}{|x-R_i|} \, dx \\
&\quad + \iint \frac{(\1_{B(R_j, r)}\rho_{z_j}^\mathrm{TF})(x) (\1_{B(R_i, r)}\rho_{z_i}^\mathrm{TF})(y)}{|x-y|} \, dxdy \\
&= 2D(z_i\delta_i - \rho_{z_i}^\mathrm{TF} \1_{B(R_i, r)}, z_j\delta_j - \rho_{z_j}^\mathrm{TF} \1_{B(R_j, r)}) \\
&\eqqcolon \mathcal{Q}_{ij}^\mathrm{TF} .
\end{align*}

Then we can see that $D^\mathrm{TF}$ is determined by the outside TF models as follows.

\begin{lemma}
\label{lem.TF-Born}
Under the same assumptions as in Lemma~\ref{lem.outsides}, there is a constant $\epsilon_5$ such that
\[
\left|D^\mathrm{TF}(\uZ, \uR) -  \left( \E_r^\mathrm{TF}(\rho_r^\mathrm{TF}) - \sum_{j=1}^K E_{V_j}^\mathrm{TF}(N_j) \right)  \right| \le Cr^{-7 + \epsilon_5}.
\]
\end{lemma}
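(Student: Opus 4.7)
The plan is to decompose both $E^\mathrm{TF}_{V_{\uR}}(Z)$ and each atomic $E^\mathrm{TF}_{z_j/|\cdot - R_j|}(z_j)$ according to the partition $\R^3 = A_r \cup \bigsqcup_j B(R_j, r)$, isolate outer and inner contributions, and show that the outer pieces reproduce $\E_r^\mathrm{TF}(\rho_r^\mathrm{TF})$ and the $E_{V_j}^\mathrm{TF}(N_j)$ up to $O(r^{-7+\epsilon})$, while the inner pieces combined with $U_{\uR}$ and the cross-ball Coulomb coupling cancel up to the same order. Throughout, Lemma~\ref{lem.outsides} is what lets us replace molecular quantities by atomic ones within each ball.

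For the outer reduction, I would first pass from the Kohn--Sham-derived screened potentials $\Phi_r$ and $\Phi_{j, r}$ to their pure TF analogues. On $\partial A_r$ the difference $|\Phi_r - \Phi_r^\mathrm{TF}|$ is bounded by $Cr^{-4+\epsilon_1}$ by Lemma~\ref{lem.potential}, and Lemma~\ref{lem.outer} extends the bound throughout $A_r$. Combined with Lemma~\ref{step4} (which controls $D(\rho_r^\mathrm{TF} - \rho^\mathrm{TF}\1_{A_r})$ via intermediaries), this yields that $\E_r^\mathrm{TF}(\rho_r^\mathrm{TF})$ agrees with $\E^\mathrm{TF}_{\1_{A_r}\Phi_r^\mathrm{TF}}(\rho^\mathrm{TF}\1_{A_r})$ up to $O(r^{-7+\epsilon})$, noting that $\rho^\mathrm{TF}\1_{A_r}$ is a minimizer of the latter by the TF equation with $\mu = 0$ for the neutral molecule. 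An analogous atomic argument gives $E_{V_j}^\mathrm{TF}(N_j) = \E^\mathrm{TF}_{\1_{B(R_j, r)^c}\Phi_{j,r}^\mathrm{TF}}(\rho_{z_j}^\mathrm{TF}\1_{B(R_j, r)^c}) + O(r^{-7+\epsilon})$ with $N_j^\mathrm{TF} = \int_{B(R_j, r)^c}\rho_{z_j}^\mathrm{TF}$.

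Next, I would split $\rho^\mathrm{TF}$ in $\E_{V_{\uR}}^\mathrm{TF}(\rho^\mathrm{TF})$ and $\rho_{z_j}^\mathrm{TF}$ in each atomic energy, obtaining
\[
D^\mathrm{TF}(\uZ, \uR) = \bigl(\text{outer difference}\bigr) + U_{\uR} + \sum_j (\mathcal{I}^\mathrm{mol}_j - \mathcal{I}^\mathrm{TF}_j) + \sum_{i<j}2D(\rho^\mathrm{TF}\1_{B(R_i, r)}, \rho^\mathrm{TF}\1_{B(R_j, r)}),
\]
where $\mathcal{I}^\mathrm{mol}_j, \mathcal{I}^\mathrm{TF}_j$ denote molecular and atomic inner TF pieces on $B(R_j, r)$. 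Using Lemma~\ref{lem.outsides} together with the TF equation to replace $\rho^\mathrm{TF}\1_{B(R_j, r)}$ by $\rho_{z_j}^\mathrm{TF}\1_{B(R_j, r)}$ with error $O(r^{-7+\epsilon})$ in every term, the only genuine surviving contribution to $\sum_j(\mathcal{I}^\mathrm{mol}_j - \mathcal{I}^\mathrm{TF}_j)$ is the cross-attraction $-\sum_{i\neq j}\int_{B(R_j, r)}(z_i/|\cdot - R_i|)\rho_{z_j}^\mathrm{TF}$. By Newton's theorem applied to the radial $\rho_{z_j}^\mathrm{TF}$, this equals $-\sum_{i\neq j}z_i(z_j - N_j^\mathrm{TF})/|R_i - R_j|$, and the cross-ball coupling reduces to $\sum_{i<j}(z_i - N_i^\mathrm{TF})(z_j - N_j^\mathrm{TF})/|R_i - R_j|$. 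A direct algebraic rearrangement with $U_{\uR} = \sum_{i<j}z_iz_j/|R_i - R_j|$ collapses these into the pure residue $\sum_{i<j}\mathcal{Q}^\mathrm{TF}_{ij} = \sum_{i<j}N_i^\mathrm{TF}N_j^\mathrm{TF}/|R_i - R_j|$. The Sommerfeld bound $N_j^\mathrm{TF} \le Cr^{-3}$ combined with the assumption $R_\mathrm{min} \ge \delta_3^{-1}z_\mathrm{min}^{-1/3+\alpha}$ and $r = \delta_3 R_\mathrm{min}^{1+\epsilon_3}$ yields $R_\mathrm{min}^{-1} \le Cr^{-1/(1+\epsilon_3)}$, so the residue is bounded by $Cr^{-7+\epsilon_5}$ for a suitable $\epsilon_5 > 0$.

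The main obstacle is the inner density comparison: Lemma~\ref{lem.outsides} directly supplies only screened-potential and total-charge bounds, whereas $\mathcal{I}^\mathrm{mol}_j$ contains the kinetic term $\int_{B(R_j, r)}(\rho^\mathrm{TF})^{5/3}$ and the self-Coulomb $D(\rho^\mathrm{TF}\1_{B(R_j, r)})$, both of which individually scale like $z_j^{7/3}$ and require true cancellation. To obtain this, I would invoke the TF equation $(3\pi^2)^{2/3}(\rho^\mathrm{TF})^{2/3}/2 = [\phi^\mathrm{TF}]_+$ to rewrite the kinetic contribution as $(3/5)\int_{B(R_j,r)}\phi^\mathrm{TF}\rho^\mathrm{TF}$, reducing the differences to weighted integrals of $\phi^\mathrm{TF} - \phi_{z_j}^\mathrm{TF}$ which are pointwise controlled by the barrier $\phi_M$ established in the proof of Lemma~\ref{lem.outsides}. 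The resulting integrals are bounded by $Cr^{-4+\epsilon}\cdot z_j$, and the constraint $z_j \le \delta_0^{-1}z_\mathrm{min} \lesssim r^{-3+\text{small}}$ inherited from the relation between $r$ and $z_\mathrm{min}$ closes the bound at $O(r^{-7+\epsilon})$.
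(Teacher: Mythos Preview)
Your outer reduction and your treatment of the residual coupling $\mathcal{Q}^\mathrm{TF}_{ij}$ are fine; in fact, your observation that Newton's theorem applied to the radial atomic densities gives $\mathcal{Q}^\mathrm{TF}_{ij}=N_i^\mathrm{TF}N_j^\mathrm{TF}/|R_i-R_j|$ directly is cleaner than the paper's Poisson-kernel computation of the same quantity.

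The gap is in the inner comparison. Your closing step asserts $z_j\le\delta_0^{-1}z_\mathrm{min}\lesssim r^{-3+\text{small}}$, but the hypothesis $R_\mathrm{min}\ge\delta_3^{-1}z_\mathrm{min}^{-1/3+\alpha}$ gives only a \emph{lower} bound $z_\mathrm{min}\gtrsim R_\mathrm{min}^{-1/(1/3-\alpha)}$; there is no upper bound on the charges in terms of $r$. Hence a bound of the form $Cr^{-4+\epsilon}\cdot z_j$ for the difference $\mathcal{I}^\mathrm{mol}_j-\mathcal{I}^\mathrm{TF}_j$ cannot be turned into $O(r^{-7+\epsilon})$: as $z_j\to\infty$ with $R_\mathrm{min}$ fixed (which the lemma must allow, since its role is precisely to feed into the $\limsup_{z_\mathrm{min}\to\infty}$ statement in Theorem~\ref{thm.main}), your error term blows up. The crude factorization $\int_{B(R_j,r)}(\phi^\mathrm{TF}-\phi_{z_j}^\mathrm{TF})\rho^\mathrm{TF}\le(\sup|\phi^\mathrm{TF}-\phi_{z_j}^\mathrm{TF}|)\cdot z_j$ is too lossy, and in any case $\phi^\mathrm{TF}-\phi_{z_j}^\mathrm{TF}$ is not uniformly $O(r^{-4+\epsilon})$ throughout the ball (near $R_j$ it picks up the cross-nuclear attraction $\sum_{i\ne j}z_i/|x-R_i|\sim z_\mathrm{max}/R_\mathrm{min}$).

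The paper avoids this entirely by a variational matching trick that makes the inner pieces cancel \emph{exactly}. For the lower bound on $D^\mathrm{TF}$ it evaluates the molecular functional at the true minimizer $\rho^\mathrm{TF}$ but bounds each atomic energy from above using the trial density $\rho^\mathrm{TF}\1_{B(R_j,r)}+\rho_r^{(j)}$, so both sides carry the identical inner block $\E^\mathrm{TF}_{z_j/|x-R_j|}(\rho^\mathrm{TF}\1_{B(R_j,r)})$. For the upper bound the roles are reversed: the molecular energy is tested with $\sum_j\rho_{z_j}^\mathrm{TF}\1_{B(R_j,r)}+\rho_r^\mathrm{TF}$ while each atom uses its own minimizer $\rho_{z_j}^\mathrm{TF}$, again producing matching inner blocks. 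No pointwise comparison of $\rho^\mathrm{TF}$ with $\rho_{z_j}^\mathrm{TF}$ inside the balls is ever needed, so the large-$z$ terms never have to be estimated. The only errors that survive are outer-zone quantities and the cross terms governed by Lemma~\ref{lem.outsides} and Lemma~\ref{lem.potential}, which are genuinely $O(r^{-7+\epsilon})$ uniformly in $\uZ$.
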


\begin{proof}
\underline{\it Lower bound}.
 Let $\rho_r^{(j)}$ be a minimizer for the TF problem $E^\mathrm{TF}_{V_j}(N_j)$.
We note that for any $\rho$
\begin{equation}
\begin{split}
\label{eq.TF1}
\E^\mathrm{TF}_{V_{\uR}}(\rho) &= \sum_{j=1}^K \E^\mathrm{TF}_{z_j|x-R_j|^{-1}} (\1_{B(R_j, r)} \rho) +  \E^\mathrm{TF}_{r}(\1_{A_r}\rho) \\
&\quad +\int_{A_r} \rho(x)( \rho-\rho_0 )\1_{A_r^c}\star |x|^{-1} \, dx + \sum_{i<j}2D(\rho\1_{B(R_i, r)}, \rho \1_{B(R_j, r)^c}) \\
&\quad - \sum_{i\neq j}  \int_{|x-R_j|<r} z_i|x-R_i|^{-1} \rho(x) \, dx.
\end{split}
\end{equation}
and
\begin{equation}
\begin{split}
\label{eq.TF2}
\E_{z_j/|x-R_j|}^\mathrm{TF}(\rho) &= \E_{z_j/|x-R_j|}^\mathrm{TF}(\rho\1_{B(R_j, r)}) + \E^\mathrm{TF}_{V_j}(\rho\1_{B(R_j, r)^c}) \\
&\quad +2D(\rho\1_{B(R_j, r)^c}, (\rho - \rho_{z_j})\1_{B(R_j, r)}).
\end{split}
\end{equation}

We use (\ref{eq.TF1}) with $\rho = \rho^\mathrm{TF}$ and insert $\rho = \rho^\mathrm{TF} \1_{B(R_j, r)} + \rho_r^{(j)}$ into (\ref{eq.TF2}).
Then since Lemma~\ref{lem.potential} and Lemma~\ref{lem.outsides} we see
\begin{align*}
D^\mathrm{TF}(\uZ, \uR)&\ge \E_{r}^\mathrm{TF}(\rho_r^\mathrm{TF}) -\sum_{j=1}^K E_{V_j}^\mathrm{TF}(N_j) +  \sum_{i<j}\mathcal{Q}_{ij}^\mathrm{TF} -Cr^{-7 + \epsilon_5}.
\end{align*}

\underline{\it Upper bound}.
Inserting $\rho = \sum_{j=1}^K \rho_{z_j}^\mathrm{TF} \1_{B(R_j, r)} + \rho_r^\mathrm{TF}$ into (\ref{eq.TF1}) and $\rho = \rho_{z_j}^\mathrm{TF}$ in (\ref{eq.TF2}), we have
\begin{align*}
D^\mathrm{TF}(\uZ, \uR) \le \E_{r}^\mathrm{TF}(\rho_r^\mathrm{TF}) -\sum_{j=1}^K E_{V_j}^\mathrm{TF}(N_j) + \sum_{i<j} \mathcal{Q}_{ij}^\mathrm{TF} +Cr^{-7 + \epsilon_5},
\end{align*}
where we have used Lemma~\ref{lem.potential} and Lemma~\ref{lem.outsides}.

Finally, we will show that
\begin{equation}
\label{eq.error}
\left|\mathcal{Q}_{ij}^\mathrm{TF}\right| \le Cr^{-7 + \epsilon_5}.
\end{equation}

Let $\Omega_j$ be a set satisfying $B(R_j, R_\mathrm{min}/4) \subset \Omega_j$ and $\Omega_j \subset B(R_i, R_\mathrm{min}/4)^c$ for $i \neq j$.
Now we pick a smooth function $\chi \in  C_c^\infty(\overline{B(R_i, R_\mathrm{min}/4)^c})$ with $\chi =1$ in $\Omega_j$.
Then, since integration by parts (or equivalently, Green's theorem), we have
\begin{align*}
\mathcal{Q}_{ij}^\mathrm{TF}=\int_{\Omega_j} \Phi^\mathrm{TF}_{j, r} (-\Delta \chi\Phi^\mathrm{TF}_{i})
= \int_{\partial \Omega_j} \left(\Phi_{j, r}^\mathrm{TF} \hat n_j \cdot \nabla \Phi_{i, r}^\mathrm{TF} - \Phi_{i, r}^\mathrm{TF} \hat n_j \cdot \nabla\Phi_{j, r}^\mathrm{TF} \right), \\
\end{align*}
where $\hat n_j$ is the outward normal to $\partial \Omega_j$.
We introduce the Poisson kernel $p_r(x, \xi)$ by
\[
p_r(x, \xi) \coloneqq \frac{1}{4\pi r} \frac{|x|^2-r^2}{|x-\xi|^3}.
\]
By harmonicity (see, e.g.,~\cite[Prob.~3.11]{Simon3}), it holds that for $|x-R_i|>r$
\[
\Phi_{i, r}^\mathrm{TF} (x) = \int_{\partial B(R_i, r)} p_r(x-R_i, \xi -R_i) \Phi_{i, r}^\mathrm{TF} (\xi)\,  d\omega(\xi).
\]
By direct computation, we see that
\[
\nabla_x p_r(x, \xi) = p_r(x, \xi) \left(\frac{3(x-\xi)}{|x-\xi|^2} - \frac{2x}{|x|^2 - r^2} \right),
\]
and therefore, in $B(R_j, r)^c$,
\begin{align*}
\left| \nabla\Phi_{i, r}^\mathrm{TF}(x) \right| &\le \frac{2|x-R_i| \left|\Phi_{i, r}^\mathrm{TF}(x)\right|}{|x-R_i|^2 -r^2} + \sup_{\partial B(R_i, r)}\left|\Phi_{i, r}^\mathrm{TF}\right| \int_{\partial B(R_i, r)} \frac{3p_r (x-R_i, \xi - R_i)}{|x-\xi|} \, d\omega(\xi) \\
&\le \frac{Cr}{R_\mathrm{min}^2} \sup_{\partial B(R_i, r)} \left|\Phi_{i, r}^\mathrm{TF}\right| ,
\end{align*}
where we have used $|x-R_i| |\Phi_{i, r}^\mathrm{TF}(x)| \le r\sup_{\partial B(R_i, r)}\left|\Phi_{i, r}^\mathrm{TF}\right|$ for any $|x-R_i| \ge  r$ (see~\cite[Lem.~6.5]{TFDWIC}) and a simple estimate, followed by  $|x-\xi| \ge |x-R_i| - r$ on $|x-R_i| = \xi$,
\[
\int_{\partial B(R_i, r)} \frac{p_r (x-R_i, \xi - R_i)}{|x-\xi|} \, d\omega(\xi) \le  \frac{Cr}{R_\mathrm{min}^2}.
\]
Consequently, we obtain
\begin{align*}
|\mathcal Q^\mathrm{TF}_{ij}|&\le Cr \sup_{\partial B(R_i, r)} \left|\Phi_{i, r}^\mathrm{TF}\right|  \sup_{\partial B(R_j, r)} \left|\Phi_{j, r}^\mathrm{TF}\right| \\
&\le Cr^{-7+\epsilon_6},
\end{align*}
which shows (\ref{eq.error}).
This finishes the proof.
\end{proof}

As in the TF case, we define
\[
\mathcal{Q}_{ij} \coloneqq 2D(z_i\delta_i - \rho_0 (\eta_-^{(i)})^2, z_j\delta_j - \rho_0  (\eta_-^{(j)})^2).
\]

\begin{lemma}
\label{lem.LDA-Born}
Under the same assumptions as in Lemma~\ref{lem.outsides}, there exists $\epsilon_6>0$ such that
\[
\left|D(\uZ, \uR) -  \left( \E_r^\mathrm{TF}(\rho_r^\mathrm{TF}) - \sum_{j=1}^K E_{V_j}^\mathrm{TF}(N_j) \right)  \right| \le Cr^{-7 + \epsilon_6}.
\]
\end{lemma}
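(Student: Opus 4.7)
The argument parallels Lemma~\ref{lem.TF-Born}, with the Kohn--Sham splitting Lemma~\ref{lem.rhf} together with the TF/Kohn--Sham comparison Lemma~\ref{step4} in place of the direct TF decomposition identities. The claim is equivalent to
\[
E_{V_{\uR}}(Z) = \sum_{j=1}^K E_{z_j/|x-R_j|}(z_j) - U_{\uR} + \E_r^\mathrm{TF}(\rho_r^\mathrm{TF}) - \sum_{j=1}^K E_{V_j}^\mathrm{TF}(N_j) + \mathcal{O}(r^{-7+\epsilon_6}),
\]
and both sides admit an interior/exterior decomposition. The molecular exterior Kohn--Sham energy is converted into $\E_r^\mathrm{TF}(\rho_r^\mathrm{TF})$ via Lemma~\ref{step4}, while its atomic analog (obtained, as noted in the Conventions, by replacing $R_\mathrm{min}/4$ by $1$) identifies the atomic exterior Kohn--Sham energies with $E_{V_j}^\mathrm{TF}(N_j)$. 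Inter-nuclear Coulomb interactions combined with $-U_{\uR}$ assemble into $\sum_{i<j}\mathcal{Q}_{ij}$, controlled by $Cr^{-7+\epsilon}$ via the Poisson kernel/harmonicity argument of Lemma~\ref{lem.TF-Born}.

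\textbf{Upper bound.} Let $\gamma_0^{(j)}$ be a minimizer for $E_{z_j/|x-R_j|}(z_j)$, and let $\widetilde\gamma$ be the coherent-state trial built from $\rho_r^\mathrm{TF}$ exactly as in the upper bound of Lemma~\ref{step4}. Take the trial $\gamma_\mathrm{tr} := \sum_j \eta_-^{(j)}\gamma_0^{(j)}\eta_-^{(j)} + \widetilde\gamma$. Since $r \le R_\mathrm{min}/4$ the pieces have pairwise disjoint supports, so the kinetic energy, the local $E_\mathrm{xc}^\mathrm{LDA}$ term (using $g(0)=0$), and the one-body potential split cleanly, while $D(\rho_{\gamma_\mathrm{tr}})$ generates inter-nuclear Coulomb cross terms. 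Using the atomic analog of Lemma~\ref{lem.rhf} plus Lemma~\ref{step4} to obtain
\[
\E_{z_j/|x-R_j|}(\eta_-^{(j)}\gamma_0^{(j)}\eta_-^{(j)}) \le E_{z_j/|x-R_j|}(z_j) - E_{V_j}^\mathrm{TF}(N_j) + Cr^{-7+\epsilon},
\]
and using Lemma~\ref{lem.outsides} to replace the atomic interior Coulomb interactions with those produced by $\rho_0$ in the $\mathcal{Q}_{ij}$, yields the required upper bound.

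\textbf{Lower bound.} Apply Lemma~\ref{lem.rhf} with $\gamma=0$ to get $E_{V_{\uR}}(Z) \ge \E_{V_{\uR}}(\eta_-\gamma_0\eta_-) + \E_r^\mathrm{rHF}(\eta_r\gamma_0\eta_r) - \mathcal{R}$, and bound the exterior by $\E_r^\mathrm{TF}(\rho_r^\mathrm{TF})$ via Lemma~\ref{step4}. Since the $\eta_-^{(j)}$ have disjoint supports, the interior decomposes as $\sum_j \E_{z_j/|x-R_j|}(\eta_-^{(j)}\gamma_0\eta_-^{(j)})$ plus inter-nuclear Coulomb and external-potential cross terms. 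For each $j$, introduce an exterior coherent-state trial $\widetilde\gamma^{(j)}$ centered at $R_j$ (the atomic analog of $\widetilde\gamma$) approximating $E_{V_j}^\mathrm{TF}(N_j)$, and use the variational inequality plus disjoint-support splitting for the atomic problem to obtain
\[
E_{z_j/|x-R_j|}(z_j) \le \E_{z_j/|x-R_j|}(\eta_-^{(j)}\gamma_0\eta_-^{(j)}) + E_{V_j}^\mathrm{TF}(N_j) + Cr^{-7+\epsilon};
\]
monotonicity of $N \mapsto E_V(N)$ is invoked here, which requires $\int(\eta_-^{(j)})^2\rho_0 + \tr\widetilde\gamma^{(j)} \le z_j$ and is ensured by Lemma~\ref{lem.outsides}(4). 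The cross terms assemble into the $\mathcal{Q}_{ij}$ exactly as in the upper bound.

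\textbf{Main obstacle.} The delicate point is controlling all error contributions within the $r^{-7+\epsilon_6}$ budget, since the molecular density $\rho_0$ and the atomic density $\rho_{z_j}$ both appear in $B(R_j,r)$ and differ in general. These differences are bounded by comparing each to the respective TF density via Lemma~\ref{lem.outsides} and the Coulomb estimate of Lemma~\ref{thm.initial}. The non-convex local term $E_\mathrm{xc}^\mathrm{LDA}$ is handled as in the proofs of Lemma~\ref{lem.rhf} and Lemma~\ref{step4}: the growth $\beta_\pm < 2/5$ from (\ref{def.condition}) together with the kinetic bound (\ref{ite5}) and Lieb--Thirring controls $|E_\mathrm{xc}^\mathrm{LDA}(\rho_0) - \sum_{\#\in\{+,-,r\}}E_\mathrm{xc}^\mathrm{LDA}(\eta_\#^2\rho_0)| \le Cr^{-7+\epsilon}$, using the pointwise estimate $g(\rho)-g(\eta^2\rho) \le C(\rho^{\beta_-}+\rho^{\beta_+})(1-\eta^2)\rho$ combined with (\ref{ite4}).
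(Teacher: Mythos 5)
Your proposal follows essentially the same route as the paper: Lemma~\ref{lem.rhf} for the interior/exterior split, Lemma~\ref{step4} to convert the exterior rHF energy into $\E_r^\mathrm{TF}(\rho_r^\mathrm{TF})$, the variational principle together with monotonicity of $N\mapsto E_V(N)$ (and the Sommerfeld bound plus Lemma~\ref{lem.outsides} for the trace constraint) for the atomic pieces, and the Poisson-kernel argument of Lemma~\ref{lem.TF-Born} for $|\mathcal{Q}_{ij}|\le Cr^{-7+\epsilon}$. The only deviation is the choice of exterior trial state ($\widetilde\gamma$ resp.\ $\widetilde\gamma^{(j)}$ in your version versus $\eta_r\gamma_0\eta_r$ resp.\ $\eta_r^{(j)}\gamma_{z_j}\eta_r^{(j)}$ in the paper), which is cosmetic and interchangeable since both are bounded by the exterior TF energy through the same Lemma~\ref{step4}.
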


\begin{proof}
\underline{\it Lower bound}.
We recall Lemma~\ref{lem.rhf}.
By construction, $\eta_-^{(j)} \coloneqq \1_{B(R_j, r)} \eta_-$ is smooth for all $j=1, \dots, K$, and thus we have
\begin{align*}
\E(\eta_- \gamma_0 \eta_-) &= \sum_{j=1}^K \E(\eta_-^{(j)} \gamma_0 \eta_-^{(j)})+ \sum_{i < j} 2D((\eta_-^{(j)})^2 \rho_0, (\eta_-^{(i)})^2\rho_0 ).
\end{align*}
We note from Lemma~\ref{lem.potential} that inequalities (\ref{ite1})--(\ref{ite5}) hold true.
Applying Lemma~\ref{lem.rhf} and Lemma~\ref{step4}, we see 
\begin{align*}
 \E(\gamma_0)  &\ge \E(\eta_- \gamma_0 \eta_-) + \E_r^\mathrm{rHF}(\eta_r \gamma_0 \eta_r) - \mathcal{R} \\
 &\quad \ge \E_r^\mathrm{TF} (\rho_r^\mathrm{TF}) +\sum_{j=1}^K \E_{z_j/|x-R_j|}(\eta_-^{(j)} \gamma_0 \eta_-^{(j)}) + \sum_{i < j} 2D((\eta_-^{(j)})^2 \rho_0, (\eta_-^{(i)})^2\rho_0 ) \\
 &\quad-\sum_{i \neq j}\int z_j|x-R_j|^{-1}(\eta_-^{(i)} )^2 \rho_0  - C\lambda^{-2}r^{-5}  -Cr^{-7+1/3}.
\end{align*}
We note that $\tr (\eta_-^{(j)} \gamma_0 \eta_-^{(j)}) < z_j$ for all $j =1 ,\dots, K$.
To see this, we use the atomic Sommerfeld bound~\cite[Thm.~5.4]{SolovejIC}, namely,  there is a constant $C >0$ such that
\[
\int_{|x-R_j| > r} \rho^\mathrm{TF}_{z_j}(x) \, dx \ge C^{-1}r^{-3}.
\]
Combining this with Lemma~\ref{lem.outsides}, we see that 
\begin{align*}
z_j - \tr (\eta_-^{(j)} \gamma_0 \eta_-^{(j)}) &\ge \int_{|x-R_j| > r} \rho^\mathrm{TF}_{z_j}(x) \, dx+ \int_{|x-R_j| < r}\left( \rho^\mathrm{TF}_{z_j}(x) - \rho_0(x) \right) \, dx \\
& \ge C^{-1}r^{-3} -Cr^{-3+ \epsilon_4}\\
&>0.
\end{align*}
Then as in the case of the molecules, we have 
\begin{align*}
E_{z_j/|x-R_j|}(z_j) &\le \E_{z_j/|x-R_j|}\left(\eta_-^{(j)} \gamma_0 \eta_-^{(j)}+ \eta_r^{(j)} \gamma_{z_j} \eta_r^{(j)} \right)\\
&\le  \E_{z_j/|x-R_j|}\left(\eta_-^{(j)} \gamma_0 \eta_-^{(j)} \right) + \E_{V_j}^\mathrm{rHF}\left(\eta_r^{(j)} \gamma_{z_j} \eta_r^{(j)} \right) +Cr^{-7+\epsilon_6},
\end{align*}
where we have used  Lemma~\ref{lem.outsides} in the last inequality.
Using Lemma~\ref{step4}, we see
\[
 \E_{V_j}^\mathrm{rHF}\left(\eta_r^{(j)} \gamma_{z_j} \eta_r^{(j)} \right) \le \E_{V_j}^\mathrm{TF}(\rho_r^{(j)}) + Cr^{-7+1/3}.
\]

Then we obtain
\begin{equation}
\begin{split}
 \E(\gamma_0)  +U_{\uR} 
  &\ge  \E_r^\mathrm{TF} (\rho_r^\mathrm{TF}) +\sum_{j=1}^K \left(E_{z_j/|x-R_j|}(z_j)  - \E_{V_j}^\mathrm{TF}(\rho_r^{(j)}) \right) \\
 &\quad + \sum_{i<j}\mathcal{Q}_{ij}  -Cr^{-7+1/3},
 \end{split}
\end{equation}
which shows the lower bound.

\underline{{\it Upper bound.}}
Since Lemma~\ref{lem.rhf} and  Lemma~\ref{step4}--\ref{lem.outsides}, we see
\begin{align*}
E_{V_{\uR}} (Z) + U_{\uR}&\le \E_{V_{\uR}}\left(\sum_{j=1}^K \eta_-^{(j)} \gamma_{z_j} \eta_-^{(j)} + \eta_r \gamma_0 \eta_r \right) + U_{\uR} \\
&\le \sum_{j=1}^K \E_{z_j/|x-R_j|}\left(\eta_-^{(j)} \gamma_{z_j} \eta_-^{(j)} \right) + \E_{r} ( \eta_r \gamma_0 \eta_r ) + \sum_{i<j} \mathcal{Q}_{ij} + Cr^{-7+\epsilon_6} \\
&\le \sum_{j=1}^K \left(E_{z_j/|x-R_j|}(z_j) -E_{V_j}^\mathrm{TF}(N_j) \right) + \E_{r}^\mathrm{TF}(\rho_r^\mathrm{TF}) + \sum_{i<j} \mathcal{Q}_{ij} +  Cr^{-7+\epsilon_6}.
\end{align*}
By copying the proof of (\ref{eq.error}), we can see $\left|\mathcal{Q}_{ij}\right| \le Cr^{-7 + \epsilon_6}$.
Then the proof is complete.
\end{proof}

\begin{proof}[Proof of Theorem~\ref{thm.main}]
First, we assume that $4 \ge R_\mathrm{min} \ge \delta_3^{-1}z_\mathrm{min}^{-1/3+\alpha}$ and $r = \delta_3 R_\mathrm{min}^{1+\epsilon_3}$ as in Lemma~\ref{lem.outsides}.
Combining Lemma~\ref{lem.LDA-Born} and Lemma~\ref{lem.TF-Born}, we have the desired conclusion in this case.
Moreover, we obtain (\ref{thm.BO-curve}).

Next, we consider the case $R_\mathrm{min} \le Z^{-1/3}$.
By (\ref{eq.semicl}) we know $|E_{V_{\uR}} (Z) - E_{V_{\uR}}^\mathrm{TF} (Z)| \le CZ^{7/3 -2/33}$ and $|E_{z_j/|x-R_j|} (z_j) - E_{z_j/|x-R_j|}^\mathrm{TF} (z_j)| \le Cz_j^{7/3 -2/33}$  for all $j =1, \dots, K$.
Let $C_{\uZ} \coloneqq z_\mathrm{max} / z_\mathrm{min}$.
Then it follows that there is a $\epsilon>0$ so that
\begin{align*}
\left| D(\uZ, \uR) - D^\mathrm{TF}(\uZ, \uR) \right| 
&\le C\left( 1+ C_{\uZ}^{7/3 -2/33} \right) R_\mathrm{min}^{-7+\epsilon},
\end{align*}
which shows the conclusion.

Similarly, we deduce from $z_\mathrm{min}^{-1} = C_{\uZ} z_\mathrm{max}^{-1}$ that the desired result  for $Z^{-1/3} \le R_\mathrm{min} \le \delta_3^{-1}z_\mathrm{min}^{-1/3+\alpha}$.
\end{proof}

\begin{proof}[Proof of Corollary~\ref{cor.main}]
Let $E_\mathrm{mol}(\uZ) \coloneqq \inf_{\uR} (E_{V_{\uR}}(Z) + U_{\uR}) $ be the Born-Oppenheimer ground state energy in Kohn-Sham theory.
The following lemma is an elementary property of this energy.

\begin{lemma}
\label{lem.binding}
For any configurations $\uZ_1 = (z_{\pi(1)}, \dots, z_{\pi(p)})$ and $\uZ_2 = (z_{\pi(p+1)}, \dots, z_{\pi(K)})$ with $1 \le p \le K-1$ and $\pi$ permutation of $\{1, \dots, K\}$, we have
\[
E_\mathrm{mol}(\uZ) \le E_\mathrm{mol}(\uZ_1) + E_\mathrm{mol}(\uZ_2).
\]
\end{lemma}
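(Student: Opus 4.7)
The plan is to pull the two clusters infinitely far apart and exploit that the interaction between two neutral, well-separated charge distributions vanishes. Fix $\eta > 0$; using that $E_\mathrm{mol}$ is defined as an infimum together with the existence of Kohn-Sham minimizers when the electron number equals the total nuclear charge (stated in the introduction), I pick for $i = 1, 2$ a configuration $\uR_i^\eta$ and a density matrix $\gamma_i \in \mathcal{DM}$ with $\tr \gamma_i = Z_i$ (the total nuclear charge of the $i$-th cluster) so that
\[
\E_{V_{\uR_i^\eta}}(\gamma_i) + U_{\uR_i^\eta} \le E_\mathrm{mol}(\uZ_i) + \eta.
\]
Using the decay of the bound-state eigenfunctions entering the spectral decomposition of $\gamma_i$ (obtainable from the a-priori bounds of Proposition~\ref{prop.apriori} via Combes-Thomas-type arguments), I would then replace each $\gamma_i$ by a truncated density matrix whose density is supported in some ball $B(0, M)$ with $M = M(\eta)$, paying at most an additional $\eta$ in the energy.

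Next, for a unit vector $\tb \in \R^3$ and $t$ large, I would take as trial configuration $\uR_t \coloneqq (\uR_1^\eta, \uR_2^\eta + t\tb)$ and as trial density matrix $\Gamma_t \coloneqq \gamma_1 + U_t \gamma_2 U_t^*$, where $U_t$ is translation by $t\tb$ on $L^2(\R^3)$. Once $t$ is so large that $\mathrm{supp}\,\rho_{\gamma_1}$ and $\mathrm{supp}\,\rho_{U_t \gamma_2 U_t^*}$ are disjoint, the self-adjoint operators $\gamma_1$ and $U_t \gamma_2 U_t^*$ have orthogonal ranges, so $0 \le \Gamma_t \le 1$ and hence $\Gamma_t \in \mathcal{DM}$ with $\tr \Gamma_t = Z$. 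Translation invariance of the kinetic term, together with the identity $g(\rho_1 + \rho_2) = g(\rho_1) + g(\rho_2)$ (valid pointwise when $\rho_1, \rho_2$ have disjoint supports, since $g(0) = 0$), then gives
\[
\E_{V_{\uR_t}}(\Gamma_t) + U_{\uR_t} = \sum_{i = 1, 2} \bigl(\E_{V_{\uR_i^\eta}}(\gamma_i) + U_{\uR_i^\eta}\bigr) + \mathcal{I}(t),
\]
where $\mathcal{I}(t)$ is the Coulomb interaction between the two cluster charge distributions $\mu_i \coloneqq \sum_k z_k \delta_{R_k^{(i)}} - \rho_{\gamma_i}$, the second translated by $t\tb$.

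The final step is to notice that both $\mu_1$ and $\mu_2$ carry zero total charge, because $\int \rho_{\gamma_i} = Z_i$ equals the total nuclear charge of the $i$-th cluster. A multipole expansion of $|x - y - t\tb|^{-1}$ around $t\tb$ then yields $\mathcal{I}(t) = O(t^{-3})$ as $t \to \infty$, the monopole-monopole and monopole-dipole contributions vanishing identically. Hence $E_\mathrm{mol}(\uZ) \le E_\mathrm{mol}(\uZ_1) + E_\mathrm{mol}(\uZ_2) + 4\eta$ for $t$ large enough, and letting $\eta \to 0$ finishes the proof. The only mildly delicate point is the compact-support approximation in the first paragraph, which is needed both to guarantee admissibility $\Gamma_t \in \mathcal{DM}$ (through orthogonality of the ranges) and to make the nonlinear LDA exchange-correlation term split additively; everything else is a routine separation computation.
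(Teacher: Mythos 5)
Your proof is correct and follows the same basic strategy as the paper: approximate each cluster energy from above by a trial state whose density has compact support, translate the second cluster far away, observe that disjointness of supports makes the kinetic and LDA terms split additively (using $g(0)=0$) and makes $\Gamma_t$ admissible in $\mathcal{DM}$, and show the residual Coulomb cross-interaction vanishes as the separation grows. Two of your intermediate steps are considerably heavier than necessary, though: the paper simply chooses compactly supported near-minimizers outright (a routine truncation in the infimum, no Combes--Thomas decay argument required), and it bounds the Coulomb cross term crudely by $Z^2/(n-2r) \to 0$ without ever invoking neutrality of the clusters or a multipole expansion to improve this to $O(t^{-3})$ -- any decay to zero suffices here, so the extra precision buys nothing.
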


\begin{proof}[Proof of Lemma~\ref{lem.binding}]
Let $\epsilon >0$.
We can take $\gamma_n^{(i)}$ and  $\uR_n^{(i)}$ such that $\tr \gamma_n^{(i)} = |\uZ_i|$, each $\mathrm{supp} \, \rho_{\gamma_n^{(i)}}$ is in a ball of radius $r>0$, and 
\[\E_{V_{\uR_n}^{(i)}}(\gamma_n^{(i)}) +U_{\uR_n^{(i)}} \le E_\mathrm{mol}(\uZ_i) + 1/n.
\] 
For $r_n \in \R^3$ we define $\gamma_n^{(3)} \coloneqq \tau_{-r_n} \gamma_n^{(2)} \tau_{r_n}$ with $\tau$ being the translation operator, and $\gamma_n \coloneqq \gamma_n^{(1)} + \gamma_n^{(3)}$.
Then we see that $0 \le \gamma_n \le 1$, $\tr \gamma_n =Z$, and $\mathrm{supp} \, \rho_{\gamma_n^{(1)}} \cap \mathrm{supp} \, \rho_{\gamma_n^{(3)}} = \emptyset$ for large $|r_n|>0$.
Let $\uR_n \coloneqq (R_n^{(\pi(1))}, \dots, R_n^{(\pi(p))}, R_n^{(\pi(p+1))} + r_n, R_n^{(\pi(p+2))} +r_n, \dots, R_n^{(\pi(K))} + r_n)$ with $|r_n| > n$.
By simple computation, we have
\[
2D( \rho_{\gamma_n^{(1)}}, \rho_{\gamma_n^{(3)}}) \le \frac{Z^2}{n-2r},
\]
and hence
\begin{align*}
E_\mathrm{mol}(\uZ) &\le \E_{V_{\uR_n}}(\gamma_n) + U_{\uR_n} \\
&\le  E_\mathrm{mol}(\uZ_1) + E_\mathrm{mol}(\uZ_2)  + \epsilon
\end{align*}
for large $n$.
\end{proof}

Now we assume that there exists $\uRo$ such that $E_\mathrm{mol}(\uZ) = E_{V_{\uRo}}(Z) + U_{\uRo}$.
Let $R_\mathrm{M} \coloneqq \min_{i \neq j}|R_0^{(i)} - R_0^{(j)}|$.
With Lemma~\ref{thm.initial} and Lemma~\ref{lem.binding}, it follows that
\[
0 \ge E_{V_{\uRo}}(Z)  + U_{\uRo} \ge -C_3Z^{7/3} +\frac{Z^2}{C_3R_\mathrm{M}},
\]
and thus $R_\mathrm{M} \ge C_3^{-2}Z^{-1/3}$.
Then we have $D^\mathrm{TF}(\uZ, \uRo) \ge C_4R_\mathrm{M}^{-7}$ (see the proof of~\cite[Thm.~8]{RuskaiSolovej}).
Without loss of generality we can assume $z_\mathrm{min} \ge 1$.
Using Theorem~\ref{thm.main} and Lemma~\ref{lem.binding}, we have
\[
0 \ge D(\uZ, \uRo) \ge C_5^{-1}R_\mathrm{M}^{-7}- C_5R_\mathrm{M}^{-7+\epsilon}.
\]
This completes the proof.
\end{proof}
\section*{Acknowledgments}
The author wishes to express her thanks to Tetsuo Hatsuda, Tomoya Naito, Shu Nakamura, and Takeru Yokota for helpful comments concerning the introduction.
She also thanks the anonymous referees for pointing out many errors and for helpful suggestions that improved the paper.

\end{document}